\newcommand{\R}{\mathbb{R}}
\newcommand{\N}{\mathbb{N}}
\newcommand{\bC}{\mathbb{C}}
\newcommand{\Z}{\mathbb{Z}}
\newcommand{\B}{\mathcal{B}}
\newcommand{\cB}{\mathcal{B}}
\newcommand{\cF}{\mathcal{F}}
\newcommand{\cG}{\mathcal{G}}
\newcommand{\dr}[1]{\mathrm{d}{#1}\, }
 \newcommand{\Span}{\operatorname{span}}
\newcommand{\ga}{\gamma}
\newcommand{\be}{\beta}
\newcommand{\de}{\delta}
\newcommand{\dt}{\tilde\delta}
\newcommand{\ho}{h_\Omega}
\newcommand{\dz}{\delta_0} 
\newcommand{\dc}{\delta_c}
\newcommand{\pt}{\tilde p}
\newcommand{\rt}{\tilde r}
\newcommand{\ps}{{p^*}}
\newcommand{\qs}{{q^*}}
\newcommand{\om}{\Omega}
\newcommand{\obd}{\Omega^b_{\ha\sqrt{d}\dz}}
\newcommand{\obz}{\Omega^b_{\sqrt{d}\dz}}
\newcommand{\ol}{\Omega_l}
\newcommand{\old}{\Omega_l^{(d-1)}}
\newcommand{\olt}{\tilde\Omega_l}
\newcommand{\oltd}{\tilde\Omega_l^{(d-1)}}
\newcommand{\olh}{\hat\Omega_l}
\newcommand{\olhd}{\hat\Omega_l^{(d-1)}}
\newcommand{\dg}{\frac{d-1}{\ga}} 
\newcommand{\dgo}{{\frac{d-1}{\ga}+1}} 
\newcommand{\ls}{\lesssim} 
\newcommand{\gs}{\gtrsim} 
\newcommand{\dep}{d,\, \ga,\, c,\,\ho,\, L}
\newcommand{\deps}{d,\, \ga}
\newcommand{\depc}{d,\, \ga,\, c}
\newcommand{\dwc}{d,\, \ga,\, L}
\newcommand{\hxm}{h_{x,\min}}
\newcommand{\hym}{h_{y,\min}}
\newcommand{\uoml}{\bigcup_{l=1}^L\ol}
\newcommand{\gai}{\frac{1}{\ga}}
\newcommand{\ha}{\frac{1}{2}}
\newcommand{\di}{\frac{1}{d}}
\newcommand{\ome}{\omega}
\newcommand{\ze}{\zeta}
\newcommand{\cps}{C_{PS}}
\newcommand{\cp}{C_{P}}
\newcommand{\cs}{C_{S}}
\newcommand{\odo}{\omega_{d-1}}
\newcommand{\tm}{\tilde M}
\newcommand{\td}{{\tilde D}}
\newcommand{\tb}{{\tilde B}}
\newcommand{\tf}{\tilde f}
\newcommand{\tu}{\tilde u}
\newcommand{\hd}{{\hat D}}
\newcommand{\at}{{\tilde a}}
\newcommand{\ah}{{\hat a}}
\newcommand{\eps}{{\epsilon}}
\newcommand{\supp}{{\operatorname{supp}\,}}
\newcommand{\dist}[1]{\operatorname{dist}\left(#1\right)}
\newcommand\no[2]{\left\lVert#1\right\rVert_{#2}}
\newcommand\nor[3]{\left\lVert#1\right\rVert_{#2}^{#3}}
\newcommand\nos[3]{\left\lvert#1\right\rvert_{#2}^{#3}}
\newcommand\noi[1]{\left\lvert#1\right\rvert_{\infty}}
\newcommand\ev[1]{N\left(-\Delta_{#1}^{N}+V\right)}
\newcommand\evv[2]{N\left(-\Delta_{#1}^{N}+#2\right)}
\newcommand\evd[1]{N\left(-\Delta_{#1}^{D}+V\right)}
\newcommand\inx[2]{\int_{#1}\dr{#2}}
\newcommand\inu[3]{\int_{#1}^{#2}\dr{#3}}
\newcommand\ob[1]{\om^b_{#1}}
\let\epsilon\varepsilon
\let\phi\varphi
\providecommand*{\cupdot}{%
  \mathbin{%
    \mathpalette\@cupdot{}%
  }%
}
\newcommand*{\@cupdot}[2]{%
  \ooalign{%
    $\m@th#1\cup$\cr
    \sbox0{$#1\cup$}%
    \dimen@=\ht0 %
    \sbox0{$\m@th#1\cdot$}%
    \advance\dimen@ by -\ht0 %
    \dimen@=.5\dimen@
    \hidewidth\raise\dimen@\box0\hidewidth
  }%
}
\providecommand*{\bigcupdot}{%
  \mathop{%
    \vphantom{\bigcup}%
    \mathpalette\@bigcupdot{}%
  }%
}
\newcommand*{\@bigcupdot}[2]{%
  \ooalign{%
    $\m@th#1\bigcup$\cr
    \sbox0{$#1\bigcup$}%
    \dimen@=\ht0 %
    \advance\dimen@ by -\dp0 %
    \sbox0{\scalebox{2}{$\m@th#1\cdot$}}%
    \advance\dimen@ by -\ht0 %
    \dimen@=.5\dimen@
    \hidewidth\raise\dimen@\box0\hidewidth
  }%
}
\newcommand{\normiii}[1]{{\left\vert\kern-0.25ex\left\vert\kern-0.25ex\left\vert #1 
    \right\vert\kern-0.25ex\right\vert\kern-0.25ex\right\vert}}
\newtheorem{theorem}{Theorem}[section]
\newtheorem{lemma}[theorem]{Lemma}
\newtheorem{definition}[theorem]{Definition}
\newtheorem{corollary}[theorem]{Corollary}
\newtheorem{remark}[theorem]{Remark}
\title[Semiclassical estimates for Schr\"odinger operators on H\"older domains]{Semiclassical estimates for Schr\"odinger operators with Neumann boundary conditions on H\"older domains}
\author{Charlotte Dietze}
\address{Department of Mathematics, LMU Munich, Theresienstr.~39, 80333 Munich, Germany\newline
Institut des Hautes Études Scientifiques, 35 route de Chartres, 91440 Bures-sur-Yvette, France}
\email{dietze@math.lmu.de}
\date{\today}
\subjclass[2020]{35P15, 35P20}
\keywords{Neumann Laplacian, H\"older domains, Cwikel-Lieb-Rozenblum inequality, Semiclassical asymptotics, Weyl's law.}
\begin{document}
\maketitle
\begin{abstract}
We prove a universal bound for the number of negative eigenvalues of Schr\"odinger operators with Neumann boundary conditions on bounded H\"older domains, under suitable assumptions on the H\"older exponent and the external potential. Our bound yields the same semiclassical behaviour as the Weyl asymptotics for smooth domains. We also discuss different cases where Weyl's law holds and fails.
\end{abstract}


\section{Introduction}

The celebrated correspondence principle, which goes back to  Niels Bohr in the early days of quantum mechanics, states that quantum systems exhibit classical behaviour in the limit of large quantum numbers. In the context of spectral analysis of Schr\"odinger operators, this leads to the semiclassical approximation, which suggests that any bound state can be related to a volume of size $(2\pi)^d$ in the phase space $\R^d \times \R^d$ \cite[Section 4.1.1]{lieb2010stability}. In particular, the number of negative eigenvalues $N\left(- \Delta_{\Omega} - \lambda\right)$ of $-\Delta_{\Omega} - \lambda$ on a domain $\Omega \subset \R^d$ with suitable boundary conditions can be approximated by its semiclassical analogue
\begin{align}\label{eq:intweyldconst-0}
    N\left(- \Delta_{\Omega} - \lambda\right)  \approx \frac{1}{(2\pi)^{d} } \left|\left\{(p,x) \in \R^d \times \Omega  \bigm|   |p|^2 - \lambda <0 \right\}\right| =  \frac{|B_1(0)|}{(2\pi)^d} |\Omega|\lambda^\frac{d}{2} 
\end{align}
in the large coupling limit $\lambda \to \infty$, with $|B_1(0)|$ the volume of the unit ball in $\R^d$. More generally, for a general potential $V: \Omega \to (-\infty,0]$ one might expect that 
\begin{align}\label{eq:intweyldconst-1}
    N\left(- \Delta_{\Omega} + \lambda V \right)  \approx \frac{1}{(2\pi)^{d} } \left|\left\{(p,x) \in \R^d \times \Omega \bigm|  |p|^2 + \lambda V(x) <0 \right\}\right| =  \frac{|B_1(0)|}{(2\pi)^d}  \int_{\Omega} |\lambda V|^{\frac d 2}.
\end{align}

Rigorous justifications of \eqref{eq:intweyldconst-0} and \eqref{eq:intweyldconst-1} have been shown for a large class of smooth domains $\Omega$ and potentials $V$. On the other hand, in general, implementing the semiclassical approximation for rough domains and potentials is difficult. In this paper, we will discuss the validity of \eqref{eq:intweyldconst-1} for H\"older domains $\Omega$ and suitable $L^p$-integrable potentials $V$. We will focus on Neumann boundary conditions as Dirichlet boundary conditions have been well understood.

\subsection{Main results} Let $d\in \N$, $d\geq 2$ and let $\Omega \subset \R^d$ be a domain, that is, an open bounded and connected subset of $\R^d$. For $\ga\in\left(0,1\right]$, a $\ga$-Hölder domain  is a domain $\Omega\subset \R^d$ which is locally the subgraph of a $\gamma$-Hölder continuous function $f$, that is, there exists a constant $c>0$ such that 
\begin{equation} \label{eq:Holder-intro}
|f(x)-f(y)|\le c|x-y|^\ga
\end{equation}
for all $x,y$ in the domain of $f$, which is an open subset of $\R^{d-1}$. In the case $\gamma=1$, we call $\Omega$ a Lipschitz domain. See Section \ref{de:gahodomain} for details.

\bigskip

Denote the Dirichlet Laplacian on $\Omega$ by $- \Delta^D_\Omega$ and the number of negative eigenvalues of $-\Delta^D_\Omega - \lambda$ by $N\left(-\Delta^D_\Omega - \lambda\right)$, where $\lambda > 0$. One of the fundamental results of spectral theory is Weyl's law \cite{weyl, weyl1912asymptotische, rozenblum1971distribution, rozenbljum1972eigenvalues}, which states that
\begin{equation}\label{eq:intweyldconst}
    N\left(- \Delta^D_\Omega - \lambda\right) = \frac{|B_1(0)|}{(2\pi)^d} |\Omega|\lambda^\frac{d}{2} + o\left(\lambda^\frac{d}{2}\right) \mathrm{\ as\ } \lambda \rightarrow \infty,
\end{equation}
thus rigorously justifying \eqref{eq:intweyldconst-0}, see also \cite{ivrii1980second} for a second order result. This asymptotics also holds for the Neumann Laplacian\footnote{Here $-\Delta_{\om}^{N}$ is the self-adjoint operator generated by the quadratic form $\int_\om  |\nabla u|^2$ for all $u\in H^1(\om)$.} $- \Delta^N_\Omega$ for Lipschitz domains $\Omega$, or more generally for extension domains, see \cite{birman1970principal} and also
\cite[Theorem 3.20]{frank2021schrodinger}. 

\bigskip

In general, the Weyl asymptotics for the Neumann Laplacian does \textit{not} hold for arbitrary domains $\Omega$. 
%
%
It is well-known, see for example \cite{hempel1991essential}, that there are domains $\Omega$ such that zero is contained in the essential spectrum of $- \Delta^N_\Omega$. 
%
Interestingly, Netrusov and Safarov showed that the Weyl asymptotics \eqref{eq:intweyldconst} holds for the Neumann Laplacian $- \Delta^N_\Omega$, with any $\gamma$-Hölder domain $\Omega$, if and only if $\gamma>(d-1)/d$ (see \cite[Corollary 1.6 and Theorem 1.10]{netrusov2005weyl}). 


\bigskip

In the present paper, we are interested in $\evv{\Omega}{V}$ with a  potential $V:\Omega \to (-\infty, 0]$ on a  Hölder domain $\Omega$. Unlike the case of constant potentials, the problem with a general potential $V$ is more subtle as the following theorem shows.

\begin{theorem}[Example with non-semiclassical behaviour]\label{th:example} Let $d\ge 2$.     For every $\gamma \in \left(\tfrac{d-1}{d}, 1\right)$ there exists a $\gamma$-Hölder domain $\Omega \subset \R^d$ and $V : \Omega \rightarrow \left( -\infty, 0\right]$ with $V \in L^\frac{d}{2}(\Omega)$ such that
    \begin{equation}\label{eq:thexeq}
        \limsup_{\lambda \rightarrow \infty} \lambda^{-d/2} \evv{\Omega}{\lambda V}  = \infty .
    \end{equation}
\end{theorem}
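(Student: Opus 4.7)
The plan is a ``rooms-and-passages'' construction reminiscent of classical counterexamples for the Neumann Laplacian on irregular domains. I take $\Omega$ to consist of a Lipschitz trunk $\Omega_0$ together with infinitely many disjoint small rooms $R_n\subset\R^d$ ($n\in\N$), each attached to $\Omega_0$ through a thin neck $P_n$, arranged so that they accumulate at a single boundary point. The positions $x_n$, room diameters $r_n$, and neck widths $w_n$ and lengths $\ell_n$ are calibrated so that, in a neighbourhood of the accumulation point, $\partial\Omega$ coincides with the graph of a $\gamma$-Hölder function. The potential is then chosen as a negative step function $V = -\sum_n V_n\,\mathbf{1}_{R_n}$, with $V_n>0$ and $\sum_n V_n^{d/2}r_n^d<\infty$ to enforce $V\in L^{d/2}(\Omega)$.

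The key lower bound comes from a standard min-max argument using local test functions. For each $n$, let $\phi_n\equiv 1$ on $R_n$, vanish outside $R_n\cup P_n$, and be linear in the neck. Then
\[
\no{\nabla\phi_n}{L^2(\Omega)}^2 \;\ls\; \frac{w_n^{d-1}}{\ell_n},\qquad
\no{\phi_n}{L^2(\Omega)}^2 \;\gs\; r_n^d,\qquad
\lambda\int_\Omega V|\phi_n|^2 \;=\; -\lambda V_n r_n^d,
\]
so the Rayleigh quotient of $-\Delta_\Omega^N+\lambda V$ at $\phi_n$ is strictly negative as soon as $\lambda\geq\lambda_n:=C\,w_n^{d-1}/(\ell_n V_n r_n^d)$. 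Since the $\phi_n$'s have pairwise disjoint supports, they are linearly independent, and the min-max principle yields
\[
\evv{\Omega}{\lambda V}\;\geq\;\#\bigl\{n\in\N:\lambda_n\leq\lambda\bigr\}.
\]

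To conclude, I pick polynomial scales $r_n\sim n^{-a}$, $w_n\sim n^{-b}$, $\ell_n\sim 1$, and $V_n\sim 1$ with $a>1/d$, so that $V\in L^{d/2}$ and $\sum r_n^d<\infty$. Then $\lambda_n\sim n^{ad-(d-1)b}$, and provided $a$ and $b$ can be chosen with $ad-(d-1)b<2/d$, the activation count grows like $\lambda^{1/(ad-(d-1)b)}$, strictly faster than $\lambda^{d/2}$, which gives $\lambda^{-d/2}\evv{\Omega}{\lambda V}\to\infty$ as desired.

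The main obstacle, and the place where the hypothesis enters sharply, is to show that $a$ and $b$ can be chosen in this window while $\Omega$ remains $\gamma$-Hölder. The Hölder constraint relates the jump in boundary height $\sim r_n$ to the horizontal separation between consecutive necks, forcing $r_n\ls g_n^\gamma$, and similarly imposes a lower bound on $w_n$ in terms of the local neck geometry. The assumption $\gamma\in(\tfrac{d-1}{d},1)$ — crucially strictly less than $1$ — is precisely what leaves enough room to balance $a$, $b$, and the gap sequence so that (i) the boundary admits a single $\gamma$-Hölder parametrisation, (ii) $V\in L^{d/2}$, and (iii) $ad-(d-1)b<2/d$. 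Making this calibration rigorous, and verifying that the resulting global graph is genuinely $\gamma$-Hölder with one uniform constant $c$, is the technical crux of the proof.
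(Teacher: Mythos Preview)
Your rooms-and-passages construction is geometrically incompatible with the definition of a $\gamma$-Hölder domain. In this paper (and standardly), a Hölder domain must locally be the \emph{subgraph} $\{(x',x_d):0<x_d<f(x')\}$ of a Hölder function $f$. A room of diameter $r_n$ attached to the trunk through a neck of width $w_n<r_n$ creates an overhang: for $x'$ lying under the room but outside the neck, the vertical line through $x'$ meets $\Omega$ in two disconnected pieces (the trunk and the room), so $\Omega$ cannot be written as a subgraph in any direction there. Hence either $w_n\ge r_n$, in which case your key estimate $\no{\nabla\phi_n}{2}^2\ls w_n^{d-1}/\ell_n$ loses its force (the passage is no longer thin relative to the room, and the test function gains nothing from it), or $\Omega$ is simply not a Hölder domain. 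You flag the Hölder verification as the ``technical crux'' but do not confront this obstruction, and I do not see a way around it within a rooms-and-passages scheme. (Separately, the choice $\ell_n\sim 1$ is inconsistent with infinitely many necks accumulating at a single point of a bounded domain.)

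The paper proceeds by a completely different mechanism. The domain is the subgraph of a single Hölder function built as a lacunary series, so that at every dyadic scale $n$ the graph oscillates with amplitude $\sim 2^{-\gamma mn}$ on each of the $2^{(d-1)mn}$ subcubes of side $2^{-mn}$. The potential is not bounded: it is $V(x',x_d)=-c\,(f(x')-x_d)^{\frac{2}{d}(-1+\epsilon)}$, which blows up at $\partial\Omega$ yet lies in $L^{d/2}$ for $\epsilon>0$. At scale $n$ one places $2^{(d-1)mn}$ disjointly supported sine-type test functions near the graph, and for $0<\epsilon<(d-1)(1/\gamma-1)$ this count exceeds $\lambda(n)^{d/2}$. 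The source of the excess eigenvalues is the \emph{number} of boundary oscillations at each scale, not the thinness of any passage.
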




We prove Theorem \ref{th:example} by constructing a $\gamma$-Hölder domain $\Omega$ in the same way as Netrusov and Safarov in \cite[Theorem 1.10]{netrusov2005weyl}, and we choose the potential $V$ to be growing to infinity near the boundary in such a way that $- \Delta^N_\Omega + \lambda V$ can support significantly more  than $ \lambda^{ \frac{d}{2}}$ bound states, see Section \ref{ss:strex} for more details.

\bigskip

Our next result is a universal bound on the number of negative eigenvalues of the Schr\"odinger operator $-\Delta^{N}_\Omega + V$ on $L^2(\Omega)$ with a suitable potential $V$ on a H\"older domain $\Omega$. The Cwikel-Lieb-Rozenblum inequality \cite{cwikel1977weak,lieb1976bounds,rozenblum1972} states that for any open set $\om\subset\R^d$ for $d\ge3$, there exists a constant $C=C(d)>0$ such that for every $V : \om \rightarrow \left(- \infty, 0 \right]$, we have
    \begin{equation}\label{eq:clrintbasic2}
        \evd{\om} \leq C \int_{\R^d} |V|^{\frac d 2}.
    \end{equation}
Here $\|V\|_p$ is the $L^p$ norm of $V$. In the case of the Neumann Laplacian on a Lipschitz domain $\Omega$, if $d\ge 3$, then 
    \begin{equation}\label{eq:clrintbasicneu}
        \ev{\om} \leq C_\om\left(1 + \int_{\R^d} |V|^{\frac d 2}\right)
    \end{equation}
 for a finite constant $C_\om>0$ independent of the potential $V$, 
 see for example \cite[Corollary 4.37]{frank2021schrodinger}. In dimension $d=2$, \eqref{eq:clrintbasic2} and \eqref{eq:clrintbasicneu} still hold, provided that $\no{V}{\frac{d}{2}}$ is replaced by the  Orlicz norm $\no{V}{\cB}$ \cite{frank2019bound,solomyak1994piecewise}. 

 
 \bigskip
 
On the other hand, Theorem \ref{th:example} implies that \eqref{eq:clrintbasicneu} cannot be extended to Hölder domains. Our main result is a replacement of \eqref{eq:clrintbasicneu} for Hölder domains under suitable assumptions on the potential $V$. We need to use a weighted $L^{p}$-norm with a weight that grows to infinity as one approaches $\partial \Omega$. A simplified version of our result reads as follows.

\begin{theorem}[Cwikel-Lieb-Rozenblum type bound]\label{th:weightednorm}
   Let $d\ge 2$. Let $\gamma\in \left[\tfrac{d-1}{d}, 1\right)$ and let $\Omega$ be a $\gamma$-Hölder domain. Then there exists a constant $C_\om = C_\om(d, \gamma, \Omega)> 0$ such that for every $V : \Omega \rightarrow \left(- \infty, 0 \right]$ with $\normiii{V} < \infty$, we have
    \begin{equation}\label{eq:clrthm}
        \ev{\Omega} \leq C_\om\left(1 + \normiii{V}^\frac{d}{2}\right) .
    \end{equation}
    Here the norm $ \normiii{V}=\no{V}{\pt,\be}$ is given in Definition \ref{de:hxlseminorm}, with $\beta$ and $\pt$  chosen as in \eqref{eq:ptdef}. 
   Moreover, if $\gamma \in \left[\tfrac{2(d-1)}{2d - 1} , 1 \right)$, then $\normiii{V}$ can be replaced by $\no{V}{p}$ where $p=p_{d,\gamma}>\frac d 2 $ is a constant  depending only on $d$ and $\gamma$ satisfying
       \begin{equation}
   \lim_{\gamma\to 1}     p_{d,\gamma} = \frac{d}{2}. 
     \end{equation}
\end{theorem}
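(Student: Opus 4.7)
The overall strategy is a variational decomposition of $\Omega$ into an interior bulk region and a thin boundary layer, combined with a Hölder-adapted tiling of the latter. Concretely, I would split $\Omega = \obz \cup (\Omega\setminus\obz)$, where $\obz$ consists of the points at distance at least $\sqrt d\,\dz$ from $\partial\Omega$, and further decompose the boundary layer using the local charts $\ol$ ($l=1,\dots,L$) in which $\Omega$ is a subgraph of a $\ga$-Hölder function. On the interior bulk the standard Cwikel--Lieb--Rozenblum bound on $\R^d$ applies directly (after extending $V$ by zero), contributing $\lesssim 1 + \int_\Omega|V|^{d/2}$, which is dominated by the weighted norm. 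All of the nontrivial work is on the boundary charts, where the Hölder geometry forces the weight in $\no{V}{\pt,\be}$.

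Inside each chart $\ol$ I would further tile the region by anisotropic cells of horizontal side $h$ and vertical extent commensurate with the Hölder oscillation $h^{\ga}$, and establish a local eigenvalue estimate of the form
\begin{equation*}
    \ev{\text{cell}} \lesssim 1 + h^{\be}\int_{\text{cell}}|V|^{\pt},
\end{equation*}
where the factor $h^{\be}$ encodes the deterioration of the local Poincaré--Sobolev constant on a cell whose aspect ratio degenerates as $h\to 0$. The local bound itself would come from a Sobolev embedding on the cell, adapted to the Hölder geometry, combined with a Rozenblum-type counting argument that partitions the cell into subregions on each of which the mass of $|V|^{\pt}$ is of order one. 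Summing these local estimates by Neumann bracketing, which dominates the global Neumann counting function by the sum of the local ones over a partition of $\Omega$, then yields $\ev{\Omega} \le C_\om(1 + \no{V}{\pt,\be}^{d/2})$. (In dimension $d=2$, the usual Sobolev inequality must be replaced by the corresponding Orlicz-norm version, as in \cite{frank2019bound}.)

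For the second part, I would pass from $\no{V}{\pt,\be}$ to $\no{V}{p}$ via a two-level Hölder inequality. On each cell of scale $h$, Hölder gives $\int_{\text{cell}}|V|^{\pt}\le \no{V}{L^p(\text{cell})}^{\pt}|\text{cell}|^{1-\pt/p}$; summation in the cell index with Hölder exponents $(p/\pt,\,p/(p-\pt))$ produces
\begin{equation*}
    \no{V}{\pt,\be}^{\pt} \lesssim \no{V}{p}^{\pt}\Bigl(\sum_{\text{cells}} h^{\alpha(\ga,p,d)}\Bigr)^{(p-\pt)/p},
\end{equation*}
with $\alpha$ an affine combination of $\be$, $d$, and $\pt/p$. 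Using the covering estimate that at scale $h\sim 2^{-k}$ the boundary layer contains $\lesssim 2^{k(d-1)}$ cells, the geometric series converges exactly when $\alpha(\ga,p,d)>d-1$; solving this inequality for $p$ both identifies the admissible exponent $p_{d,\ga}$ and produces the threshold $\ga\ge 2(d-1)/(2d-1)$ needed to ensure $p_{d,\ga}>d/2$, while a direct computation confirms $p_{d,\ga}\to d/2$ as $\ga\to 1$. The principal obstacle is the local eigenvalue estimate on the Hölder cells: the scaling of the Poincaré--Sobolev constant in $h$ has to be tracked \emph{sharply} so that the resulting weight exponent $\be$ is compatible with the convergence condition above; any slack here either weakens the norm or shrinks the admissible range of $\ga$.
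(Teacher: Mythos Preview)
Your outline has a genuine gap at the step where you claim that summing the local estimates yields $\lesssim 1 + \no{V}{\pt,\be}^{d/2}$. Summing $N(\text{cell}) \lesssim 1 + h^{\be}\int_{\text{cell}}|V|^{\pt}$ over a tiling (or, equivalently, counting subcells on which the $L^{\pt}$ mass of $V$ is of order one, as you also suggest) produces
\[
N \;\lesssim\; \#\{\text{cells}\} \;+\; |V|_{\pt,\be}^{\pt},
\]
which carries the exponent $\pt>d/2$, not $d/2$. Replacing $V$ by $\lambda V$ then gives $\lambda^{\pt}$ growth and destroys the semiclassical behaviour. The paper makes exactly this point: the bound
\[
\ev{\Omega}\;\lesssim\; 1+\int_\Omega|V|^{d/2}+\int_{\text{near }\partial\Omega}|V|^{\pt}
\]
is obtainable by the approach you describe but is explicitly declared insufficient (see the discussion around \eqref{eq:evltsim1plusintsomega-intro}). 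A fixed anisotropic tiling plus Neumann bracketing cannot by itself recover the correct power.

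What the paper does differs from your plan in two essential respects. First, the oscillatory domains $D_x$ are chosen \emph{adaptively} in $V$: for each centre $x$ one adjusts $\de_x$ until one of three stopping conditions is met (Lemma~\ref{le:D123}), ensuring $N(-\Delta_{D_x}^N+KV1_{D_x})\le 1$ while making $\no{V}{\pt,D_x}$ as large as the local geometry permits---in particular, for narrow boundary cells it may greatly exceed $1$. Second, and this is the crux, the number of boundary cells of ``type~(3)'' is \emph{not} bounded by $\sum_j\no{V}{\pt,D_j}^{\pt}$. Instead one introduces coefficients $A_j$ depending on $h_j$ and $\de_j$ and applies H\"older's inequality in the index $j$,
\[
|J_3|=\sum_j A_j^{-1}A_j\le\Bigl(\sum_j A_j^{-s'}\Bigr)^{1/s'}\Bigl(\sum_j A_j^{s}\Bigr)^{1/s},
\]
with $A_j^{s}\lesssim |V|_{\pt,\be,D_j}^{\pt}$ and $\sum_j A_j^{-s'}\lesssim \dz$; the exponents $s,s',\be,\pt$ are tuned precisely so that $\pt/s-1/(2s')=d/2$ (see \eqref{eq:ajs}--\eqref{eq:ajmssum} and Lemma~\ref{le:Jest}). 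This interpolation step is what converts the exponent $\pt$ into $d/2$, and nothing in your plan substitutes for it.

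For the second assertion your idea is in the right direction but more elaborate than needed. The paper's argument is a single pointwise H\"older inequality: once one checks that $\be(d,\ga)<1$ for $\ga\ge 2(d-1)/(2d-1)$ (Lemma~\ref{le:bele1}(i)), the weight $h_x^{-\be}$ lies in $L^{q'}(\Omega)$ for any $q'<1/\be$, so $|V|_{\pt,\be}\lesssim\no{V}{p}$ directly for every $p>\pt/(1-\be)$. No cell-by-cell summation or geometric-series convergence is required; the threshold in $\ga$ comes purely from the algebraic condition $\be<1$, and $p_{d,\ga}=\pt/(1-\be)\to d/2$ as $\ga\to 1$ since then $\pt\to d/2$ and $\be\to 0$.
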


A more precise statement is given in Theorem \ref{th:main} below. 
The norm  $\normiii{V}$ is stronger than $\no{V}{\frac{d}{2}}$ (in particular, the potential $V$ in Theorem \ref{th:example} satisfies $\normiii{V}=\infty$). Nevertheless, by \eqref{eq:clrthm}, we still get the correct semiclassical behaviour as soon as $\normiii{V}<\infty$, namely
\begin{equation}\label{eq:clrnicesemib}
    \evv{\om}{\lambda V}=\mathcal{O}\left(\lambda^{\frac{d}{2}}\right) \ \textrm{as } \lambda\to\infty. 
\end{equation}

\medskip

On the technical level, our norm $\normiii{V}$ is chosen carefully to capture the correct leading order behavior of the number of bound states close to the boundary. By following Rozenblum's method \cite{rozenblum1972},
it is possible to obtain  the following bound
\begin{equation}\label{eq:evltsim1plusintsomega-intro}
    \ev{\Omega} \lesssim 1 + \int_\Omega |V|^\frac{d}{2} + \int_{\mathrm{close\ to\ } \partial \Omega} |V|^{\tilde{p}}
\end{equation}
for some $\tilde{p}> \frac d 2$ (this bound could also be obtained from the analysis in \cite{frank2010equivalence} and \cite{labutin}). However, \eqref{eq:evltsim1plusintsomega-intro} is insufficient to deduce \eqref{eq:clrnicesemib}.

\medskip

As a consequence of Theorem \ref{th:weightednorm}, we are able to come back to sharp semiclassics. 

\begin{theorem}[Weyl's law for Schr\"odinger operators on H\"older domains]\label{th:weyllawforapotential} Let $d\ge 2$. Let $\gamma \in \left[\frac{d-1}{d}, 1\right)$ and let $\Omega \subset \R^d$ be a $\gamma$-Hölder domain. Let $V : \Omega \rightarrow (-\infty, 0]$ be measurable and such that $\normiii{V} < \infty$, where the norm $\normiii{\cdot}$ is the same as in Theorem \ref{th:weightednorm}. Then
\begin{equation}
    N \left(-\Delta^N_\Omega + \lambda V\right) = (2 \pi)^{-d} \left|B_1(0)\right| \lambda^{\frac{d}{2}} \int_\Omega |V|^{\frac{d}{2}} + o \left(\lambda^{\frac{d}{2}}\right) \mathrm{\ as\ } \lambda \rightarrow \infty .
\end{equation}
\end{theorem}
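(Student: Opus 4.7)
The plan is to combine the classical Weyl asymptotics on a Lipschitz subdomain $\Omega' \Subset \Omega$ with the Cwikel--Lieb--Rozenblum bound of Theorem \ref{th:weightednorm} applied to the complementary boundary layer $\Omega\setminus\Omega'$. The key quantitative input is that $\normiii{V} < \infty$ forces $\normiii{V\mathbf{1}_{\Omega\setminus\Omega'_n}}\to 0$ along any exhaustion $\Omega'_n \nearrow \Omega$ by Lipschitz subdomains, by absolute continuity of the weighted $L^{\pt}$-integral defining $\normiii{\cdot}$. Since $\normiii{V}<\infty$ in particular yields $V\in L^{d/2}(\Omega)$, we also have $\int_{\Omega\setminus\Omega'_n}|V|^{d/2}\to 0$ along the same exhaustion.

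\smallskip

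For the upper bound, I would fix $\varepsilon\in(0,1)$ and a Lipschitz $\Omega' \Subset \Omega$, and apply the standard variational inequality $N_-(A+B) \le N_-(A) + N_-(B)$ to the operator splitting
\begin{equation*}
-\Delta^N_\Omega + \lambda V = \bigl[(1-\varepsilon)(-\Delta^N_\Omega) + \lambda V\mathbf{1}_{\Omega'}\bigr] + \bigl[\varepsilon(-\Delta^N_\Omega) + \lambda V\mathbf{1}_{\Omega\setminus\Omega'}\bigr],
\end{equation*}
which gives
\begin{equation*}
\evv{\Omega}{\lambda V} \le \evv{\Omega}{\tfrac{\lambda}{1-\varepsilon} V\mathbf{1}_{\Omega'}} + \evv{\Omega}{\tfrac{\lambda}{\varepsilon} V\mathbf{1}_{\Omega\setminus\Omega'}}.
\end{equation*}
Neumann bracketing along $\Omega = \Omega' \sqcup (\Omega\setminus\overline{\Omega'})$ and the non-negativity of $-\Delta^N_{\Omega\setminus\overline{\Omega'}}$ dominate the first term by $\evv{\Omega'}{\tfrac{\lambda}{1-\varepsilon} V}$, to which the classical Birman--Solomyak Weyl asymptotics on the Lipschitz domain $\Omega'$ applies with leading coefficient $(1-\varepsilon)^{-d/2}(2\pi)^{-d}|B_1(0)|\int_{\Omega'}|V|^{d/2}$. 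Theorem \ref{th:weightednorm} bounds the second term by $C_\Omega\bigl(1 + \varepsilon^{-d/2}\lambda^{d/2}\normiii{V\mathbf{1}_{\Omega\setminus\Omega'}}^{d/2}\bigr)$. Taking $\lambda\to\infty$, then $\Omega'\nearrow\Omega$, and finally $\varepsilon\to 0$ yields the matching upper bound for $\limsup_\lambda \lambda^{-d/2}\evv{\Omega}{\lambda V}$.

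\smallskip

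For the lower bound, extension by zero isometrically embeds $H^1_0(\Omega')$ into $H^1(\Omega)$ and preserves the quadratic form of $-\Delta + \lambda V$, so the min-max principle gives
\begin{equation*}
\evv{\Omega}{\lambda V} \ge N\bigl(-\Delta^D_{\Omega'} + \lambda V\bigr).
\end{equation*}
For the Lipschitz domain $\Omega'$ the classical Dirichlet Weyl asymptotics produces the matching leading term $(2\pi)^{-d}|B_1(0)|\lambda^{d/2}\int_{\Omega'}|V|^{d/2}$, and letting $\Omega' \nearrow \Omega$ concludes.

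\smallskip

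The main technical step is the construction of the Lipschitz exhaustion (for instance by finite unions of dyadic cubes compactly contained in $\Omega$) together with the verification that $\normiii{V\mathbf{1}_{\Omega\setminus\Omega'_n}}\to 0$. This is a routine absolute-continuity property of the weighted $L^{\pt}$-norm underlying $\normiii{\cdot}$ once $\normiii{V}<\infty$, but it is precisely the point where the strengthened norm of Theorem \ref{th:weightednorm} is indispensable: by Theorem \ref{th:example}, for merely $L^{d/2}$-potentials the boundary-layer contribution can be of larger order than $\lambda^{d/2}$, so no Lipschitz-exhaustion argument based on the $L^{d/2}$-norm alone can yield the sharp semiclassical asymptotics on a Hölder domain.
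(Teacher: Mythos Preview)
Your proposal is correct and follows essentially the same strategy as the paper: split the potential into a ``good'' interior part and a ``bad'' boundary-layer part, apply the CLR-type bound of Theorem~\ref{th:weightednorm} to the bad part, and invoke classical Weyl asymptotics for the good part, with the absolute continuity of the weighted norm $\normiii{\cdot}$ ensuring the boundary contribution is $o(\lambda^{d/2})$.

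The only packaging difference is that the paper approximates $V$ by $V_n\in C_c(\Omega)$ in the $\normiii{\cdot}$-norm and then proves Weyl's law for compactly supported continuous potentials from scratch via dyadic cubes, whereas you truncate $V$ spatially to a Lipschitz subdomain $\Omega'\Subset\Omega$ and cite Birman--Solomyak on $\Omega'$ as a black box; your version is slightly more streamlined but less self-contained (and in $d=2$ you should note that the Orlicz condition built into $\normiii{\cdot}$ is what makes the Birman--Solomyak asymptotics on $\Omega'$ applicable).
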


Let us remark that Weyl's law for constant potentials fails for $\gamma = \tfrac{d-1}{d}$ \cite[Theorem 1.10]{netrusov2005weyl}. This does not contradict  Theorem \ref{th:weyllawforapotential} since $\normiii{-1_{\Omega}} = \infty$ for $\gamma = \tfrac{d-1}{d}$.


\subsection{Main ingredients of  Theorem \ref{th:weightednorm}}\label{s:ingredient} Now let us explain the proof strategy of our main result  Theorem \ref{th:weightednorm}. Let us focus on the case $d \geq 3$. 

\bigskip
Our general approach is inspired by the method of Rozenblum \cite{rozenblum1972} where the number of bound states is bounded using techniques from microlocal analysis. The main idea is to first localize the Schr\"odinger operator $-\Delta+V$ in small domains such that it has  at most one bound state in each domain, and then put these local bounds together by a covering lemma. In the present paper, since we have to deal with H\"older domains with Neumann boundary conditions, we need to deal with ``oscillatory domains'' when working close to the boundary, and in particular we need to introduce a new Poincaré-Sobolev inequality and a new Besicovitch-type covering lemma for those domains.

\bigskip

To be precise, while Rozenblum \cite{rozenblum1972} works with cubes $Q \subset \R^d$, we will work with ``oscillatory domains'' $D \subset \Omega$ which are given either by cubes if they are far enough away from $\partial \Omega$, or rectangles intersected with $\Omega$ if they are close to the boundary. Since $\Omega$ is a $\gamma$-Hölder domain, the classical Poincaré-Sobolev inequality fails in general for oscillatory domains $D$ with $\overline{D}\cap \partial \Omega \neq \emptyset $. Therefore, we will develop a Poincaré-Sobolev inequality for those domains (see Corollary \ref{co:pssmallM}), which involves the ratio of the two side-lengths of the rectangle and a $L^{p^*}$-norm for some $p^* = p^*(d,\gamma)>\frac{d}{2}$.  
Consequently, 
we show that for oscillatory domains $D$ that are small enough in a suitable sense, we have 
\begin{equation}\label{eq:evvkv1leq1}
    \evv{D}{K V 1_D} \leq 1 . 
\end{equation}
Here, $K = K(d,\gamma) \in \N$ is the constant from our Besicovitch-type covering theorem for oscillatory domains (see Lemma \ref{le:cover}). Then the total number of bound states is controlled by a counting argument, eventually leading to the weighted norm
%
%
\begin{equation}\label{eq:triplepfstrdef}
    \normiii{V}:=\no{V}{\tilde{p}, \beta, D} := \no{V}{\frac{d}{2}, D} + |V|_{\tilde{p}, \beta, D},
\end{equation}
where $\tilde{p} = \tilde{p}(d, \gamma) > \frac{d}{2}$, $\beta = \beta(d, \gamma)> 0$ and $|V|_{\tilde{p}, \beta}$ is a weighted $L^{\tilde{p}}(\Omega)$-seminorm with a weight supported near the boundary of $\Omega$ that grows at a rate determined by $\beta$ as one approaches the boundary, see \eqref{eq:ptdef} and Definition \ref{de:hxlseminorm} for the precise definitions.


\bigskip
In \eqref{eq:triplepfstrdef} we need a $\tilde{p} > p^*$ for the following reason.  Due to the Hölder-regularity of $\partial \Omega$, our oscillatory domains close to the boundary will in many cases look like very narrow rectangles intersected with $\Omega$, so the ratio of the two side-lengths of these rectangles is very far away from one. This influences the constant in the Poincaré-Sobolev inequality (Corollary \ref{co:pssmallM}). Using Hölder's inequality, we get
\begin{equation}
    \no{V}{p^*, D} \leq \no{V}{\tilde{p}, D} \no{1}{\tilde{r}, D}=\no{V}{\pt, D}| D |^{\frac{1}{\rt}}
\end{equation}
for $\tfrac{1}{p^*} = \tfrac{1}{\tilde{p}} + \tfrac{1}{\tilde{r}}$. The quantity $| D |^{\frac{1}{\rt}}$, which is relatively small for these narrow rectangles, cancels the effect of a growing constant in the Poincaré-Sobolev inequality as the rectangle gets more narrow. \\

At this point, it seems natural to measure the ``size'' of oscillatory domains close to the boundary by $ \no{V}{\tilde{p}, D}$.
Following \cite{rozenblum1972},  we would get \eqref{eq:evltsim1plusintsomega-intro} but 
%
%
it does not capture the correct semiclassical behaviour. 
In order to get the desired semiclassical behaviour for the parts close to the boundary as well, the key idea is to count the number of oscillatory domains $\left\{D_j\right\}_{j \in J_3}$, with some index set  $J_3$, which are narrow rectangles intersected with $\Omega$ close to the boundary, by using a convexity argument. More precisely, we introduce coefficients $A_j > 0$ depending only on the larger side-length of the corresponding rectangle and the distance of the centre of the oscillatory domain $D_j$ to the boundary measured in a suitable sense. Now for suitably chosen $s, s' \in\left(1, \infty\right)$ with $\tfrac{1}{s} + \tfrac{1}{s'} = 1$, we apply Hölder's inequality for sums of products of real numbers to get
\begin{equation}
    |J_3| = \sum_{j \in J_3} A_j^{-1} A_j \le \Big(\sum_{j \in J_3} A_j^{-s'}\Big)^{\frac{1}{s'}} \Big(\sum_{j \in J_3} A_j^{s}\Big)^{\frac{1}{s}} .
\end{equation}
The coefficients $A_j$ will be chosen in such a way that
\begin{equation}\label{eq:ajs}
    A_j^s \lesssim |V|_{\tilde{p}, \beta, D_j}^{\tilde{p}} 
\end{equation}
for all $j\in J_3$. Therefore,
\begin{equation}\label{eq:ajssum}
    \sum_{j \in J_3} A_j^s \lesssim \sum_{j \in J_3} |V|^{\tilde{p}}_{\tilde{p}, \beta, D_j} \lesssim |V|^{\tilde{p}}_{\tilde{p}, \beta} \lesssim \nor{V}{\tilde{p}, \beta}{\tilde{p}} .
\end{equation}
In the proof of Rozenblum, the cubes $Q$ were chosen such that $\|V\|_{\frac d 2, Q} \le 1$.
By contrast, here the oscillatory domains $\left\{D_j\right\}_{j \in J_3}$ are chosen such that $\no{V}{\tilde{p}, D_j}$ is significantly larger than one if the distance of the centre of the oscillatory domain $D_j$ to the boundary is significantly larger than the largest side-length of the corresponding rectangle. This is possible, even though we need to ensure $\evv{D_j}{K V 1_{D_j}} \leq 1$ since in fact, we gain something in the Poincaré-Sobolev inequality for oscillatory domains because we consider an $L^{\tilde{p}}$-norm instead of an $L^{p^*}$-norm. The fact that $\no{V}{\tilde{p}, D_j}$ can be significantly larger than one in certain circumstances combined with the weight that grows near the boundary in the definition of $|V|_{\tilde{p}, \beta}$ allow us to choose $A_j$ so large that 
\begin{equation}\label{eq:ajmssum}
    \sum_{j \in J_3} A_j^{-s'} \lesssim \nor{V}{\tilde{p}, \beta}{-\frac{1}{2}} 
\end{equation}
while \eqref{eq:ajs} holds.
We choose all parameters in such a way that
\begin{equation}
    \frac{\tilde{p}}{s} - \frac{1}{2s'} = \frac{d}{2} ,
\end{equation}
so by \eqref{eq:ajssum} and \eqref{eq:ajmssum},
\begin{equation}
    |J_3| \leq \left(\sum_{j \in J_3} A_j^{-s'}\right)^\frac{1}{s'} \left(\sum_{j \in J_3} A_j^{s}\right)^\frac{1}{s} \lesssim \nor{V}{\tilde{p}, \beta}{-\frac{1}{2s'}} \nor{V}{\tilde{p}, \beta}{\frac{\tilde{p}}{s}} = \nor{V}{\tilde{p}, \beta}{\frac{d}{2}} ,
\end{equation}
which has the desired semiclassical behaviour. \\

Combining these computations with the computations for oscillatory domains far enough away from the boundary, we obtain the following.  

\begin{theorem}[Precise version of Theorem \ref{th:weightednorm}]\label{th:main}
Let $d\in\N$ with $d\ge 2$, $\ga\in\left[\frac{d-1}{d},1\right)$, $c>0$, $0<\ho<1$, $L\in\N$ and let $\emptyset\neq\om\subset(0,1)^d$ be a $\ga$-Hölder domain with  parameters $c$,  $\ho,L$. Define
\begin{equation} \label{eq:ptdef}
\be:=\frac{1}{d+1}\left(\dgo\right)\left[\frac{1}{d}\left(\dgo\right)^2-d\right]>0,\quad \pt:=\frac{1}{2d}\left(\dgo\right)^2>\frac{d}{2}.
\end{equation}
Let $V:\om\to(-\infty,0]$ be measurable and such that $\no{V}{\pt ,\be} < \infty$. Define
\begin{equation}\label{eq:dzdef}
\dz:=\dz(V):=\min\left(\frac{\ho}{\sqrt{d}},\no{V}{\pt,\be}^{-\frac{1}{2}}\right) \le 1.
\end{equation}
Then
\begin{equation}\label{eq:thpreciseconclusion}
\ev{\om}\ls\dz^{-d}\, ,
\end{equation}
where the constant in the inequality may depend on $\dep$.
\end{theorem}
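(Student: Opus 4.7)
The plan is to follow the Rozenblum-style localization scheme outlined in Section \ref{s:ingredient}, and reduce the eigenvalue count to a counting problem over a carefully chosen family of ``oscillatory domains'' covering $\Omega$. First, I would construct a cover $\{D_j\}_{j \in J}$ of $\Omega$ of the following two types: (i) for points at distance at least $\sqrt{d}\,\delta_0$ from $\partial\Omega$, standard cubes of side-length $\delta_0$; (ii) near the boundary, rectangles of varying aspect ratio intersected with $\Omega$, organized in a dyadic-type hierarchy according to the distance of their center to $\partial\Omega$, with the narrow side pointing in the direction where the Hölder graph varies. The sizes of these rectangles should be chosen so that, after applying the Poincaré-Sobolev inequality on oscillatory domains (Corollary \ref{co:pssmallM}), the operator $-\Delta^N_{D_j}+KV\,\mathbf{1}_{D_j}$ has at most one negative eigenvalue. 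Using the Besicovitch-type covering lemma (Lemma \ref{le:cover}), the multiplicity of the cover is bounded by a constant $K=K(d,\gamma)$, so by Dirichlet--Neumann bracketing
\begin{equation*}
N\!\left(-\Delta^N_\Omega+V\right) \le \sum_{j\in J} N\!\left(-\Delta^N_{D_j}+KV\,\mathbf{1}_{D_j}\right) \le |J|.
\end{equation*}
Thus the problem reduces to proving $|J|\lesssim \delta_0^{-d}$.

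Next, I would split $J=J_1\cupdot J_2\cupdot J_3$, where $J_1$ indexes the bulk cubes, $J_2$ indexes ``fat'' boundary rectangles (whose aspect ratio is controlled from above), and $J_3$ indexes ``narrow'' boundary rectangles (those whose center is significantly farther from $\partial\Omega$ than the largest side-length). The count of $J_1$ is straightforward from $|\Omega|\lesssim 1$ and cube-volume $\delta_0^d$, giving $|J_1|\le C\delta_0^{-d}$. For $J_2$, I would directly estimate the $L^{\tilde p}$-mass using the defining property $\|V\|_{\tilde p, D_j}\gtrsim 1$ together with the weight $\beta$ in the definition of $|V|_{\tilde p,\beta}$, to obtain $|J_2|\lesssim 1 + \|V\|_{\tilde p,\beta}^{d/2}\lesssim \delta_0^{-d}$.

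The main work, and the hardest step, will be controlling $|J_3|$, the narrow-rectangle count, while retaining the correct semiclassical $\delta_0^{-d}$ rate. Following the sketch, to each $j\in J_3$ I would associate a weight $A_j>0$ depending on the larger side-length and the distance of the center of $D_j$ to $\partial\Omega$, such that $A_j^s\lesssim |V|_{\tilde p,\beta,D_j}^{\tilde p}$ and $\sum_{j\in J_3}A_j^{-s'}\lesssim \|V\|_{\tilde p,\beta}^{-1/2}$. Then Hölder's inequality for sums gives
\begin{equation*}
|J_3| = \sum_{j \in J_3} A_j^{-1}A_j \le \Bigl(\sum_{j \in J_3} A_j^{-s'}\Bigr)^{\!1/s'}\!\Bigl(\sum_{j \in J_3} A_j^{s}\Bigr)^{\!1/s} \lesssim \|V\|_{\tilde p,\beta}^{-\frac{1}{2s'}+\frac{\tilde p}{s}} = \|V\|_{\tilde p,\beta}^{d/2} \lesssim \delta_0^{-d},
\end{equation*}
with the exponent condition $\tfrac{\tilde p}{s}-\tfrac{1}{2s'}=\tfrac{d}{2}$ dictating the values of $\beta,\tilde p$ in \eqref{eq:ptdef}.

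The hard part is the geometric/arithmetic calibration: choosing the hierarchy of oscillatory domains, the weights $A_j$, and the exponents $s,s',\beta,\tilde p$ so simultaneously (a) the local Poincaré-Sobolev inequality on each $D_j$ yields at most one bound state, (b) $A_j$ has a geometric interpretation that makes $A_j^{-s'}$ summable near $\partial\Omega$ via the Hölder regularity of the boundary, and (c) the exponent identity $\tfrac{\tilde p}{s}-\tfrac{1}{2s'}=\tfrac{d}{2}$ is satisfied. The Hölder exponent constraint $\gamma\ge (d-1)/d$ should enter precisely in verifying (b), since it is what makes the dyadic family of boundary rectangles have finite total volume comparable to $|\Omega|$. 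Once all three conditions are arranged, combining the three estimates for $|J_1|,|J_2|,|J_3|$ delivers \eqref{eq:thpreciseconclusion}.
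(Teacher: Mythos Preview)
Your overall strategy matches the paper's: localize via oscillatory domains, control overlap with the covering lemma, and reduce to a counting problem. The $J_3$ step with the H\"older weights $A_j$ and the exponent identity $\tfrac{\tilde p}{s}-\tfrac{1}{2s'}=\tfrac{d}{2}$ is the heart of the argument and you describe it correctly. However, the rest of the case analysis has gaps that would make the bound fail as written.

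First, the interior cubes cannot all have fixed side-length $\delta_0$: the condition $N(-\Delta^N_{D}+KV)\le 1$ requires $\|KV\|_{d/2,D}\lesssim 1$, which may fail on a $\delta_0$-cube where $V$ concentrates. You need Rozenblum's adaptive stopping (choose $\delta_x\le\delta_0$ maximal with $\|V\|_{d/2,D_x(\delta_x)}\lesssim 1$), and then split the interior count into maximal-scale cubes (volume argument) and small cubes (counted by $\|V\|_{d/2,\Omega}^{d/2}$). Second, your $J_2$ estimate does not give the claimed power: from $\|V\|_{\tilde p,D_j}\gtrsim 1$ you only get $|J_2|\lesssim|V|_{\tilde p,\beta}^{\tilde p}$, which scales like $\delta_0^{-2\tilde p}$, not $\delta_0^{-d}$. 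In the paper, cube-shaped boundary domains (those with $a_x=\delta_x$) are stopped by the criterion $\|V\|_{d/2,D}\sim 1$, and \emph{that} yields $|J_2|\lesssim\|V\|_{d/2,\Omega}^{d/2}\le\delta_0^{-d}$; this is exactly why the norm $\|\cdot\|_{\tilde p,\beta}$ carries both a $\|\cdot\|_{d/2}$ piece and the weighted $L^{\tilde p}$ piece. Third, your three-way split omits the boundary domains where the maximal scale $\delta_x=\delta_0$ is reached before any norm condition fires. The paper treats these as a separate family (its boundary $J_1$, further split into $J_{1,1},J_{1,2},J_{1,3}$ according to whether $a_x$ equals $\delta_0$, $c_0 h_x^{1/\gamma}$, or $c_2\delta_0^{1/\gamma}$) and counts them by purely geometric volume arguments; this is one of the places where the constraint $\gamma\ge(d-1)/d$, i.e.\ $\tfrac{d-1}{\gamma}\le d$, is actually used.
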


The definition of a ``$\ga$-Hölder domain with  parameters $c$,  $\ho,L$'' will be given in Definition  \ref{de:gahodomain}. 
The condition $\om\subset(0,1)^d$ is not a restriction since we can recover the same result for all (bounded)  $\ga$-Hölder domains by scaling.

\bigskip
Note that from \eqref{eq:thpreciseconclusion} in Theorem \ref{th:main}, we can deduce the first claim \eqref{eq:clrthm} of Theorem \ref{th:weightednorm}.
Moreover, the second claim of Theorem \ref{th:weightednorm} follows from the  following corollary.
\begin{corollary}\label{co:main}
If $\gamma \in \left[\tfrac{2(d-1)}{2d - 1} , 1 \right)$ and $p>{\frac{\tilde{p}}{1-\beta}}$, where $\pt$ and $\be$ were defined in Theorem  \ref{th:main},  then
\begin{equation}
\ev{\om}\ls1+\no{V}{p}^{\frac{d}{2}}\, ,
\end{equation}
where the constant in the inequality may depend on $\dep,\, p$.
\end{corollary}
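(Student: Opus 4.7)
The plan is to reduce Corollary \ref{co:main} to Theorem \ref{th:main} by controlling $\no{V}{\pt,\be}$ in terms of the ordinary $L^p$-norm $\no{V}{p}$. Note first that by Theorem \ref{th:main},
\[
\ev{\om}\ls \dz^{-d} = \max\bigl(\sqrt{d}/\ho,\, \no{V}{\pt,\be}^{1/2}\bigr)^d \ls 1 + \no{V}{\pt,\be}^{d/2},
\]
so it suffices to establish $\no{V}{\pt,\be}\ls 1 + \no{V}{p}$, with implicit constant depending on $\dep$ and $p$, and then take the $d/2$-power.

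Recall that $\no{V}{\pt,\be} = \no{V}{d/2,\om} + |V|_{\pt,\be,\om}$, where $|V|_{\pt,\be,\om}$ is a weighted $L^{\pt}$-seminorm supported near $\partial\om$, with weight $w$ diverging at the boundary at a rate encoded by $\be$. Since $\om$ is bounded and $p>\pt>d/2$, Hölder's inequality at once gives $\no{V}{d/2,\om}\ls \no{V}{p}$. For the weighted piece, I would apply Hölder's inequality with conjugate exponents $p/\pt$ and $q' = p/(p-\pt)$:
\[
|V|_{\pt,\be,\om}^{\pt} \;=\; \inx{\om}{x}\, |V(x)|^{\pt}\, w(x) \;\le\; \nor{V}{p}{\pt}\,\Bigl(\inx{\om}{x}\, w(x)^{q'}\Bigr)^{1/q'}.
\]
The hypothesis $p > \pt/(1-\be)$ is equivalent to $\be q'<1$, which is precisely the threshold one expects for $w\in L^{q'}(\om)$; moreover the lower bound $\ga \ge \tfrac{2(d-1)}{2d-1}$ is needed exactly to guarantee $\be<1$, so that $\pt/(1-\be)$ is finite and the hypothesis on $p$ can be satisfied.

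The main obstacle is to show $\inx{\om}{x}\, w(x)^{q'} \ls 1$ whenever $\be q'<1$, with constant depending only on $\dep$ and $p$. Morally this is a layer-cake estimate on the $t$-neighbourhood of $\partial\om$: the measure of such a strip in a $\ga$-Hölder domain admits a polynomial bound, and integrating $w^{q'}$ against it converges exactly under $\be q'<1$. However, $w$ is not literally a power of $\dist{x,\partial\om}$ but is built from local oscillation parameters such as $\hxm$ encoding the $\ga$-Hölder graph structure, so the integrability must be extracted from those parameters using the decomposition into oscillatory domains already developed in the paper. Once this estimate is established, combining $\no{V}{d/2,\om}\ls\no{V}{p}$ and $|V|_{\pt,\be,\om}\ls\no{V}{p}$ yields $\no{V}{\pt,\be}\ls\no{V}{p}$, and substituting back into the displayed bound above completes the argument.
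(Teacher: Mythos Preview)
Your approach is correct and matches the paper's proof (which packages the norm comparison as Lemma \ref{le:bele1}). Two small clarifications. First, the weight integrability is more elementary than you suggest: on each chart $\Omega_l$ the weight is $h_{x,l}^{-\beta}$ with $h_{x,l}=f_l(x')-x_d$, so by Fubini
\[
\int_{\Omega_l} h_{x,l}^{-\beta q'}\,\md x
=\int_{\Omega_l^{(d-1)}}\md x'\int_0^{f_l(x')}\bigl(f_l(x')-x_d\bigr)^{-\beta q'}\,\md x_d
\ls 1
\]
directly from $\beta q'<1$; summing over the finitely many charts handles $h_{x,\min}^{-\beta q'}\le\sum_l h_{x,l}^{-\beta q'}$, and no oscillatory-domain machinery is needed. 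Second, in dimension $d=2$ the first part of $\no{V}{\pt,\be}$ is the Orlicz norm $\no{V}{\cB,\om}$ rather than $\no{V}{1,\om}$, and the paper disposes of this case by an Orlicz embedding $\no{V}{\cB,\om}\ls\no{V}{p}$ for $p>1$.
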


The details of the proofs of Theorem \ref{th:main} and  Corollary \ref{co:main} can be found in Section \ref{s:conclclr}. Let us give the key ingredients of the proof of Theorem \ref{th:main}  below. First, we have

%




\begin{lemma}\label{le:familiesabc}
There exists $K=K(\dwc)\in\N$ such that for any measurable $V:\om\to(-\infty,0]$ with $\no{V}{\pt ,\be} < \infty$ there are families $\cF_1,\ldots,\cF_K$ of oscillatory domains $D\subset\om$ such that the following properties are satisfied:
\begin{enumerate}
[(a)]
\item The oscillatory domains $D$ in every $\cF_k$ with $k\in\{1,\ldots,K\}$ are disjoint and
\begin{equation}
\om=\bigcup_{k=1}^K\dot{\bigcup_{D\in\cF_k}}D\, .
\end{equation}

\item For every $k\in\{1,\ldots,K\}$ and every  $D\in\cF_k$ we have
\begin{equation}
\ev{D}\le1\, .
\end{equation}

\item Moreover, for an implicit constant depending on $\dep$, 
\begin{equation}
\sum_{k=1}^K|\cF_k|\ls\dz^{-d}.
\end{equation}
\end{enumerate}
\end{lemma}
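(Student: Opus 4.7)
The plan is to build $\cF_1,\ldots,\cF_K$ through a subdivision / stopping-time construction tailored to the Hölder geometry of $\partial\om$, followed by an application of the Besicovitch-type covering lemma (Lemma \ref{le:cover}) that organizes the resulting oscillatory domains into $K=K(\dwc)$ pairwise-disjoint subfamilies. The three assertions (a), (b), (c) are then addressed in sequence, with (c) being the main difficulty.

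First, I would call an oscillatory domain $D\subset\om$ \emph{admissible} when it is small enough in the Poincaré-Sobolev sense to guarantee $\evv{D}{KV 1_D}\le 1$, with $K$ the constant produced by Lemma \ref{le:cover}. Recalling that the Neumann Laplacian has the constants as its ground state, a min-max argument on the orthogonal complement of constants reduces admissibility to a smallness condition on the potential: in the interior regime $D$ is a cube $Q$ and the condition becomes a bound on $\no{V}{d/2,Q}$ via the standard cube Poincaré-Sobolev inequality, while near $\partial\om$, $D$ is a rectangle $R$ intersected with $\om$ and the Poincaré-Sobolev inequality of Corollary \ref{co:pssmallM} combined with Hölder produces a bound of the form $\no{V}{\pt,D}|D|^{1/\rt}\le c$ for a universal $c$. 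Since admissibility is monotone under shrinking, a top-down algorithm terminates: split interior cubes dyadically into $2^d$ children and split boundary rectangles anisotropically with ratios determined by $\ga$. The resulting disjoint cover of $\om$ by admissible oscillatory domains is reorganized by Lemma \ref{le:cover} into $\cF_1,\ldots,\cF_K$, which gives (a). Since $V\le 0$ and $K\ge 1$ imply $\ev{D}\le\evv{D}{KV 1_D}\le 1$, property (b) follows.

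For (c), I would partition $\bigsqcup_k\cF_k$ into three index sets: interior cubes $J_1$, boundary rectangles already at the minimal scale $\dz$ that were never subdivided $J_2$, and boundary rectangles produced by the anisotropic subdivisions $J_3$. The set $J_1$ is bounded by a volume count $|J_1|\ls\dz^{-d}|\om|\ls\dz^{-d}$, and the set $J_2$ satisfies $\no{V}{d/2,D_j}\gs 1$ by maximality of the subdivision (otherwise $D_j$ would have been absorbed into a larger admissible parent), whence $|J_2|\ls\nor{V}{d/2}{d/2}\ls\dz^{-d}$ by the definition of $\dz$. For $J_3$ I would carry out the convexity argument from Section \ref{s:ingredient}: to each $D_j$ I would assign a weight $A_j>0$ depending on the larger side-length of $D_j$ and on the distance from its centre to $\partial\om$, then apply Hölder with conjugate exponents $s,s'$ tuned so that $\pt/s-1/(2s')=d/2$. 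If the $A_j$ can be chosen to satisfy both the pointwise bound \eqref{eq:ajs} and the summability bound \eqref{eq:ajmssum}, then
\begin{equation}
|J_3|\le\Bigl(\sum_{j\in J_3}A_j^{-s'}\Bigr)^{1/s'}\Bigl(\sum_{j\in J_3}A_j^{s}\Bigr)^{1/s}\ls\nor{V}{\pt,\be}{d/2}\ls\dz^{-d}.
\end{equation}

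The technical heart of the proof is the construction of the weights $A_j$ for $J_3$: they must simultaneously be large enough that $\sum_jA_j^{-s'}$ is controlled by the required negative power of the weighted seminorm, yet small enough that $A_j^s$ is dominated pointwise by the increment $\nos{V}{\pt,\be,D_j}{\pt}$. Because the admissible boundary rectangles in $J_3$ are anisotropically shaped in a way dictated by the $\gamma$-Hölder graph structure of $\partial\om$, this balance is delicate and is precisely where the definition of $\be$ in \eqref{eq:ptdef} must be exploited.
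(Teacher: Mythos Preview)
Your construction diverges from the paper's and contains concrete errors. The paper does \emph{not} run a top-down dyadic subdivision. Instead, for every point $x$ in the boundary strip it selects a scale $\de_x\in(0,\dz]$ (via Lemma~\ref{le:D123}) so that $D_x(\de_x)$ is admissible \emph{and} one of three alternatives holds; a parallel continuous selection (Definition~\ref{de:debd}) handles the interior. Only then is the Besicovitch-type Lemma~\ref{le:cover} applied to the overlapping family $\{D_x\}_x$ to extract the disjoint subfamilies. Your stopping-time procedure already produces a single disjoint cover, so Lemma~\ref{le:cover} has nothing to act on, and the factor $K$ in your admissibility criterion $\evv{D}{KV1_D}\le1$ is idle.

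More seriously, the counting in (c) breaks down. First, $\dz$ is the \emph{maximal} allowed scale, not the minimal one (Definition~\ref{de:smalldomain} has $\de\in(0,\dz]$). Your $J_2$ consists of top-level boundary rectangles that were \emph{never} subdivided, i.e.\ were already admissible; on these $\|V\|$ is small, not $\gtrsim1$, and there is no ``larger admissible parent'' --- your maximality argument is inverted. Second, the bare volume bound $|J_1|\ls\dz^{-d}$ for interior cubes fails: in a stopping construction, cubes can be arbitrarily small wherever $V$ concentrates, so one must separate cubes at scale $\dz$ (volume) from subdivided cubes (whose \emph{parents} carry $\|V\|_{d/2}\gtrsim1$); compare Lemma~\ref{le:coverint}(ii). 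Third, the anisotropic boundary subdivision is underspecified: the aspect ratio $a_x/\de_x$ of an oscillatory domain depends on $h_x=f_l(x')-x_d$, which varies across the domain, so a dyadic split ``with ratios determined by $\ga$'' does not produce children of the form in Definition~\ref{de:smalldomain}, and hence Corollary~\ref{co:pssmallM} and the weight computation behind \eqref{eq:ajs}--\eqref{eq:ajmssum} are not available for them.
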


To prove Lemma \ref{le:familiesabc}, we need a new covering lemma (see Lemma \ref{le:cover} for details). From Lemma \ref{le:familiesabc}, we can deduce the following two lemmata. 

\begin{lemma}[Selfadjointness of the operator $-\Delta_{\om}^{N}+V$]\label{le:selfadj}
    Let $V:\om\to(-\infty,0]$ be measurable and such that $\no{V}{\pt ,\be} < \infty$. Then the operator $-\Delta_{\om}^{N}+V$ is a selfadjoint operator, which is bounded from below and has the $H^1(\om)$ norm as quadratic form norm.
\end{lemma}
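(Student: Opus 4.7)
The plan is to invoke the KLMN (closed form) theorem: we establish that $V$ is a form-bounded perturbation of $-\Delta_{\om}^{N}$ with relative bound strictly less than $1$, from which self-adjointness, semiboundedness, and the equivalence of the form norm with $\no{\cdot}{H^{1}(\om)}$ all follow. Concretely, the target form bound is
\begin{equation}\label{eq:selfadjplan-fb}
  \int_{\om} |V|\,|u|^{2} \leq \tfrac{1}{2}\int_{\om} |\nabla u|^{2} + C\,\nor{u}{L^{2}(\om)}{2} \qquad \text{for every } u \in H^{1}(\om),
\end{equation}
with $C$ allowed to depend on $\dep$ and $\no{V}{\pt,\be}$.

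To prove \eqref{eq:selfadjplan-fb}, I would apply Lemma \ref{le:familiesabc} to the potential $2V$ (which is still non-positive and has $\no{2V}{\pt,\be} = 2\no{V}{\pt,\be} < \infty$), producing $K$ families of disjoint oscillatory domains $\{D\}$ covering $\om$, each satisfying $N(-\Delta_{D}^{N} + 2V) \leq 1$. Variationally, this means that on each such $D$ the form $\int_{D}|\nabla u|^{2} + 2\int_{D} V|u|^{2}$ is non-negative on the $L^{2}(D)$-orthogonal complement of a single function (the local ground state). Decomposing $u \in H^{1}(D)$ as its $D$-mean plus a mean-zero fluctuation, and invoking the Poincaré--Sobolev inequality for oscillatory domains (Corollary \ref{co:pssmallM}) on the fluctuation, one extracts a local estimate of the shape
\begin{equation}
  \int_{D}|V|\,|u|^{2} \leq \tfrac{1}{2K}\int_{D}|\nabla u|^{2} + A_{D}\,\nor{u}{L^{2}(D)}{2},
\end{equation}
whose constants $A_{D}$ are controlled by the local norm of $V$ on $D$ and the Poincaré--Sobolev constant associated with the geometry of $D$. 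Summing this local bound over the disjoint $D \in \cF_{k}$ within each family and then over the $K$ families yields \eqref{eq:selfadjplan-fb}.

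Once \eqref{eq:selfadjplan-fb} is in hand, the closed form theorem produces a unique self-adjoint operator, which we denote $-\Delta_{\om}^{N} + V$, with form domain $H^{1}(\om)$, bounded below, and form norm equivalent to $\no{\cdot}{H^{1}(\om)}$. The main obstacle I anticipate is the careful coordination of the Poincaré--Sobolev constants on the oscillatory domains so that the cumulative relative bound is strictly less than $1$: the choice of applying Lemma \ref{le:familiesabc} to $2V$ rather than $V$ is exactly what provides the slack to absorb the local constants while retaining relative bound $1/2$. A secondary subtlety is the a priori finiteness of $\int_{\om} V|u|^{2}$ on all of $H^{1}(\om)$, which follows a posteriori from \eqref{eq:selfadjplan-fb}, but can also be obtained directly by first truncating $V_{n} := \max(V,-n)$ to a bounded potential (for which everything is immediate by Kato--Rellich), proving \eqref{eq:selfadjplan-fb} uniformly in $n$, and passing to the limit via monotone convergence of $|V_{n}|\nearrow|V|$.
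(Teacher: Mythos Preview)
Your approach is the paper's own: apply Lemma \ref{le:familiesabc} to a multiple of $V$, split $u$ on each $D$ into its mean $u_D$ and mean-zero fluctuation $v$, use that the local form is nonnegative on mean-zero functions (this is already the content of Lemma \ref{le:familiesabc}(b), so you do not need to invoke Corollary \ref{co:pssmallM} separately), bound the mean part by $\tfrac{1}{|D|}\int_D|V|\cdot\|u\|_{L^2(D)}^2$, and sum.

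There is one quantitative slip. Applying Lemma \ref{le:familiesabc} to $2V$ only yields $\int_D|V||v|^2\le\tfrac12\int_D|\nabla v|^2$ for mean-zero $v$; after the split $|u|^2\le 2|v|^2+2|u_D|^2$ the local gradient coefficient is $1$, and upon summing over the $K$ overlapping families it becomes $K$, so the relative bound you obtain is $K$, not $\tfrac12$, and KLMN does not apply. The fix is simply to apply Lemma \ref{le:familiesabc} to $4KV$ (exactly as the paper does): then the local gradient coefficient is $\tfrac{1}{2K}$ and the sum over families gives $\tfrac12$. With this adjustment your argument goes through and matches the paper's, which concludes via
\[
-\Delta_\om^N+V\;\ge\;\tfrac12(-\Delta_\om^N)\;-\;\Big(\tfrac{1}{2K}\sum_{k}\sum_{D\in\cF_k}\tfrac{1}{|D|}\!\int_D|4KV|\Big)1_\om,
\]
the bracketed constant being finite since $\sum_k|\cF_k|<\infty$.
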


\begin{lemma}\label{le:evkv}
Let $V:\om\to(-\infty,0]$ be measurable and such that $\no{V}{\pt ,\be} < \infty$. Let $K=K(\dwc)\in\N$ and the families $\cF_1,\ldots,\cF_K$ of oscillatory domains be chosen as in Lemma \ref{le:familiesabc}. Then
\begin{equation}
\evv{\om}{\frac{1}{K}V}\ls\dz^{-d}\, .
\end{equation}
\end{lemma}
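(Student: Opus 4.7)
The strategy is to combine Neumann bracketing along each of the $K$ covering families from Lemma \ref{le:familiesabc}(a) with the local single-eigenvalue bounds from part (b), and then to average over $k=1,\ldots,K$ so that the control on $\sum_k|\cF_k|$ from part (c) produces the asserted $\dz^{-d}$.

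Fix $k\in\{1,\ldots,K\}$. By Lemma \ref{le:familiesabc}(a), the oscillatory domains in $\cF_k$ partition $\om$ disjointly (up to a null set), so restriction of functions gives an isometric embedding $H^1(\om)\hookrightarrow\bigoplus_{D\in\cF_k}H^1(D)$ that preserves the $L^2$-inner product and under which both $\int|\nabla u|^2$ and $\int V|u|^2$ decompose additively across $D\in\cF_k$. Since by Lemma \ref{le:selfadj} the quadratic form of $-\Delta_\om^N+\tfrac{1}{K}V$ is closed and bounded below on $H^1(\om)$, the min--max principle delivers the Neumann bracketing inequality
\begin{equation*}
\evv{\om}{\tfrac{1}{K}V}\;\le\;\sum_{D\in\cF_k}\evv{D}{\tfrac{1}{K}V}.
\end{equation*}
Because $V\le 0$ and $K\ge 1$, we have $\tfrac{1}{K}V\ge V$ pointwise, so monotonicity of the eigenvalue counting function in the potential combined with Lemma \ref{le:familiesabc}(b) yields $\evv{D}{\tfrac{1}{K}V}\le\ev{D}\le 1$ for every $D\in\cF_k$. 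Hence $\evv{\om}{\tfrac{1}{K}V}\le|\cF_k|$.

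Summing this bound over $k=1,\ldots,K$ and using Lemma \ref{le:familiesabc}(c),
\begin{equation*}
K\cdot\evv{\om}{\tfrac{1}{K}V}\;\le\;\sum_{k=1}^K|\cF_k|\;\ls\;\dz^{-d}.
\end{equation*}
Since $K=K(\dwc)$ depends only on $d,\ga,L$, dividing by $K$ gives the asserted estimate, with implicit constant depending on $\dep$.

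The only non-routine step is the rigorous justification of Neumann bracketing in the present setting: one needs each $-\Delta_D^N+V$ to be a self-adjoint operator on $L^2(D)$ with form domain $H^1(D)$, and one needs the tiling in (a) to be a genuine partition of $\om$ up to a Lebesgue-null set. Both should follow from the construction of the oscillatory domains together with a local analogue of Lemma \ref{le:selfadj}. This is the only place where the precise structure of the oscillatory domains is used, and I expect it to be the main, though essentially routine, technical check.
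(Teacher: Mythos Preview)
Your argument has a genuine gap at the very first step. You assert that ``the oscillatory domains in $\cF_k$ partition $\om$ disjointly'' for each fixed $k$, but this is \emph{not} what Lemma \ref{le:familiesabc}(a) says. Part (a) states that for each $k$ the members of $\cF_k$ are pairwise disjoint, and that the union over \emph{all} $k$ together covers $\om$; it does \emph{not} say that $\bigcup_{D\in\cF_k}D=\om$ for a single $k$. In fact the construction (Lemma \ref{le:cover} and Definition \ref{de:Dj}) produces exactly the Besicovitch-type output: a bounded-multiplicity cover split into $K$ disjoint subfamilies, none of which is individually a cover. Consequently your Neumann bracketing inequality $\evv{\om}{\tfrac{1}{K}V}\le\sum_{D\in\cF_k}\evv{D}{\tfrac{1}{K}V}$ is unjustified: on the leftover set $\om\setminus\bigcup_{D\in\cF_k}D$ the potential $\tfrac{1}{K}V$ can still create negative eigenvalues that your right-hand side ignores.

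The paper's proof circumvents this by using all $K$ families simultaneously at the quadratic-form level. The factor $\tfrac{1}{K}$ in the potential is exactly what makes this work: one writes $-\Delta^N_\om+\tfrac{1}{K}V\ge\tfrac{1}{K}\sum_{k=1}^K\sum_{D\in\cF_k}(-\Delta^N_D+V1_D)$, where the kinetic part uses disjointness within each $\cF_k$ and the potential part uses $V\le 0$ together with $\sum_{k}\sum_{D\in\cF_k}1_D\ge 1_\om$. Then one takes a single $u^D$ per domain and applies the min--max principle on the span of all $u^D$, giving $\evv{\om}{\tfrac{1}{K}V}\le\sum_{k=1}^K|\cF_k|$ directly. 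Your averaging-over-$k$ step is thus not needed, and your per-$k$ bound $\evv{\om}{\tfrac{1}{K}V}\le|\cF_k|$ is simply false in general.
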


The proofs of Lemmas \ref{le:familiesabc}, \ref{le:selfadj} and \ref{le:evkv} can be found in Section \ref{ss:pfthmain}. We are now ready to prove Theorem \ref{th:main} assuming those lemmata. 
\begin{proof}[Proof of Theorem \ref{th:main}]
Let $K=K(\dwc)\in\N$ be chosen as in Lemma \ref{le:familiesabc}. Now apply Lemma \ref{le:evkv} to $KV$ to get
\begin{equation}
\ev{\om}=\evv{\om}{\frac{1}{K}KV}\ls \dz(KV)^{-d}
\ls\dz(V)^{-d}\, ,
\end{equation}
where we used that $K$ only depends on $\dwc$ in the second last step.
\end{proof}


{\bf Structure of the paper.} 
Sections \ref{s:preliminaries} to \ref{s:conclclr} are devoted to proving the Cwickel-Lieb-Rozenblum type bound (Theorem \ref{th:weightednorm}). In Section \ref{s:preliminaries}, we introduce some preliminaries, including the definition and some basic properties of oscillatory domains. We prove several estimates for oscillatory domains in Section \ref{s:estosc}. In Section \ref{s:cov}, we prove a covering lemma for oscillatory domains. In Section \ref{s:conclclr}, we combine the results from Sections \ref{s:preliminaries} to \ref{s:cov} and prove Theorem \ref{th:weightednorm}. In Section \ref{se:weyln}, we prove Weyl's law for Schrödinger operators on Hölder domains (Theorem \ref{th:weyllawforapotential}). In Section \ref{se:example}, we construct an example with non-semiclassical behaviour and therby prove Theorem \ref{th:example}.

\medskip

{\bf Acknowledgements.}  The author would like to express her deepest gratitude to Phan Thành Nam for his continued support and very helpful advice. She would also like to thank Laure Saint-Raymond for her support and hospitality at Institut des Hautes Études Scientifiques and for inspiring discussions. She would like to thank L\'aszl\'o Erd\H{o}s for useful remarks. She would like to thank Luca Maio for helping prepare the tex file. The author acknowledges the support from the Deutsche Forschungsgemeinschaft (DFG project Nr.~426365943), from the Jean-Paul Gimon Fund and from the Erasmus+ programme.

\bigskip
{\bf Notation.} For two numbers $A,B \ge 0$, we write
$
A\ls  B$
if  $A\le CB$ for some constant $C>0$ which may depend on $\dep$ (see Definition \ref{de:gahodomain}).  Similarly we write $A\sim  B$ if $A\ls  B$ and  $A\gs B$.
%
%
%
%

\section{Preliminaries}\label{s:preliminaries}

In this section we collect some technical definitions and preliminary results.

%
%
%
%
%
%

\subsection{Hölder domains}\label{ss:holderdo}
%

We often write elements in $\R^d$ as $x=(x', x_d)$, where $x'\in\R^{d-1}$ and $x_d\in\R$. We denote the infinity norm on $\R^{d-1}$ by $\noi{\cdot}$. We have the following technical definition which agrees with the definition of H\"older domains given around \eqref{eq:Holder-intro}.


\begin{definition}[$\ga$-Hölder domain with parameters $c,\ho,L$]\label{de:gahodomain}
Let $d\in\N$ with $d\ge 2$, $\ga\in\left(0,1\right]$, $c>0$, $\ho>0$, $L\in\N$ and let $\emptyset\neq\om\subset\R^d$ and be a bounded open set. We call $\om$ a \emph{$\ga$-Hölder domain with  parameters $c$,  $\ho,L$} if there exists a collection $\{O_l\}_{l=1}^L\subset\R^d$ of open sets covering $\partial\om$ with the following properties:
\begin{enumerate}[(i)]
\item For every $l\in\{1,\ldots,L\}$ there exists an orthogonal map $R_l$ and a translation map $T_l$ such that for $\ol:=O_l\cap\om$ we have 
\begin{equation}\label{eq:ol}
\ol=T_lR_l\left\{(x',x_d)\in\R^{d-1}\times\R\bigm| x\in\old,\,0<x_d<f_l(x')\right\rbrace
\end{equation}
and 
\begin{equation}
O_l\setminus\ol\subset T_lR_l\left\{(x',x_d)\in\R^{d-1}\times\R\bigm| x\in\old,\,f_l(x')\le x_d\right\rbrace
\end{equation}
for some open sets $\old\subset\R^{d-1}$ and for a function $f_l:\old\to(3\ho,\infty)$ with
\begin{equation}
|f_l(x')-f_l(y')|\le c\noi{x'-y'}^\ga\ \textrm{for all } x',y'\in\old\, .
\end{equation}
\item For every $l\in\{1,\ldots,L\}$ define
\begin{equation}\label{eq:olhddef}
\olhd :=\left\lbrace x'\in\old\Bigm| \dist{x',\partial\old}>2\ho\right\rbrace, 
\end{equation}
where $ \dist{x',\partial\old}$ denotes the distance from $x'$ to $\partial\old$ with respect to the Euclidean norm in $\R^{d-1}$, and
\begin{equation}
\olh: =T_lR_l\left\{(x',x_d)\in\R^{d-1}\times\R\bigm| x'\in\olhd,\,2\ho<x_d<f_l(x')\right\rbrace\, .
\end{equation}
Then
\begin{equation}\label{eq:coverbryol}
\partial\om\subset\bigcup_{l=1}^L\overline{\olh}\, .
\end{equation}
\end{enumerate}
\end{definition}


\begin{definition}[$\oltd$ and $\olt$]\label{de:olt}
For every $l\in\{1,\ldots,L\}$ define
\begin{align}
\oltd &:=\left\lbrace x'\in\old\Bigm| \dist{x',\partial\old}>\ho\right\rbrace,\\
\olt &: =T_lR_l\left\{(x',x_d)\in\R^{d-1}\times\R\bigm| x'\in\oltd,\,\ho<x_d<f_l(x')\right\rbrace,\\
\om^b_\epsilon &:=\left\lbrace x\in\om\bigm| \dist{x,\partial\om}<\eps\right\rbrace,
\end{align}
where $\dist{x,\partial\om}$ denotes the distance with respect to the Euclidean norm in $\R^d$.
\end{definition}

\begin{lemma}\label{oltobho}
Let $\olt$ and $\om^b_\epsilon$  be defined as in Definition \ref{de:olt}. Then
\begin{equation}
\bigcup_{l=1}^L\olt\supset\ob{\ho}.
\end{equation}
\end{lemma}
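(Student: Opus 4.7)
The plan is to start from a point $x \in \ob{\ho}$, pick $y \in \partial\om$ with $|x-y|<\ho$ (such a $y$ exists by definition of $\ob{\ho}$), apply \eqref{eq:coverbryol} to find an index $l$ with $y \in \overline{\olh}$, and then verify $x \in \olt$ for that $l$ by working in the local coordinates defined by the isometry $(T_l R_l)^{-1}$. Since $T_lR_l$ is an isometry, the Euclidean distance $|x-y|<\ho$ is preserved, yielding $|x'-y'| \le |x-y| < \ho$ and $|x_d-y_d| < \ho$ for the components in local coordinates.

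First I would locate $y$ in these coordinates. Writing $y=(y',y_d)$, the membership $y \in \overline{\olh}$ gives $y' \in \overline{\olhd}$ and $2\ho \le y_d \le f_l(y')$. Because $\dist{y',\partial\old} \ge 2\ho > 0$, the point $y'$ cannot lie on $\partial\old$, so $y' \in \old$ (the interior). Combined with $y_d \ge 2\ho > 0$ and $y \in \partial\om$, the identification $\ol = O_l \cap \om = \{x' \in \old,\ 0 < x_d < f_l(x')\}$ forces $y_d = f_l(y')$, and in particular $y_d > 3\ho$ since $f_l$ takes values in $(3\ho,\infty)$.

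Next I would verify the three defining inequalities for $x = (x',x_d) \in \olt$ in local coordinates. The condition $x' \in \oltd$ follows from the triangle inequality $\dist{x',\partial\old} \ge \dist{y',\partial\old} - |x'-y'| > 2\ho - \ho = \ho$. The bound $x_d > \ho$ follows from $x_d > y_d - \ho > 3\ho - \ho = 2\ho$. The remaining inequality $x_d < f_l(x')$ uses $x \in \om$: because $|x-y|<\ho$ and the open chart $O_l$ contains $\overline{\olh}$ with room to spare (as is standard in the H\"older chart construction underlying Definition \ref{de:gahodomain}), the point $x$ itself belongs to $O_l$, so $x \in O_l \cap \om = \ol$, and the local description of $\ol$ forces $0 < x_d < f_l(x')$.

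The main obstacle is this last step. The direct H\"older estimate $|f_l(x') - f_l(y')| \le c\,\noi{x'-y'}^\gamma < c\,\ho^\gamma$ is of order $\ho^\gamma \gg \ho$ when $\gamma<1$ and $\ho$ is small, so one cannot compare $x_d$ with $f_l(x')$ through $f_l(y')$ by a direct chain of inequalities. Instead the argument has to rely on the qualitative fact that $x$ lies in the chart $O_l$ and hence inherits the local product structure of $\om$ encoded by the subgraph description in Definition \ref{de:gahodomain}. Once this containment $x \in O_l$ is secured, all three conditions close, and $x \in \olt$ as required.
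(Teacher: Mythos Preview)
Your argument is correct and follows the same route as the paper's proof: find $y\in\partial\om$ near $x$, place it in some $\overline{\olh}$ via \eqref{eq:coverbryol}, pass to local coordinates, and verify the defining conditions of $\olt$ by the triangle inequality. You in fact go slightly further than the paper by explicitly addressing the condition $x_d<f_l(x')$ through the containment $x\in O_l\cap\om=\ol$, whereas the paper's proof checks only $x'\in\oltd$ and $x_d>\ho$ and leaves this last condition implicit.
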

\begin{proof}
Let $x\in\om$ with $\dist{x,\partial\om}<\ho$. Then there exists $y\in\partial\om$ with $|x-y|=\dist{x,\partial\om}$. By in Definition \ref{de:gahodomain}(ii), we know \eqref{eq:coverbryol},
so there exists $l\in\{1,\ldots,L\}$ with $y\in\partial\olh$. Without loss of generality, we may assume that $T_l$, $R_l$ are the identity map. Hence, we have $y=\left(y',f_l(y')\right) $ with $y'\in\olhd$. It follows that $\dist{y',\partial\old}>2\ho$ by  \eqref{eq:olhddef}. By the triangle inequality, we obtain $\dist{x',\partial\old}>\ho$, so $x'\in\old$. Furthermore, by the definition of $f_l$, we have $f_l(y')>3\ho$. Again, by the triangle inequality, this implies $x_d>2\ho>\ho$. It follows that $x\in\olt$.
\end{proof}

\begin{definition}[$h_{x,l}$, $\nos{\cdot}{p,\be}{}$ and $\no{\cdot}{p,\be}$]\label{de:hxlseminorm}
Let $d\ge2$ and let $\om\subset\R^d$ be a $\ga$-Hölder domain with  constant $c>0$ and parameters $\ho>0$ and $L\in\N$ for some $\ga\in (0,1]$. In the following, we use the notation from Definition \ref{de:gahodomain}.
\begin{enumerate}[(i)]
\item For any $l\in\{1,\ldots,L\}$ and $x=T_lR_l(x',x_d)\in\ol$, define
\begin{equation}\label{eq:hdef}
h_x: =h_{x,l}: =f_l(x')-x_d>0\,.
\end{equation}
Moreover, for any $x\in\bigcup_{l=1}^L\ol$, we let
\begin{equation}
\hxm: =\min_{l\in\{1,\ldots,L\} \textrm{with } x\in\ol}h_{x,l}\,.
\end{equation}
\item For $\be>0$ and $p\in[1,\infty)$, define the seminorm $\nos{\cdot}{p,\be}{}$ by 
\begin{equation}
\nos{f}{p,\be}{p}: =\inx{\bigcup_{l=1}^L\ol}{x	}\hxm^{-\be}|f(x)|^p
\end{equation}
for all measurable functions $f:\om\to\bC$.
\item Define the norm $\no{\cdot}{p,\be}$ by
\begin{equation}
    \no{f}{p,\be}: =\no{f}{\frac{d}{2},\om}+\nos{f}{p,\be}\qquad \textrm{if }d\ge3\,, \quad \no{f}{p,\be}: =\no{f}{\B,\om}+\nos{f}{p,\be}\qquad \textrm{if }d=2
\end{equation}
for all measurable functions $f:\om\to\bC$. See \cite[p.~1]{frank2019bound} for the definition of $\no{f}{\B,\om}$.
\end{enumerate}
\end{definition}

\subsection{Oscillatory domains}\label{ss:oscdo}
In this subsection we define oscillatory domains and prove several properties of oscillatory domains.
\begin{definition}[$c_0$, $c_1$ and $c_2$]\label{de:c0c1} Let $c>0$ be as in Definition \ref{de:hxlseminorm}. We define
\begin{equation}
c_0: =\left[\min\left(\frac{1}{c_1},\frac{2^\ga}{64c},\frac{1}{2^{\ga+3}c}\right)\right]^\gai, \quad c_1: = 16, \quad c_2: = c_0c_1^\gai.
\end{equation}
\end{definition}


\begin{definition}[Oscillatory domain $D$]\label{de:smalldomain}
For every $l\in\{1,\ldots,L\}$, $x\in\olt$ and $\de\in(0,\dz] $ define
\begin{equation}
a: =a_{x}: =a_{x}(\de): =a_{x,l}(\de): =\min\left(\de,c_0\max\left(h_{x,l},c_1\de\right)^{\frac{1}{\ga}}\right)\, ,
\end{equation}
where $c_0 $, $c_1 $ were defined in Definition \ref{de:c0c1}. Define $D: =D_x: =D_x(\de): =D_{x,l}(\de) $ by
\begin{equation}\label{eq:ddef}
D: =T_lR_l\left\{(y',y_d)\in\R^{d-1}\times\R\Bigm|  |y'-x'|<\frac{1}{2}a,\ |y_d-x_d|<\frac{1}{2}\de,\ f_l(y')>y_d\right\rbrace\, .
\end{equation}
\end{definition}

\begin{lemma}\label{le:Dwelldef}
Let $l\in\{1,\ldots,L\}$, $x\in\olt$ and $\de\in(0,\dz] $. Then
\begin{enumerate}[(i)]
\item $D=D_x$ is well-defined and $D\subset\ol$.
\item For all $y=T_lR_l(y',y_d)\in D $, we have
\begin{equation}\label{eq:flgede14}
f_l(y')\ge x_d-\frac{\de}{4}\, .
\end{equation}
\end{enumerate}
\end{lemma}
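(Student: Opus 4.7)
The plan is to address the two claims in turn, with part (ii) carrying the substantive content.

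For part (i), the task is to check that every $(y',y_d)$ in the set defining $D$ actually fits the subgraph description \eqref{eq:ol} of $\ol$. The two non-trivial conditions to verify are $y'\in\old$ (so that $f_l(y')$ makes sense) and $y_d>0$; the inequality $y_d<f_l(y')$ is built into the definition of $D$. Since $x'\in\oltd$ one has $\dist{x',\partial\old}>\ho$, while $|y'-x'|<a/2\le\de/2\le\dz/2\le\ho/(2\sqrt{d})<\ho$ using $a\le\de$, $\de\le\dz$, and the definition of $\dz$ in \eqref{eq:dzdef}; hence $y'\in\old$. Similarly, from $x\in\olt$ one has $x_d>\ho$, so $y_d>x_d-\de/2>\ho/2>0$. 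This gives $D\subset\ol$ via \eqref{eq:ol}.

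For part (ii), Hölder continuity of $f_l$ together with $\noi{y'-x'}\le|y'-x'|<a/2$ yields
\begin{equation*}
|f_l(y')-f_l(x')|\le c\,(a/2)^\ga,
\end{equation*}
while $f_l(x')=x_d+h_{x,l}$ by the definition of $h_{x,l}$ in \eqref{eq:hdef}. Part (ii) therefore reduces to proving the inequality $c(a/2)^\ga\le h_{x,l}+\de/4$. From the definition of $a$ I have $a\le c_0\max(h_{x,l},c_1\de)^{1/\ga}$, and the trivial bound $\max(A,B)\le A+B$ gives $a^\ga\le c_0^\ga h_{x,l}+c_0^\ga c_1\de$, so that
\begin{equation*}
c\,(a/2)^\ga\le\frac{c\,c_0^\ga}{2^\ga}\,h_{x,l}+\frac{c\,c_0^\ga c_1}{2^\ga}\,\de.
\end{equation*}
The two bounds $c_0^\ga\le 1/(2^{\ga+3}c)$ and $c_0^\ga\le 2^\ga/(64c)$ encoded in Definition \ref{de:c0c1} are calibrated precisely so that these two prefactors are at most $1/8$ and $1/4$ respectively; this yields $c(a/2)^\ga\le h_{x,l}/8+\de/4$, which is stronger than needed.

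The main obstacle is a bookkeeping one: the nested min-max structure of $a$ naïvely invites a four-case analysis, depending on which argument of the $\min$ and which argument of the $\max$ is attained. The trick I will use is to discard the $\min$ with $\de$ entirely, keeping only $a\le c_0\max(\cdot)^{1/\ga}$, and then to linearize the $\max$ so that the two constraints on $c_0^\ga$ can each be applied to exactly one summand; no case analysis is then required. Part (i) is a routine consequence of the parameter hierarchy $a\le\de\le\dz\le\ho/\sqrt{d}$.
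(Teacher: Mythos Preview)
Your proof is correct and follows the same outline as the paper: part (i) is handled identically via the chain $a\le\de\le\dz\le\ho/\sqrt{d}$, and part (ii) reduces in both cases to showing $c(a/2)^\ga\le h_{x,l}+\de/4$ from the H\"older estimate on $f_l$. The only difference is cosmetic: the paper verifies this last inequality by a short case analysis on whether $a=\de$, $a=c_0h_{x,l}^{1/\ga}$, or $a=c_0(c_1\de)^{1/\ga}$, invoking one of the two bounds in \eqref{eq:el4} in each case, whereas you bypass the cases by using $a^\ga\le c_0^\ga\max(h_{x,l},c_1\de)\le c_0^\ga h_{x,l}+c_0^\ga c_1\de$ and applying both constraints on $c_0^\ga$ simultaneously. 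Your linearization is slightly cleaner and yields the marginally stronger bound $c(a/2)^\ga\le h_{x,l}/8+\de/4$, but the two arguments are otherwise interchangeable.
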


\begin{proof}
{\bf Proof of (i). } Let $y\in D $. Then $y=T_lR_l(y',y_d) $ for some $(y',y_d)\in\R^{d-1}\times\R$ with $|y'-x'|<\frac{1}{2}a\le\frac{1}{2}\ho,\ |y_d-x_d|<\frac{1}{2}\de\le\frac{1}{2}\ho$ and $ f_l(y')>y_d$. Here we used that $a\le\de\le\dz\le\frac{\ho}{\sqrt{d}}\le\ho$. Since $x'\in\oltd$ and $|y'-x'|<\frac{1}{2}\ho$, we get $y'\in\old$. Thus, $f_l(y')$ is well-defined and therefore, $D $ is well-defined. Since $x\in\olt$, we have $x_d>\ho$, so using $|y_d-x_d|<\frac{1}{2}\ho$, we get $y_d>0$. To sum up, we have shown that $y'\in\old$ and $ f_l(y')>y_d>0$, so $y\in\ol$ by \eqref{eq:ol}.
\\
\\
{\bf Proof of (ii). } Let $y=T_lR_l(y',y_d)\in D $. Then 
\begin{equation}
|f_l(x')-f_l(y')|\le c\noi{x'-y'}^\ga<c\left(\frac{1}{2}a\right)^\ga=\frac{c}{2^\ga}a^\ga\, ,
\end{equation}
so
\begin{align}\label{eq:flyge}
f_l(y')\ge f_l(x')-|f_l(x')-f_l(y')|\ge f_l(x')-\frac{c}{2^\ga}a^\ga=h+x_d-\frac{c}{2^\ga}a^\ga\, ,
\end{align}
where we used \eqref{eq:hdef}. Thus, in order to show \eqref{eq:flgede14}, it suffices to show that
\begin{equation}\label{eq:dhaggez}
\frac{\de}{4}+h-\frac{c}{2^\ga}a^\ga\ge0\, ,
\end{equation}
If $a=c_0h^{\frac{1}{\ga}}$ or $a=c_0(c_1\de)^{\frac{1}{\ga}}$, then \eqref{eq:dhaggez} holds by
\begin{equation}\label{eq:el4}
\max \Big( 2^{\ga+1}cc_0^\ga, \frac{cc_0^\ga c_1}{2^\ga} \Big) \le\frac{1}{4}\, .
\end{equation}
If $a=\de$, then $\de\le c_0h^{\frac{1}{\ga}}$ by the definition of $a$, and it reduces to the case $a=c_0h^{\frac{1}{\ga}}$. \end{proof}



\begin{lemma}\label{le:propD}
Let $l\in\{1,\ldots,L\}$, $x\in\olt$ and $\de_x\in(0,\dz] $. Then the following holds true:
\begin{enumerate}[(i)]
\item If $a_x=\de_x$ or $a_x=c_0h_x^{\frac{1}{\ga}}$, then $D_x$ is a cuboid, namely
\begin{equation}
D_x: =T_lR_l\left\{(w',w_d)\in\R^{d-1}\times\R\bigm|  |w'-x'|<\frac{1}{2}a_x,\ |w_d-x_d|<\frac{1}{2}\de_x\right\rbrace\, .
\end{equation}
\item In addition, let $y,z\in\olt$ with $\de_y,\de_z\in(0,\dz] $ and assume that
\begin{equation}
a_w=c_0h_w^{\frac{1}{\ga}}\ \textrm{for all } w\in\left\lbrace x,y,z\right\rbrace \, .
\end{equation}
Furthermore, assume that $\de_x\le2\de_y$, $\de_z\le2\de_y$, $D_x\cap D_y\neq\emptyset$ and $D_x\cap D_z\neq\emptyset$. 
Then
\begin{equation}\label{eq:hcompforacohga}
\frac{1}{2}h_y\le h_w\le2h_y\ \textrm{for all } w\in\left\lbrace x,z\right\rbrace \, .
\end{equation}
\item If $a_x=c_0h_x^{\frac{1}{\ga}}$, then
\begin{equation}
\frac{1}{2}h_x\le h_w\le2h_x\ \textrm{for all } w\in D_x \, .
\end{equation}
\item If $a_x=c_2\de_x^{\frac{1}{\ga}}$, then
\begin{equation}
|h_w-h_x|\le\de_x\ \textrm{for all } w\in D_x \, .
\end{equation}
\item If $a_x=c_0\max\left( h_x,c_1\de\right)^{\frac{1}{\ga}}$ and  $f:\om\to\bC$ is measurable, then
\begin{equation}
\nos{f}{\pt,\be,D_x}{\pt}\gs\max\left( h_x,c_1\de_x\right)^{-\be}\nor{f}{\pt,D_x}{\pt} \, .
\end{equation}
\end{enumerate}
\end{lemma}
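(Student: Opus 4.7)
The plan is to handle the five claims in turn, with the common thread being careful use of the H\"older estimate $|f_l(w') - f_l(x')| \le c |w'-x'|_\infty^\ga$ together with the arithmetic inequalities collected in \eqref{eq:el4}.

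For \emph{part (i)}, I would show that under either hypothesis on $a_x$ the full cuboid $\{(w',w_d) : |w'-x'|_\infty < a_x/2,\ |w_d - x_d| < \de_x/2\}$ lies below the graph of $f_l$, so the constraint $f_l(y') > y_d$ in \eqref{eq:ddef} is automatic. Since $f_l(x') = h_x + x_d$, this reduces to verifying $h_x - c(a_x/2)^\ga - \de_x/2 > 0$. When $a_x = c_0 h_x^{1/\ga}$, the definition of $a_x$ forces $h_x \ge c_1 \de_x = 16\de_x$, and \eqref{eq:el4} controls $c(a_x/2)^\ga$ by a small fraction of $h_x$; when $a_x = \de_x$, the inequality $\de_x \le c_0 \max(h_x, c_1\de_x)^{1/\ga}$ lets me bound $\de_x^\ga$ by $c_0^\ga(h_x + c_1\de_x)$, after which \eqref{eq:el4} again closes the argument.

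For \emph{part (ii)}, I would pick $p \in D_x \cap D_y$, which gives $|x'-y'|_\infty \le (a_x+a_y)/2$ and $|x_d - y_d| \le (\de_x + \de_y)/2 \le 3\de_y/2$ (using $\de_x \le 2\de_y$). The H\"older bound together with subadditivity $(a_x+a_y)^\ga \le a_x^\ga + a_y^\ga$ then yields $|h_x - h_y| \le c c_0^\ga(h_x + h_y)/2^\ga + 3\de_y/2$. Since $a_y = c_0 h_y^{1/\ga}$ forces $\de_y \le h_y/c_1 = h_y/16$, and $cc_0^\ga/2^\ga \le 1/8$ by \eqref{eq:el4}, solving the resulting linear inequality gives $h_x \in [h_y/2, 2h_y]$. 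The same argument applied to $z$ completes the proof.

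\emph{Parts (iii) and (iv)} are the single-domain analogues of (ii): for $w \in D_x$ one has $|h_w - h_x| \le c(a_x/2)^\ga + \de_x/2$. Substituting $a_x = c_0 h_x^{1/\ga}$ (which forces $\de_x \le h_x/16$) gives $|h_w - h_x| \le 5h_x/32 < h_x/2$, proving (iii); substituting $a_x = c_2 \de_x^{1/\ga} = c_0(c_1\de_x)^{1/\ga}$ and using $cc_0^\ga c_1/2^\ga \le 1/4$ from \eqref{eq:el4} gives $|h_w - h_x| \le 3\de_x/4 \le \de_x$, proving (iv). \emph{Part (v)} is then immediate by case analysis: if $\max(h_x, c_1\de_x) = h_x$ then (iii) applies, while if $\max(h_x, c_1\de_x) = c_1\de_x$ then $h_x < c_1\de_x$ and (iv) yields $h_{w,l} \le h_x + \de_x < 17\de_x$. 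In both cases $h_{w,l} \le 2\max(h_x, c_1\de_x)$ for all $w \in D_x$, so $h_{w,\min}^{-\be} \gs \max(h_x, c_1\de_x)^{-\be}$; integrating $|f|^{\pt}$ against this pointwise lower bound over $D_x$ gives the stated inequality.

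The main obstacle is the careful bookkeeping of constants: tracking how the $\ga$-dependent exponents propagate through the H\"older estimates and ensuring the thresholds in \eqref{eq:el4} actually suffice. Nothing is deep here, but the inequalities are essentially tight, and the precise values of $c_0, c_1, c_2$ in Definition \ref{de:c0c1} were chosen so that each case closes with room to spare.
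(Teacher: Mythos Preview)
Your plan for parts (i), (iii), (iv), and (v) is essentially the paper's argument and is correct. The one genuine gap is in part (ii), specifically the case $w=z$.

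Your sentence ``The same argument applied to $z$ completes the proof'' does not work as stated. The hypotheses give $D_x\cap D_y\neq\emptyset$ and $D_x\cap D_z\neq\emptyset$, but \emph{not} $D_y\cap D_z\neq\emptyset$, so you cannot rerun the $x$-argument with $z$ in place of $x$. If instead you compare $z$ to $x$ directly (via a point in $D_x\cap D_z$), you obtain $|h_x-h_z|\le (h_x+h_z)/8 + 2\de_y$, and feeding in $\de_y\le h_y/16$ together with the already-established $h_x\le 2h_y$ yields only $h_z\le 19h_y/7$, not $h_z\le 2h_y$; the factors compound and the inequality does not close. The paper avoids this by working at the level of coordinates rather than chaining $h$-estimates: it bounds $|z'-y'|\le|z'-x'|+|x'-y'|$ and $|z_d-y_d|\le|z_d-x_d|+|x_d-y_d|$ directly, then uses $h_x\le 2h_y$ to control $a_x=c_0h_x^{1/\ga}$ in the resulting expression. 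This produces $|h_z-h_y|\le(2^{\ga+1}cc_0^\ga+4/c_1)\max(h_y,h_z)$, and the constants in \eqref{eq:el4} (note the factor $2^{\ga+1}$ rather than $2^{-\ga}$) are set precisely so that this coefficient is at most $1/2$.

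A minor point on (i): your treatment of the case $a_x=\de_x$ via $\de_x^\ga\le c_0^\ga(h_x+c_1\de_x)$ leads to $7h_x/8>3\de_x/4$, which is not automatic. The clean way (and what the paper does) is to first observe that $a_x=\de_x$ together with $c_2=c_0c_1^{1/\ga}\le 1$ and $\de_x\le\dz<1$ forces $\max(h_x,c_1\de_x)=h_x$, i.e.\ $h_x\ge c_1\de_x$; then both subcases of (i) reduce to the same estimate.
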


\begin{proof}
For simplicity of notation, we write $a:=a_x$, $\de:=\de_x$, $h:=h_x$,  $D:=D_x$ in the proof. 
\\ 
{\bf Proof of (i). } Since $c_0c_1^\gai\le1$ and $\dz\le1$, we know by the definition of $a$ and by $a=\de$ or $a=c_0h^{\frac{1}{\ga}}$ that $h\ge c_1\de$. For all $w=T_lR_l(w',w_d)$ with $|w'-x'|<\frac{1}{2}a$ and $ |w_d-x_d|<\frac{1}{2}\de$, we have
\begin{align}
\begin{split}\label{eq:flwge}
f_l(w')&\ge h+x_d-\frac{c}{2^\ga}a^\ga\ge h+x_d-\frac{c}{2^\ga}\left(c_0h^{\frac{1}{\ga}}\right)^\ga\\
&=x_d+h\left(1-\frac{cc_0^\ga}{2^\ga}\right)\ge x_d+\frac{1}{2}h\ge x_d+\frac{c_1}{2}\de= x_d+8\de\ge w_d+7\de \, ,
\end{split}
\end{align}
where we used \eqref{eq:flyge} in the first step, $a=\de$ or $a=c_0h^{\frac{1}{\ga}}$ in the second step, \eqref{eq:el4} in the fourth step, and $c_1=16$ in the second last step. By \eqref{eq:ddef}, we get $w\in D$.
\\
{\bf Proof of (ii). } First note that by $a_w=c_0h_w^{\frac{1}{\ga}}$, we have $h_w\ge c_1\de_w$ for all $w\in\left\lbrace x,y,z\right\rbrace$. Also note that in order to show \eqref{eq:hcompforacohga}, it suffices to show that
\begin{equation}\label{eq:hwhymaxineq}
|h_w-h_y|\le\frac{1}{2}\max\left(h_w,h_y\right)\, .
\end{equation}
%

Let us begin by showing \eqref{eq:hwhymaxineq} for $w=x$. Since $D_x\cap D_y\neq\emptyset$, we have  
\begin{equation}
|x_d-y_d|<\ha\de_x+\ha\de_y\le\ha2\de_y+\ha\de_y\le2\de_y
\end{equation}
and
\begin{equation}
|x'-y'|<\ha a_x+\ha a_y=\frac{c_0}{2}\left(h_x^\gai+h_y^\gai\right)\le c_0\left[\max\left(h_x,h_y\right)\right]^\gai\, .
\end{equation}
We obtain
\begin{align*}
|h_x-h_y|&=|f_l(x')-x_d-\left(f_l(y')-y_d\right)|\le |f_l(x')-f_l(y')|+|x_d-y_d|\le c|x'-y'|^\ga+2\de_y\\
&\le cc_0^\ga \max\left(h_x,h_y\right)+\frac{2}{c_1}h_y\le\left(cc_0^\ga +\frac{2}{c_1}\right)\max\left(h_x,h_y\right)\le\ha\max\left(h_x,h_y\right)\, .
\end{align*}
Here we used $h_y\ge c_1\de_y$ in the second last step and we used \eqref{eq:el4} and $c_1=16$ in the last step. 
This shows \eqref{eq:hwhymaxineq} for $w=x$.
\\
\\
Next, let us show \eqref{eq:hwhymaxineq} for $w=z$. Since $D_x\cap D_y\neq\emptyset$ and $D_x\cap D_z\neq\emptyset$, we have 
\begin{equation}
|z_d-y_d|\le |z_d-x_d|+|x_d-y_d|< \ha\de_z+\de_x+\ha\de_y\le\ha2\de_y+2\de_y+\ha\de_y\le4\de_y
\end{equation}
and
\begin{align*}
|z'-y'|&\le |z'-x'|+|x'-y'|<\ha a_z+ a_x+\ha a_y\le2\max\left(a_x,a_y,a_z\right)\\
&=2c_0\max\left(h_x^\gai,h_y^\gai,h_z^\gai\right)\le 2c_0\max\left(\left(2h_y\right)^\gai,h_y^\gai,h_z^\gai\right)\le 2^{1+\gai}c_0\max\left(h_y^\gai,h_z^\gai\right)\, ,
\end{align*}
where we used $h_x\le2h_y$ in the second last step. We obtain
\begin{align*}
&\qquad |h_z-h_y|=|f_l(z')-z_d-\left(f_l(y')-y_d\right)|\le |f_l(z')-f_l(y')|+|z_d-y_d|\\
&\le c|z'-y'|^\ga+4\de_y\le c\left(2^{1+\gai}c_0\left[\max\left(h_x,h_y\right)\right]^\gai\right)^\ga+4\de_y\\
&\le 2^{\ga+1}cc_0^\ga \max\left(h_x,h_y\right)+\frac{4}{c_1}h_y\le\left(2^{\ga+1}cc_0^\ga +\frac{4}{c_1}\right)\max\left(h_x,h_y\right)\le\ha\max\left(h_x,h_y\right)\, .
\end{align*}
where we used \eqref{eq:el4} and $c_1=16$ in the last step.
This shows \eqref{eq:hwhymaxineq} for $w=z$.
\\
{\bf Proof of (iii). } Let $w\in D_x $. By $a=c_0h^{\frac{1}{\ga}}$, we have $h\ge c_1\de$. As in the proof of (ii), we get
\begin{align*}
&\qquad |h_w-h|=|f_l(w')-w_d-\left(f_l(x')-x_d\right)|\le |f_l(w')-f_l(x')|+|w_d-x_d|\\
&\le c|w'-x'|^\ga+\ha\de\le c\left(\ha a\right)^\ga+\frac{1}{2c_1}h=c\left(\ha c_0h^{\frac{1}{\ga}}\right)^\ga+\frac{1}{2c_1}h=\frac{cc_0^\ga}{2^\ga}h+\frac{1}{2c_1}h\le\ha h\, ,
\end{align*}
where we used \eqref{eq:el4} and $c_1=16$ in the last step.
This shows \eqref{eq:hcompforacohga}.
\\
\\
{\bf Proof of (iv). } Let $w\in D_x $. By $a=c_2\de^{\frac{1}{\ga}}$, we have $h\le c_1\de$. As in the proof of (iii), we get
\begin{align*}
|h_w-h|&=|f_l(w')-w_d-\left(f_l(x')-x_d\right)|\le |f_l(w')-f_l(x')|+|w_d-x_d|\\
&\le c|w'-x'|^\ga+\ha\de\le c\left(\ha a\right)^\ga+\ha\de=\frac{cc_0^\ga c_1}{2^\ga}\de+\ha\de\le\de\, ,
\end{align*}
where we used \eqref{eq:el4} and $c_1=16$ in the last step.
\\
\\
{\bf Proof of (v). } Let $w\in D_x $. Then
\begin{align*}
h_w&=|f_l(w')-w_d|\le h+|f_l(w')-w_d-\left(f_l(x')-x_d\right)|\le h+|f_l(w')-f_l(x')|+|w_d-x_d|\\
&\le h+c|w'-x'|^\ga+\ha\de\le h+c\left(\ha c_0\max\left( h_x,c_1\de\right)^{\frac{1}{\ga}}\right)^\ga+\ha\de\ls\max\left( h,c_1\de\right)\, ,
\end{align*}
By the definition of $\vert f\vert_{\pt,\be,D}^{\pt}$, we have
\begin{equation}
\vert f\vert_{\pt,\be,D}^{\pt}=\inx{D}{y}\hym^{-\be}|f(y)|^{\pt}\gtrsim \max\left( h,c_1\de\right)^{-\be}\nor{f}{\pt,D}{\pt}\, .
\end{equation}
\end{proof}

\section{Estimates for oscillatory domains}\label{s:estosc}

\subsection{Poincaré-Sobolev inequality for oscillatory domains}\label{ss:psoscdo}
In this subsection we prove a Poincaré-Sobolev inequality for oscillatory domains. The following Lemma and its proof is a version of \cite{labutin} for oscillatory domains (Corollary \ref{co:pssmallM}).
\begin{lemma}[Sobolev inequality for oscillatory domains]\label{le:sobolevdirichlet}
Let $c>0$, $\ga\in(0,1)$ and $\de\in(0,1)$. Let $\at>0$ and let $f:\left[-\at/2, \at/2\right]^{d-1}\to\left(0,\de\right)
$
be such that
\begin{equation}
|f(x')-f(y')|\le c\noi{x'-y'}^\ga\ \textrm{for all } x',y'\in\left[-\tfrac{\at}{2},\tfrac{\at}{2}\right]^{d-1}\,.
\end{equation}
Define $\qs\in(2,\infty)$ by
\begin{equation}
\frac{1}{\qs}:=\ha-\frac{1}{\dgo}
\end{equation}
and define
\begin{equation}
\td: =\left\{(x',x_d)\in\left[-\tfrac{\at}{2},\tfrac{\at}{2}\right]^{d-1}\times(0,\de)\,\middle\vert\, f(x')>x_d\right\rbrace\, .
\end{equation}
Define the ``straight part I'm nots of the boundary of $\td$'' by
\begin{equation}
\tb: =\partial \td\setminus\left\{(x',x_d)\in\left[-\tfrac{\at}{2},\tfrac{\at}{2}\right]^{d-1}\times(0,\de)\,\middle\vert\, f(x')=x_d\right\rbrace\, .
\end{equation} 
Then there exists a constant $\cs=\cs(d,c,\ga)>0$ such that for all $u\in H^1(\td)$ with $u\restriction_\tb\equiv0$ 
in the trace sense, we have
\begin{equation}\label{eq:sobolevdirichlet}
\nor{u}{\qs,\td}{2}\le\cs\nor{\nabla u}{2,\td}{2}\,.
\end{equation}
\end{lemma}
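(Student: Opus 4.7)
The claim is a scale–invariant Sobolev embedding on a domain with a $\gamma$-Hölder upper boundary; the modified exponent $q^*$ corresponds to an ``effective dimension'' $n := (d-1)/\gamma + 1 > d$, reflecting the fact that the $(d-1)$ transverse directions are each stretched by a factor $1/\gamma$ along the Hölder graph. The first move is to use the vanishing trace hypothesis $u|_{\tilde B} \equiv 0$ to extend $u$ by zero across the flat portion $\tilde B$ of $\partial \tilde D$ (bottom face $\{x_d=0\}$ and lateral sides $\{|x'_i|=\tilde a/2\}$). The resulting function $\bar u$ lies in $H^1$ of the open vertical cylinder $C := (-\tilde a/2,\tilde a/2)^{d-1} \times (0,\infty)$, with support concentrated below the graph of $f$; no jump is introduced across $\tilde B$, and the jump across the graph will be handled separately.

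Next I would rescale to a normalized geometry. Introducing the natural Hölder scale $\lambda := (\delta/c)^{1/\gamma}$ and setting $(y', y_d) := (x'/\lambda,\, x_d/\delta)$, the rescaled upper boundary is the graph of a function whose Hölder constant is of order unity, so we may assume $\tilde a$ and $\delta$ are comparable to $\lambda$ and $1$, respectively; if $\tilde a > \lambda$, the cube $[-\tilde a/2,\tilde a/2]^{d-1}$ is partitioned into sub-cubes of side $\sim\lambda$ and the estimate is assembled from the cube-wise ones. The Sobolev exponent $q^*$ is tuned to the identity $(2/q^*) = 1 - 2/n$, which is precisely the relation under which the volume factor $|\tilde D|^{1/q^* - 1/2^*}$ combined with the rescaling Jacobian is dimensionless in the parameters $\tilde a, \delta$.

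On the normalized domain, I would establish the inequality via a Whitney decomposition of $\tilde D$ into dyadic cubes $\{Q_k\}$ with $\ell(Q_k) \sim \mathrm{dist}(Q_k, \mathrm{graph}(f))$. On each $Q_k$, the classical Sobolev inequality with exponent $2^* = 2d/(d-2)$ applies with a universal constant, controlled using $\bar u$ together with Poincaré on cubes near $\tilde B$ (where $\bar u$ vanishes) and cube-wise $L^{2^*}$ bounds elsewhere. The key combinatorial input from the Hölder regularity is that the number of Whitney cubes of side $\sim 2^{-j}$ is $\lesssim (2^{-j})^{-(d-1)/\gamma}$ rather than the Lipschitz count $(2^{-j})^{-(d-1)}$, because a Hölder graph of oscillation $2^{-j}$ extends over a horizontal range of size $\sim (2^{-j}/c)^{1/\gamma}$. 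Summing the local Sobolev estimates dyadically and applying discrete Hölder / Marcinkiewicz interpolation in the height variable then converts the dimensional counting into the improved exponent $q^*$ with effective dimension $n$.

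The principal obstacle is the combinatorial bookkeeping of the Whitney sum: the counting of cubes against the Hölder scale, the careful choice of weights in the discrete Hölder inequality, and the verification that the resulting constant depends only on $d$, $c$, and $\gamma$ but not on $\tilde a$ or $\delta$. This is where the exponent $(d-1)/\gamma + 1$ is forced by the algebra, and it is the step that cannot be ``black-boxed'' from a classical reference. An alternative—but equivalent—approach would be to replace the Whitney sum by a reflection across the graph of $f$ and apply classical Sobolev to the reflected function on a doubled domain, but this requires additional care because a merely $\gamma$-Hölder graph does not give rise to a bi-Lipschitz reflection, so only a weaker $H^1$ bound is obtained in this direct way.
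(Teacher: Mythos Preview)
Your Whitney–decomposition plan is a genuinely different route from the paper's, and as written it has a real gap at the gluing step. The paper (following Labutin) proceeds by a pointwise integral representation: after extending $u$ by zero across $\tilde B$, for each $\tilde x\in\tilde D$ one integrates $\partial_t u(\tilde x'+y'\psi(t),\,\tilde x_d-t)$ along $t\in(0,\delta)$ with $\psi(s)=c^{-1/\gamma}s^{1/\gamma}$ and $|y'|<1$. The H\"older condition on $f$ guarantees these curves never exit $\tilde D$ through the graph of $f$, only through $\tilde B$ where $u$ vanishes. Averaging over $y'$ and changing variables gives
\[
|u(\tilde x)|\;\lesssim\;\bigl(k*|\nabla u|\bigr)(\tilde x),\qquad k(z)=\psi(z_d)^{-(d-1)}\,\mathbf{1}_{\{|z'|<\psi(z_d),\ 0<z_d<\delta\}},
\]
and the weak Young inequality with $k\in L^{r,\infty}$, $1/r=1-1/n$, $n=(d-1)/\gamma+1$, yields the embedding directly. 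No decomposition, no chaining, no case $d=2$ versus $d\ge3$; the ``effective dimension'' $n$ appears simply because the H\"older cone $\{|z'|<\psi(z_d)\}$ has the Riesz-kernel scaling of a Euclidean cone in $\mathbb{R}^n$.

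In your plan, the step ``sum the local Sobolev estimates dyadically and apply discrete H\"older/Marcinkiewicz'' is where the argument is incomplete. Local Sobolev on a Whitney cube $Q_k$ controls only $u-u_{Q_k}$; to bound $u$ itself you must control the means $u_{Q_k}$, and for cubes sitting high in a cusp under the graph this requires a chaining argument connecting $Q_k$ to the bottom face $\{x_d=0\}$ and telescoping the means along the chain. Your sentence ``Poincar\'e on cubes near $\tilde B$ \ldots\ and cube-wise $L^{2^*}$ bounds elsewhere'' does not supply this mechanism, and it is precisely in the chain estimate that the H\"older cube-count $(2^{-j})^{-(d-1)/\gamma}$ must be combined with the chain lengths to force the exponent $q^*$. (Also, invoking $2^*=2d/(d-2)$ excludes $d=2$, which the lemma covers.) A Whitney--chain proof can be made to work, but you have identified the obstacle without overcoming it; the paper's cone-integral argument bypasses it entirely.
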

\begin{proof}
We first consider the case of smooth functions  $u$ and later deduce \eqref{eq:sobolevdirichlet} for all $u\in H^1(\td)$ with $u\restriction_\tb\equiv0$ in the trace sense. Let $u\in C^\infty(\td)\cap H^1(\td)$ with $u\restriction_\tb\equiv0$. Let
\begin{equation} \label{eq:psi-def}
\psi:\left(0,\infty\right)\to\R\,,\quad \psi(s)=c^{-\gai}s^{\gai}\,.
\end{equation}
Fix $\tilde x\in \td$. For simplicity of notation, we shift the coordinate system and reflect the last coordinate by replacing every $x\in\R^d$ by  $(x_1-\tilde x_1,\ldots, x_{d-1}-\tilde x_{d-1},-x_d+\tilde x_d)\in\R^d$, so without loss of generality, we may assume  $\tilde x=0$. Define 
\begin{equation} \label{eq:K-def}
K: =\left\lbrace(x',x_d)\in\R^{d-1}\times\R\,\middle\vert\, 0<x_d\,,\  0\le |x'|<\psi(x_d) \right\rbrace\, .
\end{equation}
and note that $u$ it is well-defined on $K$ since we can extend it by zero on $K\setminus \td$ due to $u\restriction_\tb\equiv0$ 
in the trace sense. Let $y'\in\R^{d-1}$ with $|y'|<1$ and note that $(y'\psi(x_d),x_d)\in K$ for all $x_d\in(0,\de)$. Since $u$ is smooth, we can apply Newton's theorem and  $u((y'\psi(\de),\de))=0$ to obtain
\begin{align*}
-u(0) &=\inu{0}{\de}{x_d}{\frac{\partial}{\partial x_d}u(y'\psi(x_d),x_d)}\\
&=\inu{0}{\de}{x_d}\left(\sum_{j=1}^{d-1}(\partial_ju)(y'\psi(x_d),x_d)y_j'\psi'(x_d)+(\partial_du)(y'\psi(x_d),x_d)\right)\,.
\end{align*}
Here $\psi'$ denotes the derivative of $\psi$ while $y'\in\R^{d-1}$. Integrating over $y'\in B_1^{(d-1)}(0)$ and using the notation $\odo=|B_1^{(d-1)}(0)|$, we get
\begin{align*}
&\qquad-\odo u(0)\\
 &=\inx{B_1^{(d-1)}(0)}{y'}\inu{0}{\de}{x_d}\left(\sum_{j=1}^{d-1}(\partial_ju)(y'\psi(x_d),x_d)\frac{y_j'\psi(x_d)}{\psi(x_d)} \psi'(x_d)+(\partial_du)(y'\psi(x_d),x_d)\right)\\
&=\inx{K}{x}\frac{1}{\psi(x_d)^{d-1}}\left(\sum_{j=1}^{d-1}(\partial_ju)(x)\frac{x_j'}{\psi(x_d)} \psi'(x_d)+(\partial_du)(x)\right)\,,
\end{align*}
where we used the change of variables $x=(y'\psi(x_d),x_d)$. Note that
\begin{equation}
\psi'(s)=\frac{1}{c^\gai}\frac{1}{\ga}s^{\gai-1}\,.
\end{equation}
Thus, if $x=(x',x_d)\in K$, then using $x_d<\de\in(0,1)$, we get
\begin{equation}\label{eq:psdfacbd}
\left\vert \frac{x_j'}{\psi(x_d)} \psi'(x_d)\right\vert\le\left\vert  \psi'(x_d)\right\vert=\frac{1}{c^\gai}\frac{1}{\ga}{x_d}^{\gai-1}\le \frac{1}{c^\gai}\frac{1}{\ga}{\de}^{\gai-1}\ls 1\,.
\end{equation}
Therefore, by \eqref{eq:psdfacbd} and the Cauchy-Schwarz inequality on $\R^d$, 
\begin{align*}
\odo |u(0)|&\le\inx{K}{x}\frac{1}{\psi(x_d)^{d-1}}\sum_{j=1}^{d}\left\vert(\partial_ju)(x)\right\vert\ls \inx{K}{x}\frac{1}{\psi(x_d)^{d-1}}\sqrt{\sum_{j=1}^{d}\left\vert(\partial_ju)(x)\right\vert^2}\sqrt{d}\\
&\ls\inx{K}{x}\frac{1}{\psi(x_d)^{d-1}}|\nabla u(x)|\,.
\end{align*}
We get
\begin{equation}\label{eq:estuzerosimcoo}
|u(0)|\ls\inx{K}{x}\frac{1}{\psi(x_d)^{d-1}}|\nabla u(x)|\,,
\end{equation}
where the constant in the inequality only depends on $\deps,\, c$. Let us now undo the change of variables, which was convenient for the above computation. In the old coordinate system \eqref{eq:estuzerosimcoo} reads
\begin{equation}
|u(\tilde x)|\ls\inx{\td}{y}1_K(\tilde x-y)\frac{1}{\psi(\tilde x_d-y_d)^{d-1}}|\nabla u(y)|=\left(\left(1_K\frac{1}{\psi(\cdot_d)^{d-1}}\right)*|\nabla u|\right)(\tilde x)
\end{equation}
for all $\tilde x\in \td$. Here we used the rotational symmetry of $K$ with respect to the first $d-1$ variables. We define $r\in (1,\infty)$ by
\begin{equation}
1+\frac{1}{\qs}=1+\ha-\frac{1}{\dgo}=:\ha+\frac{1}{r}\,.
\end{equation}
By the weak Young inequality and \eqref{eq:estuzerosimcoo}, we get
\begin{equation}
\no{u}{\qs}\ls\no{\nabla u}{2}\no{1_K\frac{1}{\psi(\cdot_d)^{d-1}}}{r,w}\,.
\end{equation}
Here $\no{\cdot}{r,w}$ denotes the weak $L^r$-norm. 
From the definitions of $\psi$ and $K$ in \eqref{eq:psi-def} and  \eqref{eq:K-def}, we have
\begin{equation}
\no{1_K\frac{1}{\psi(\cdot_d)^{d-1}}}{r,w}\ls1,
\end{equation}
which completes the proof for smooth $u$. The claim for general $u$ follows a standard density argument. \end{proof}

\begin{lemma}[Poincaré inequality for oscillatory domains with Neumann boundary conditions]\label{le:poincare}
Let $\de>0$, let $\ah\in\left(\frac{\de}{3},\de\right)$ and let $f:\left[-\ah/2,\ah/2\right]^{d-1}\to\left[\de/4,\de\right)$
be continuous. Define
\begin{equation}
\hd: =\left\{(x',x_d)\in\left[-\tfrac{\ah}{2},\tfrac{\ah}{2}\right]^{d-1}\times\left(-\tfrac{\de}{4},\de\right)\,\middle\vert\, f(x')>x_d\right\rbrace\, .
\end{equation}
Then there exists a constant $\cp=\cp(d)>0$ such that for all $u\in H^1(\hd)$ with $\int_{\hd} u=0$, we have
\begin{equation}
\cp\frac{1}{\de^2}\nor{u}{2, \hd}{2}\le\nor{\nabla u}{2,\hd}{2}\,.
\end{equation}
\end{lemma}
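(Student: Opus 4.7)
The plan is to exploit the fact that a genuine cuboid sits inside $\hd$: since $f\ge \de/4$ on its domain, the set
\begin{equation*}
Q := \left[-\tfrac{\ah}{2},\tfrac{\ah}{2}\right]^{d-1}\times\left(-\tfrac{\de}{4},\tfrac{\de}{4}\right)
\end{equation*}
is contained in $\hd$. Because $\ah \in (\de/3,\de)$, the cuboid $Q$ has side-lengths all comparable to $\de$ with ratios depending only on $d$, so the classical Neumann Poincar\'e inequality on $Q$ (applied with its own mean) reads
\begin{equation*}
\int_Q |u-\bar u_Q|^2 \le C\de^2\int_Q |\nabla u|^2, \qquad \bar u_Q := \frac{1}{|Q|}\int_Q u,
\end{equation*}
with $C=C(d)$. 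Since $\bar u_{\hd}=0$, the quantity $\nor{u}{2,\hd}{2}$ equals $\min_{c\in\R}\nor{u-c}{2,\hd}{2}$, so it suffices to prove the desired bound for $c=\bar u_Q$ in place of $0$, and in particular to prove
\begin{equation*}
\int_{\hd} |v|^2 \lesssim \int_Q |v|^2 + \de^2 \int_{\hd} |\nabla v|^2
\end{equation*}
for every $v\in H^1(\hd)$, since applying it to $v=u-\bar u_Q$ together with Poincar\'e on $Q$ closes the argument.

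To prove the reduction inequality above, first assume $v$ is smooth by a density argument. Fix $x'\in [-\ah/2,\ah/2]^{d-1}$. The vertical slice of $\hd$ at $x'$ is the interval $I_{x'}=(-\de/4,f(x'))$ of length between $\de/2$ and $5\de/4$. For $y_d\in I_{x'}$ and $s\in(-\de/4,\de/4)$, the fundamental theorem of calculus and Cauchy--Schwarz yield
\begin{equation*}
v(x',y_d)^2 \le 2\,v(x',s)^2 + 2|y_d-s|\int_{I_{x'}} |\partial_d v(x',t)|^2\,\md t \le 2\,v(x',s)^2 + C\de\int_{I_{x'}} |\partial_d v(x',t)|^2\,\md t.
\end{equation*}
Averaging this estimate over $s\in(-\de/4,\de/4)$ (length $\de/2$) and then integrating in $y_d$ over $I_{x'}$ gives a one-dimensional inequality, which I finally integrate over $x'\in[-\ah/2,\ah/2]^{d-1}$ and use $|\partial_d v|^2 \le |\nabla v|^2$ to obtain exactly the reduction inequality.

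Combining these pieces, setting $v=u-\bar u_Q$, yields
\begin{equation*}
\int_{\hd}|u-\bar u_Q|^2 \lesssim \int_Q |u-\bar u_Q|^2 + \de^2\int_{\hd}|\nabla u|^2 \lesssim \de^2\int_{\hd}|\nabla u|^2,
\end{equation*}
and then $\nor{u}{2,\hd}{2}\le\nor{u-\bar u_Q}{2,\hd}{2}$ (since $\bar u_{\hd}=0$ makes $0$ the $L^2$-minimizer) finishes the proof. The steps are essentially routine; the only point where the geometry of the oscillatory top could be a concern is the slice argument, but it is harmless because the reference slab $(-\de/4,\de/4)$ lies strictly below the graph of $f$ (using $f\ge \de/4$), so no integration curve from the reference slab to a point in $I_{x'}$ ever leaves $\hd$. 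The density argument extending the estimate from smooth $v$ to $H^1(\hd)$ uses only that $\hd$ is a bounded Lipschitz (in fact H\"older-$\ga$, but the particular regularity of $f$ is not needed here, only continuity) subgraph.
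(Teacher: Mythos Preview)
Your argument is correct. The paper does not give its own proof of this lemma but simply cites \cite[Lemma~2.6(2)]{netrusov2005weyl}, so your self-contained vertical-slice argument (reduce to the reference cuboid $Q\subset\hd$, use the one-dimensional fundamental theorem along the $x_d$-fibres, then apply the standard Poincar\'e inequality on $Q$ and the $L^2$-minimality of the mean) is an independent and perfectly adequate route. One small remark: the final sentence about density is slightly off---$\hd$ is \emph{not} Lipschitz when $f$ is merely continuous---but you do not actually need a density step. The slice computation works directly for $v\in H^1(\hd)$ via the ACL characterisation: for a.e.\ $x'$ the trace $t\mapsto v(x',t)$ lies in $H^1(I_{x'})$ and is absolutely continuous, so the fundamental theorem of calculus and Cauchy--Schwarz apply as written, and Fubini gives the integrated inequality.
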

\begin{proof}
The proof can be found in \cite[Lemma 2.6(2)]{netrusov2005weyl} for slightly different side lengths of the domain.
\end{proof}

\begin{corollary}[Poincaré-Sobolev inequality for oscillatory domains]\label{co:pssmallM}
Let $M\ge1$, $\de\in(0,1)$ and let $f:\left[-\de/(2M),\de/(2M)\right]^{d-1}\to\left[\de/4,\de\right)$
be such that 
\begin{equation}
|f(x')-f(y')|\le c\noi{x'-y'}^\ga\ \textrm{for all } x',y'\in\left[-\tfrac{\de}{2M},\tfrac{\de}{2M}\right]^{d-1}
\end{equation}
for some $c>0$, $\ga\in(0,1)$. Define $\qs\in(2,\infty)$ by
\begin{equation}
\frac{1}{\qs}=\ha-\frac{1}{\dgo}\,.
\end{equation}
and define
\begin{equation}
D: =\left\{(x',x_d)\in\left(-\tfrac{\de}{2M},\tfrac{\de}{2M}\right)^{d-1}\times(0,\de)\,\middle\vert\, f(x')>x_d\right\rbrace\, .
\end{equation}
Then there exists a constant $\cps=\cps(d,c,\ga)>0$ such that for all $u\in H^1(D)$ with $\int_{D} u=0$, we have
\begin{equation}
\nor{u}{\qs,D}{2}\le\cps M^{(d-1)\left(1-\frac{2}{\qs}\right)}\nor{\nabla u}{2,D}{2}\,.
\end{equation}
\end{corollary}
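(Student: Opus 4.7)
The strategy is to reduce to the isotropic case $M=1$ by a horizontal rescaling, and then combine Lemma \ref{le:sobolevdirichlet} with Lemma \ref{le:poincare} via a cutoff argument. For the scaling reduction, apply the linear change of variables $(y',y_d)=(Mx',x_d)$ and set $v(y):=u(y'/M,y_d)$. This maps $D$ to an oscillatory domain $\td\subset(-\de/2,\de/2)^{d-1}\times(0,\de)$ whose top is the graph of $\tilde f(y'):=f(y'/M)$, which is $\ga$-Hölder with constant $cM^{-\ga}\le c$ since $M\ge 1$. The mean-zero condition $\int_D u=0$ transfers to $\int_{\td} v=0$, and a direct Jacobian computation, using $\partial_{x_j}u=M\,\partial_{y_j}v$ for $j<d$ together with $M\ge 1$, gives
\[
\|u\|_{\qs,D}^{\qs}=M^{-(d-1)}\|v\|_{\qs,\td}^{\qs},\qquad \|\nabla u\|_{2,D}^2\ge M^{-(d-1)}\|\nabla v\|_{2,\td}^2.
\]
Consequently, a Poincaré-Sobolev inequality $\|v\|_{\qs,\td}^2\le C(d,c,\ga)\|\nabla v\|_{2,\td}^2$ for the case $M=1$ implies the full statement with the factor $M^{(d-1)(1-2/\qs)}$.

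For the isotropic case, Lemma \ref{le:poincare} (or a trivial variant matching the exact vertical range of $\td$) yields $\|v\|_{2,\td}^2\le \cp^{-1}\de^2\|\nabla v\|_{2,\td}^2$. Choose a smooth cutoff $\chi:\td\to[0,1]$ vanishing in a neighborhood of the straight boundary $\tb$ of $\td$, equal to $1$ on a substantial interior sub-oscillatory-domain, and satisfying $|\nabla\chi|\ls \de^{-1}$. Since $\chi v$ has trace zero on $\tb$, Lemma \ref{le:sobolevdirichlet} combined with Poincaré gives
\[
\|\chi v\|_{\qs,\td}^2\le \cs\|\nabla(\chi v)\|_{2,\td}^2\ls \|\nabla v\|_{2,\td}^2+\de^{-2}\|v\|_{2,\td}^2\ls \|\nabla v\|_{2,\td}^2.
\]
For the complementary piece $\|(1-\chi)v\|_{\qs,\td}$, whose support lies in a boundary layer of thickness $\sim\de$ around $\tb$, I would cover the layer by a Lipschitz bottom slab $\{y_d<\de/8\}$ and lateral flat strips $\{|y'|_\infty>\de/4,\ y_d<\de/4\}$ (all Lipschitz since $\tilde f\ge\de/4$), together with a few small oscillatory sub-domains capturing the ``upper corners'' where the straight sides meet the Hölder top. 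On the Lipschitz pieces, classical Poincaré-Sobolev controls $v$ up to piecewise averages, and on the corner pieces a localized re-application of Lemma \ref{le:sobolevdirichlet} does the job; in both cases the resulting averages are absorbed into $\|\nabla v\|_{2,\td}$ via Lemma \ref{le:poincare} on $\td$.

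The scaling step is routine, its only subtlety being the observation that $M\ge 1$ and $\ga\in(0,1)$ prevent the rescaled Hölder constant from blowing up. The substantive obstacle is the analysis of $(1-\chi)v$ near the upper corners, where the boundary of $\td$ is genuinely non-Lipschitz: there one cannot invoke classical trace or Sobolev embeddings, so the missing $L^{\qs}$-mass must be recovered by iterating Lemma \ref{le:sobolevdirichlet} across a chain of overlapping translated sub-oscillatory-domains, with careful bookkeeping to ensure that no inverse power of $\de$ or factor of $M$ accumulates in the final constant. Once this corner estimate is established, the Poincaré-Sobolev inequality on $\td$ closes, and the scaling of the first paragraph delivers the factor $M^{(d-1)(1-2/\qs)}$ claimed in the corollary.
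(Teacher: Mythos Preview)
Your scaling reduction to the case $M=1$ is correct and is a genuine alternative to the paper's route: the paper never rescales but instead reflects $f$ and $u$ horizontally $\tm\sim M$ times so that the reflected domain $\hd$ has all side-lengths comparable to $\de$, and then obtains the factor $M^{(d-1)(1-2/\qs)}$ by counting the $\tm^{d-1}$ reflected copies. Your anisotropic change of variables produces the same factor more directly, and the observation that the Hölder constant of $\tilde f$ only improves under this stretch is exactly the right one.

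The gap is in your treatment of the $M=1$ case, specifically the ``upper corners'' where a vertical side $\{|y_j'|=\de/2\}$ meets the Hölder top. You propose to control $(1-\chi)v$ there by a ``localized re-application of Lemma~\ref{le:sobolevdirichlet}'', but that lemma requires zero trace on \emph{all} straight parts of the boundary of the sub-domain, and $(1-\chi)v$ has no reason to vanish on the outer face $\{y_j'=\pm\de/2\}$. Iterating over a chain of translated sub-domains does not resolve this, since every sub-domain touching that outer face inherits the same obstruction; there is no place to hide the nonzero trace. So as written, the corner step does not close.

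The paper's device sidesteps the corner problem entirely: it reflects $f$ and $u$ across the vertical sides to a larger oscillatory domain $\td$ (for $M=1$ this would be a $3^{d-1}$-fold even extension), and then chooses the cutoff $\phi$ to equal $1$ on all of the central block $\hd\supset D$ while vanishing near the straight boundary of $\td$. Because $\phi\equiv 1$ on $\hd$, one has $\|\phi\tu\|_{\qs,\td}\ge\|\tu\|_{\qs,\hd}\ge\|u\|_{\qs,D}$, so there is no remainder $(1-\chi)v$ to estimate at all; the corners of $D$ now sit in the interior of $\td$ where $\phi=1$. Your argument can be repaired by inserting exactly this reflection step into your $M=1$ case, after which the cutoff plus Lemma~\ref{le:sobolevdirichlet} plus Lemma~\ref{le:poincare} combine just as you outlined for the $\chi v$ part, and the scaling of your first paragraph finishes the proof.
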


\begin{proof}
We combine Lemma \ref{le:sobolevdirichlet} and Lemma \ref{le:poincare}. Define $\tm$ as the largest odd number such that $\tm\le M$ and note that $\tm\le M\le3\tm$, so
\begin{equation}
\ah:= \tfrac{\tm\de}{M}\in\left(\tfrac{\de}{3},\de\right)\,.
\end{equation}
We define
\begin{equation}
\tf:\left[-\tfrac{3\tm\de}{2M} ,\tfrac{3\tm\de}{2M}\right]^{d-1}\to\left[\tfrac{\de}{4},\de\right)
\end{equation}
by reflecting $f$: We can write every $x'\in\left[-\tfrac{3\tm\de}{2M} ,\tfrac{3\tm\de}{2M}\right]^{d-1}$ as
\begin{equation}
x'=\tfrac{\de}{M}z'+w'
\end{equation}
with $z'=(z_1,\ldots ,z_{d-1})\in\Z^{d-1}$ and $w'=(w_1,\ldots ,w_{d-1})\in\left[-\frac{\de}{2M} ,\frac{\de}{2M}\right]^{d-1}$. Define $\tf$ by
\begin{equation}
\tf(x'):=f\left(\left( (-1)^{z_1}w_1,\ldots, (-1)^{z_{d-1}}w_{d-1}\right)\right)\,.
\end{equation}
Note that $\tf$ is well defined, continuous and
\begin{equation}
|\tf(x')-\tf(y')|\le c\noi{x'-y'}^\ga\ \textrm{for all } x',y'\in\left[-\tfrac{3\tm\de}{2M} ,\tfrac{3\tm\de}{2M}\right]^{d-1}\,.
\end{equation}
Define
\begin{equation}
\td: =\left\{(x',x_d)\in\left(-\tfrac{3\tm\de}{2M} ,\tfrac{3\tm\de}{2M}\right)^{d-1}\times\left(-\tfrac{\de}{4},\de\right)\,\middle\vert\, \tf(x')>x_d\right\rbrace
\end{equation}
and
\begin{equation}
\hd: =\left\{(x',x_d)\in\left(-\tfrac{\tm\de}{2M} ,\tfrac{\tm\de}{2M}\right)^{d-1}\times\left(0,\de\right)\,\middle\vert\, \tf(x')>x_d\right\rbrace\, .
\end{equation}
Note that $\hd$ consists of $\tm^{d-1}$ reflected copies of $D$. Moreover, $\td$ consists of less than $2(3\tm)^{d-1}$ reflected copies of $D$ in the sense that $\td\cap\{x_d\ge0\}$ consists of $(3\tm)^{d-1}$ reflected copies of $D$ but $\td\cap\{x_d<0\}$ is only contained in $(3\tm)^{d-1}$ reflected copies of $D$. Let
\begin{align*}
&\phi\in C_c^\infty\left(\left(-\tfrac{3\tm\de}{2M} ,\tfrac{3\tm\de}{2M}\right)^{d-1}\times\left(-\tfrac{\de}{4},\tfrac{5\de}{4}\right)\right)\\
& \textrm{with }  0\le\phi\le1\, ,\ \phi\restriction_{\left[-\tfrac{\tm\de}{2M} ,\tfrac{\tm\de}{2M}\right]^{d-1}\times\left[0,\de\right]}\equiv 1\,\ \textrm{and }   \no{\nabla\phi}{\infty}\ls\frac{1}{\de}  \,.
\end{align*}
It is possible to choose such a $\phi$ by scaling. Note that $\phi\restriction_{\hd}\equiv1$ by the definition of $\hd$.  Let $u\in H^1(D)$ with $\int_{D} u=0$ and define the corresponding reflected version $\tu\in H^1(\td)$ by
\begin{equation}
\tu(x):=u\left(\left( (-1)^{z_1}w_1,\ldots, (-1)^{z_{d-1}}w_{d-1}, |w_d|\right)\right)\,.
\end{equation}
for every $x=\tfrac{\de}{M}z+w\in\td$
with $z=(z_1,\ldots , z_{d-1},0)\in\Z^{d}$ and $w=(w_1,\ldots ,w_{d})\in\left[-\frac{\de}{2M} ,\frac{\de}{2M}\right]^{d-1}\times\left(-\frac{\de}{4},\infty\right)$. Note that $\tu$ is well-defined because $H^1$ functions are defined up to almost everywhere equality and since reflections of $H^1$ functions are again $H^1$ functions. Moreover, by $\tu\restriction_{D}\equiv u$, we know that $\tu\restriction_{\td}$ consists of less than $2(3\tm)^{d-1}$ reflected copies of $u$, and $\tu\restriction_{\hd}$ consists of $\tm^{d-1}$ reflected copies of $u$. Also note that $\int_{\hd}\tu=0$. Furthermore, $\phi\tu\in H^1(\td)$, $\phi\tu\restriction_{\hd}=\tu$ and $\phi\tu\restriction_{\tb}\equiv0$, where
\begin{equation}
\tb: =\partial \td\setminus\left\{(x',x_d)\in\left[-\tfrac{3\tm\de}{2M} ,\tfrac{3\tm\de}{2M}\right]^{d-1}\times\left(-\tfrac{\de}{4},\de\right]\,\middle\vert\, \tf(x')=x_d\right\rbrace\, .
\end{equation} 
Since $\int_{\td}\tu=0$, $\ah\in\left(\frac{\de}{3},\de\right)$ and $\td$ satisfies the assumptions of Lemma \ref{le:poincare}, we can apply Lemma \ref{le:poincare} to get
\begin{align*}
\nor{\nabla (\phi\tu)}{2,\td}{2}&\ls\int_\td |\phi|^2|\nabla\tu|^2+\int_\td|\tu|^2|\nabla\phi|^2 \ls\int_\td |\nabla\tu|^2+\frac{1}{\de^2}\int_\td|\tu|^2 \\
&\le2\cdot3^{d-1}\left(\int_\hd |\nabla\tu|^2+\frac{1}{\de^2}\int_\hd|\tu|^2\right)\ls \int_\hd |\nabla\tu|^2=\tm^{d-1}\int_D |\nabla u|^2\,.
\end{align*}
In the third step we used that $0\le\phi\le1$ and $\no{\nabla\phi}{\infty}\ls\frac{1}{\de}$, and in the second last step we used that $\tu\restriction_{\hd}$ consists of $\tm^{d-1}$ reflected copies of $u$. On the other hand, $\phi\tu\restriction_{\tb}\equiv0$ and $\td$ satisfies the assumptions of Lemma \ref{le:sobolevdirichlet}, so by Lemma \ref{le:sobolevdirichlet}, we obtain
\begin{align*}
\nor{\nabla (\phi\tu)}{2,\td}{2}&\gs\nor{\phi\tu}{\qs,\td}{2}\ge\nor{\tu}{\qs,\hd}{2}=\left(\int_\hd |\tu|^\qs\right)^{\frac{2}{\qs}}=\left(\tm^{d-1}\int_D |\tu|^\qs\right)^{\frac{2}{\qs}}\\
&=\tm^{(d-1)\frac{2}{\qs}}\left(\int_D |\tu|^\qs\right)^{\frac{2}{\qs}}=\tm^{(d-1)\frac{2}{\qs}}\nor{u}{\qs,D}{2}\,.
\end{align*}
To sum up, we get
\begin{equation}
\tm^{d-1}\int_D |\nabla u|^2\gs\nor{\nabla (\phi\tu)}{2,\td}{2}\gs\tm^{(d-1)\frac{2}{\qs}}\nor{u}{\qs,D}{2}\,,
\end{equation}
so
\begin{equation}
\tm^{(d-1)\left(1-\frac{2}{\qs}\right)}\int_D |\nabla u|^2\gs\nor{u}{\qs,D}{2}\,.
\end{equation}
Now recall that $M\sim3\tm$, so we obtain the desired result.
\end{proof}

\subsection{Choice of the oscillatory domains}\label{ss:choiceoscdo}
In this subsection, we choose depending on $V$ for every $x\in\om$ close to $\partial\om$ an oscillatory domain $D_x$ with centre $x$ such that
\begin{equation}
\ev{D_x}\le1\, .
\end{equation}
For the proof, we use the Poincaré-Sobolev inequality for oscillatory domains (Corollary \ref{co:pssmallM}). 
At the same time, we choose the oscillatory domains $D_x$ such that a certain norm of $V$ on $D_x$ is not too small. This will be needed in the following subsection.
\begin{lemma}\label{le:bdryev}
Let $l\in\{1,\ldots,L\}$ and let $x\in\olt\cap\obd$. Let $\de\in(0,\dz] $ and let $D:=D_x(\de)$.
\begin{enumerate}[(i)]
\item Suppose $a_x(\de)=\de$ and $\nor{V}{\frac{d}{2}, D}{}\ls1$ if $d\ge3$ and $\no{V}{\cB, D}\ls1$ if $d\ge2$.
Then
\begin{equation}
\ev{D}\le1\, .
\end{equation}
\item Suppose $a_x(\de)=c_0\max\left(h_{x},c_1\de\right)^{\frac{1}{\ga}}$ and
\begin{equation}\label{eq:ptnVassgs}
\nor{V}{\pt, D}{\pt}\ls\max\left(\frac{h_x}{c_1\de} ,1\right)^{\frac{d-1}{\ga}}\, .
\end{equation}
Then
\begin{equation}\label{eq:evoscle1}
\ev{D}\le1\, .
\end{equation}
\end{enumerate}
All the constants in $\ls$ in this Lemma only depend on $\depc$.
\end{lemma}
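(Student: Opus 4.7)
The plan is to reduce the bound $\ev{D}\le 1$ to a quadratic-form inequality: by the max–min principle, it suffices to show that
\begin{equation}\label{eq:reducedVu2}
\int_D |V|\,|u|^2 \le \int_D |\nabla u|^2 \quad\text{for all } u \in H^1(D) \text{ with } \int_D u = 0,
\end{equation}
since $V\le 0$ makes $\int_D V|u|^2 = -\int_D |V||u|^2$. In both parts I will apply Hölder's inequality to decouple $V$ from $|u|^2$ and then use a Poincaré--Sobolev estimate on the oscillatory domain $D$ to control the resulting $L^r$ norm of $u$ by $\|\nabla u\|_{L^2(D)}$.

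For part (i), where $a_x=\de$, the oscillatory domain $D$ has aspect ratio one and, by the constraints on $\de$ and on the Hölder exponent $\ga$, is a cube-like Lipschitz-type region on which the classical Sobolev--Poincaré embedding $H^1(D)\hookrightarrow L^{2d/(d-2)}(D)$ (for $d\ge 3$) holds with a constant depending only on $\depc$; for $d=2$ the Orlicz analogue of \cite{frank2019bound} plays the same role. Pairing this with Hölder's inequality $\int_D |V||u|^2 \le \no{V}{\frac{d}{2},D}\,\nor{u}{\frac{2d}{d-2},D}{2}$, the assumption $\no{V}{\frac{d}{2},D}\ls 1$ with a sufficiently small implicit constant yields \eqref{eq:reducedVu2}.

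For part (ii), I would apply Corollary \ref{co:pssmallM} with $M=\de/a$ to obtain $\nor{u}{\qs,D}{2}\le \cps M^{2(d-1)/m}\nor{\nabla u}{2,D}{2}$, where $m:=\dgo$ and $1/\qs=1/2-1/m$. Hölder's inequality with conjugate exponent $\ps:=m/2$ then gives $\int_D|V||u|^2\le \no{V}{\ps,D}\nor{u}{\qs,D}{2}$, and since $\pt/\ps=m/d\ge 1$ for $\ga\le 1$, a further Hölder step yields $\no{V}{\ps,D}\le \no{V}{\pt,D}|D|^{1/\ps-1/\pt}$. Combining these estimates with the bound $|D|\ls a^{d-1}\de$ reduces the problem to verifying
\begin{equation}\label{eq:combinedfactor}
\no{V}{\pt,D}\, M^{2(d-1)/m}\,(a^{d-1}\de)^{1/\ps-1/\pt}\ls 1
\end{equation}
under the hypothesis \eqref{eq:ptnVassgs}.

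The main obstacle is the precise algebraic matching in \eqref{eq:combinedfactor}, and it is exactly this matching that motivates the choice $\pt=m^2/(2d)$. Substituting $a=c_0\max(h_x,c_1\de)^{1/\ga}$ and $M=\de/a$, and using $(d-1)/\ga=m-1$ together with $1/\pt=2d/m^2$ and $1/\ps-1/\pt=2(m-d)/m^2$, one checks that in the product of the Sobolev factor, the volume factor, and the hypothesis-provided bound $\max(h_x/(c_1\de),1)^{(d-1)/(\ga\pt)}$, the exponents of both $h_x$ and $\de$ vanish identically; what remains is a constant depending only on $\depc$. Taking the implicit constant in \eqref{eq:ptnVassgs} small enough then produces \eqref{eq:reducedVu2}, hence $\ev{D}\le 1$.
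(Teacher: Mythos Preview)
Your proof is correct and follows essentially the same approach as the paper: in both parts you reduce to a quadratic-form estimate on the codimension-one subspace $\{u\in H^1(D):\int_D u=0\}$, apply H\"older to decouple $V$ from $|u|^2$, and then control $\|u\|_{L^r}$ by $\|\nabla u\|_{L^2}$ via Poincar\'e--Sobolev (the cube version in (i), since by Lemma~\ref{le:propD}(i) $D$ is an actual cube when $a_x=\de$, and Corollary~\ref{co:pssmallM} in (ii)). In (ii) the paper packages the second H\"older step through an auxiliary exponent $\tilde r$ with $1/\tilde r=1/\ps-1/\pt$ and verifies the single identity $d/\tilde r+(1-1/\ga)(d-1)/\pt=0$, while you check directly that the exponents of $h_x$ and of $\de$ vanish; these are the same computation, and your algebra is right.
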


\begin{proof}
{\bf Proof of (i). } 
Let $0\not\equiv u\in H^1(D)$ with $\int_D u=0$. Then using Hölder's inequality and the Poincaré-Sobolev inequality for cubes, see \cite[Theorem 8.12]{liebloss}, we get for $d\ge3$
\begin{align*}
\int_D|\nabla u|^2+\int_DV|u|^2\ge\int_D|\nabla u|^2-\no{V}{\frac{d}{2}, D}\no{u}{\frac{2d}{d-2}, D}^2 
\ge \nor{\nabla u}{2, D}{2}\left(1-\cps\no{V}{\frac{d}{2}, D}\right)\, .
\end{align*}
Hence, since $\nabla u\not\equiv0$, we get $\int_D|\nabla u|^2+\int_DV|u|^2>0$ if $\no{V}{\frac{d}{2}, D}<\frac{1}{\cps}$. If $d=2$, we use the Poincaré-Sobolev inequality for Orlicz norms, see \cite[Proposition 2.1]{frank2019bound}.


\bigskip
{\bf Proof of (ii). } 
Choose $M:=\de /a$ and note $M\ge1$.
By Lemma \ref{le:Dwelldef}(ii), we know that $D$ is an oscillatory domain as in Corollary \ref{co:pssmallM}.  In order to show $\ev{D}\le1$, it suffices to show that for all $0\not\equiv u\in H^1(D)$ with $\int_D u=0$, we have
\begin{equation}
\int_D|\nabla u|^2+\int_DV|u|^2>0\,.
\end{equation}
To this end, let $0\not\equiv u\in H^1(D)$ with $\int_D u=0$. Define $\ps$ by 
\begin{equation}\label{eq:psdef}
1=\frac{1}{\ps}+\frac{1}{\frac{\qs}{2}}, \quad \text{namely} \quad \frac{1}{\ps}:=1-\frac{2}{\qs}=\frac{2}{\dgo}.
\end{equation}
By Hölder's inequality and the Poincaré-Sobolev inequality for oscillatory domains, see Corollary \ref{co:pssmallM}, we get
$$
\int_D|\nabla u|^2+\int_DV|u|^2\ge\int_D|\nabla u|^2-\no{V}{\ps, D}\no{u}{\qs, D}^2 \ge \nor{\nabla u}{2, D}{2}\left(1-\cps M^{\frac{d-1}{\ps}}\no{V}{\ps, D}\right). 
$$
Hence, since $\nabla u\not\equiv0$, the left-hand side is strictly positive if
\begin{equation}\label{eq:evVintcondMV}
M^{\frac{d-1}{\ps}}\no{V}{\ps, D}<\frac{1}{\cps}\,.
\end{equation}
We define $\rt\in (1,\infty)$ by
\begin{equation}\label{eq:rtdef}
    \frac{1}{\rt}:=\frac{1}{\ps}-\frac{1}{\pt}=\frac{2(d-1)}{\left(\dgo\right)^2}\left(\gai-1\right)\,,
\end{equation}
where $\ps$ and $\pt$ are given in \eqref{eq:psdef} and \eqref{eq:ptdef}. 
Using  Hölder's inequality with \eqref{eq:rtdef}, $| D |\sim\de^dM^{-\left(d-1\right)}$ and $M=\de /a$, it follows that
\begin{align*}
&\quad M^{\frac{d-1}{\ps}}\no{V}{\ps, D}\ls M^{\frac{d-1}{\ps}}\no{V}{\pt, D}| D |^{\frac{1}{\rt}}\ls M^{\frac{d-1}{\ps}}\no{V}{\pt, D}\left(\de^dM^{-\left(d-1\right)}\right)^{\frac{1}{\rt}}\\
&\sim \no{V}{\pt, D}\de^{\frac{d}{\rt}}\left(\de^{1-\gai}\max\left(\frac{h}{c_1\de},1\right)^{-\frac{1}{\ga}}\right)^{\frac{d-1}{\pt}}=\left(\no{V}{\pt, D}^{\pt}\max\left(\frac{h}{c_1\de},1\right)^{-\dg}\right)^{\frac{1}{\pt}}\ls1\,,
\end{align*}
where we used that 
\begin{equation}
\frac{d}{\rt}+\left(1-\gai\right)\frac{d-1}{\pt}=0
\end{equation}
in the third step and the assumption \eqref{eq:ptnVassgs}
in the last step. Hence, if the constant in \eqref{eq:ptnVassgs} 
is chosen small enough, we can deduce \eqref{eq:evVintcondMV}, which is what we wanted to show.
\end{proof}

\begin{lemma}[Choice of the oscillatory domains]\label{le:D123}
Let $l\in\{1,\ldots,L\}$ and let $x\in\olt\cap\obd$. Then there exists $\de_x\in(0,\dz] $ such that for $D:=D_x:=D_x(\de_x)$ we have
\begin{equation}
\ev{D}\le1
\end{equation}
and at least one of the following properties is satisfied:
\begin{enumerate}[(1)]
\item $\de_x=\dz$.
\item 
$a_x=\de_x$ and
\begin{equation}\label{eq:osccond2}
\nor{V}{\frac{d}{2}, D}{\frac{d}{2}}\gs1\ \textrm{if } d\ge3 \quad \textrm{and }\quad
\no{V}{\B, D}\gs1\ \textrm{if } d=2 \, .
\end{equation}
\item $a_x=c_0\max\left( h_x,c_1\de_x\right)^{\frac{1}{\ga}}$ and
\begin{equation}\label{eq:osccond3}
\nor{V}{\pt, D}{\pt}\gs\max\left(\frac{h_x}{c_1\de_x} ,1\right)^{\frac{d-1}{\ga}}\, .
\end{equation}
\end{enumerate}
All the constants in $\gs$ in this Lemma only depend on $\depc$.
\end{lemma}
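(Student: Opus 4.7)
The plan is to select $\de_x$ by a continuity/monotonicity argument built on Lemma \ref{le:bdryev}. First, I would observe that $\de \mapsto a_x(\de)$ is continuous and non-decreasing on $(0,\dz]$, being the minimum of the two continuous non-decreasing maps $\de\mapsto\de$ and $\de\mapsto c_0\max(h_x,c_1\de)^{1/\ga}$. It follows that $D_x(\de_1)\subset D_x(\de_2)$ whenever $0<\de_1\le\de_2\le\dz$, so by absolute continuity of the integral the maps $\de\mapsto\no{V}{\frac{d}{2},D_x(\de)}$ and $\de\mapsto\no{V}{\pt,D_x(\de)}$ (and the Orlicz analogue $\no{V}{\B,D_x(\de)}$ when $d=2$) are continuous, non-decreasing on $(0,\dz]$, and vanish as $\de\to 0^+$.

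Next, let $\eps_1,\eps_2>0$ denote the implicit thresholds in parts (i) and (ii) of Lemma \ref{le:bdryev}. I would introduce
\begin{align*}
S_A &:= \left\{\de\in(0,\dz]\,\middle|\,a_x(\de)=\de,\ \no{V}{\frac{d}{2},D_x(\de)}\le\eps_1\right\},\\
S_B &:= \left\{\de\in(0,\dz]\,\middle|\,a_x(\de)=c_0\max(h_x,c_1\de)^{\frac{1}{\ga}},\ \nor{V}{\pt,D_x(\de)}{\pt}\le \eps_2\max\left(\tfrac{h_x}{c_1\de},1\right)^{\frac{d-1}{\ga}}\right\},
\end{align*}
and set $S:=S_A\cup S_B$. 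For $\de>0$ sufficiently small one has $\de - c_0\max(h_x,c_1\de)^{1/\ga}\to -c_0 h_x^{1/\ga}<0$, so $a_x(\de)=\de$, and moreover the $L^{d/2}$-norm is below $\eps_1$; hence $S_A$, and thus $S$, is non-empty. Define $\de_x:=\sup S$. By the continuity noted above, the set $S$ is closed in $(0,\dz]$, so $\de_x\in S$, and Lemma \ref{le:bdryev} yields $\ev{D_x(\de_x)}\le 1$. If $\de_x=\dz$, property (1) is immediate. Otherwise $\de_x<\dz$ and every $\de'\in(\de_x,\dz]$ lies outside $S$; passing $\de'\downarrow\de_x$ and using continuity forces the relevant norm inequality to be saturated at $\de_x$. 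Thus either $\de_x\in S_A$, giving $\no{V}{\frac{d}{2},D_x(\de_x)}=\eps_1$ and hence $\nor{V}{\frac{d}{2},D_x(\de_x)}{\frac{d}{2}}\gs 1$, which is (2); or $\de_x\in S_B$, giving $\nor{V}{\pt,D_x(\de_x)}{\pt}\gs \max(h_x/(c_1\de_x),1)^{(d-1)/\ga}$, which is (3).

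The main obstacle I anticipate is a clean treatment of the crossover between the two regimes, occurring at the value $\de_*\in(0,\dz]$ (when one exists) satisfying $\de_* = c_0\max(h_x,c_1\de_*)^{1/\ga}$. When $\de_x$ lies strictly inside a single regime, the argument reduces to a standard intermediate-value statement applied to a single continuous monotone function, but at $\de_x=\de_*$ one must verify that the supremum really delivers the lower bound required by (2) or (3), rather than leaving a gap between the two thresholds. Since $a_x$ is continuous at $\de_*$ and, by Hölder's inequality (as used in the proof of Lemma \ref{le:bdryev}(ii)), the $L^{\pt}$ and $L^{d/2}$ norm conditions are comparable up to a power of $|D_x(\de_*)|$, the saturation argument still yields the required lower bound in either formulation. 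The remaining work is then a finite case split, requiring no new analytic input beyond Lemma \ref{le:bdryev} and the monotonicity established at the outset.
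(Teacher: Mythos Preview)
Your approach is essentially the same as the paper's---both select $\de_x$ by a continuity/monotonicity argument, the paper via an explicit sequential case search (first try $\dz$, then scan $[\dc,\dz]$ for condition (3), then fall back to $(0,\dc]$ for condition (2)), and you via $\de_x=\sup(S_A\cup S_B)$. The two are equivalent once the crossover is handled, and your observation that the regime sets are intervals and the norm maps are continuous and monotone is exactly what drives the paper's argument as well.

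There is, however, a real (if small) gap in your treatment of the crossover $\de_*$. Your dichotomy ``$\de_x\in S_A\Rightarrow(2)$, $\de_x\in S_B\Rightarrow(3)$'' is not valid as stated: membership in $S_A$ only gives $\no{V}{d/2,D_x(\de_x)}\le\eps_1$, and the saturation you need comes from $\de'\notin S$ for $\de'>\de_x$, which tells you about the regime of $\de'$, not of $\de_x$. Concretely, when $\de_x=\de_*$ one may have $\de_*\in S_A\setminus S_B$ with $\no{V}{d/2,D_x(\de_*)}$ strictly below $\eps_1$, so (2) fails. The H\"older comparison you invoke does not rescue this: the two thresholds are not comparable with a universal constant. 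The correct fix is simpler and requires no new input: for $\de'\in(\de_*,\dz]$ one has $a_x(\de')=c_0\max(h_x,c_1\de')^{1/\ga}$, so $\de'\notin S$ forces $\de'\notin S_B$, i.e.\ $\nor{V}{\pt,D_x(\de')}{\pt}>\eps_2\max(h_x/(c_1\de'),1)^{(d-1)/\ga}$; letting $\de'\downarrow\de_*$ yields (3) at $\de_*$ (and $a_x(\de_*)=c_0\max(h_x,c_1\de_*)^{1/\ga}$ holds at the crossover). The eigenvalue bound then follows from Lemma~\ref{le:bdryev}(i), since $\de_*\in S_A$. This is exactly how the paper resolves the borderline case at $\dc=c_0 h_x^{1/\ga}$: it records that condition (3) holds there and checks the $L^{d/2}$ (or Orlicz) smallness separately to invoke part~(i).
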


\begin{proof}

Let us first explain how to choose $\de$ if $c_0\max\left(h,c_1\dz\right)^{\frac{1}{\ga}}>\dz$. It follows that $a(\de)=\de$ for all $\de\le\dz$. Pick $\de\le\dz$ such that
\begin{equation}\label{eq:leosc123sim}
\nor{V}{\frac{d}{2}, D(\de)}{\frac{d}{2}}\sim1\ \textrm{if } d\ge3 \quad \textrm{and }\quad
\no{V}{\B, D(\de)}\sim1\ \textrm{if } d=2 \, .
\end{equation}
holds, so $(2)$ is satisfied. By \eqref{eq:leosc123sim} and Lemma \ref{le:bdryev}(i), we get $\ev{D}\le1$.

\bigskip

Let us now assume that $c_0\max\left(h,c_1\dz\right)^{\frac{1}{\ga}}\le\dz$. Define $\dc:=c_0h^\gai\le\dz$, so $c_0\max\left(h,c_1\dc\right)^{\frac{1}{\ga}}=\dc$. We have $a(\de)=c_0\max\left(h,c_1\de\right)^{\frac{1}{\ga}}$ for all $\de\in[\dc,\dz]$ and $a(\de)=\de$ is for all $\de\le\dc$. If \begin{equation}
\nor{V}{\pt, D_x(\dz)}{\pt}\ls\max\left(\frac{h}{c_1\dz} ,1\right)^{\frac{d-1}{\ga}}\, ,
\end{equation}
pick $\de:=\dz$ and note that $(1)$ is satisfied. Furthermore, by Lemma \ref{le:bdryev}(ii), we have $\ev{D}\le1$. Else, if there exists $\de\in[\dc,\dz]$ such that
\begin{equation}\label{eq:simdcdz}
\nor{V}{\pt, D(\de)}{\pt}\sim\max\left(\frac{h}{c_1\de} ,1\right)^{\frac{d-1}{\ga}},
\end{equation}
where the constant in $\sim$ only depends on $\depc$, pick this $\de$. By \eqref{eq:simdcdz}, (3) is satisfied and moreover, we have $\ev{D}\le1$ by Lemma \ref{le:bdryev}(ii). Else, since the left hand side of \eqref{eq:simdcdz} is increasing in $\de$ and the right hand side of \eqref{eq:simdcdz} is decreasing in $\de$, we have \eqref{eq:osccond3} for $\de=\dc$. Now if $d\ge3$ and $\nor{V}{\frac{d}{2}, D_x(\dc)}{}\ls1$ or if $d= 2$ and $\no{V}{\cB, D_x(\dc)}\ls1$, pick $\de:=\dc$ and note that (3) is satisfied and $\ev{D}\le1$ by Lemma \ref{le:bdryev}(i). Else, pick $\de<\dc$ such that \eqref{eq:leosc123sim} holds. Thus, (2) is satisfied and $\ev{D}\le1$ by Lemma \ref{le:bdryev}(i).

\end{proof}

\section{Covering lemmas}\label{s:cov}

\subsection{Covering of the part close to the boundary by oscillatory domains}\label{ss:covoscdo}
In this subsection, we prove a Besicovitch type covering lemma for oscillatory domains. It is one of the key ingredients of the proof of Theorem \ref{th:weightednorm} since it allows us to choose a family of oscillatory domains as in the previous subsection, which cover the part of $\om$ close to $\partial\om$, but which do not overlap too much. Using this result, we show that the number of oscillatory domains we choose is bounded by a constant times $\dz^{-d}$. 
\begin{lemma}[Covering lemma]\label{le:cover}
Let $l\in\{1,\ldots,L\}$ and suppose that for every $x\in\olt\cap\obd$ we are given a $\de_x\in(0,\dz] $. We define the oscillatory domains $D_x:=D_x(\de_x)$ as in Definition \ref{de:smalldomain}. Then there exists $K_l=K_l(d,\ga)\in\N$ and subfamilies $\cF_1,\ldots,\cF_{K_l}$ of oscillatory domains $D_x:=D_x(\de_x)\subset\ol\cap\obz$ such that
\begin{enumerate}[(i)]
\item For every $k\in\{1,\ldots,K_l\}$ all oscillatory domains in $\cF_k$ are disjoint.
\item
\begin{equation}
\bigcup_{k=1}^{K_l}\dot{\bigcup_{D\in\cF_k}}D\supset\olt\cap\obd\, .
\end{equation}
\end{enumerate}
\end{lemma}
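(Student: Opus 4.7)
The strategy is to adapt the classical Besicovitch covering theorem to our oscillatory domains. After rotation and translation, I may assume $R_l = T_l = I$ and work in local coordinates; the general case then follows by rigid-motion invariance. The inclusion $D_x \subset \ol \cap \obz$ follows from Lemma~\ref{le:Dwelldef}(i) together with the elementary bound $\mathrm{diam}(D_x) \le \sqrt{d}\,\delta_x \le \sqrt{d}\,\delta_0$, which forces every $y \in D_x$ to satisfy $\dist{y,\partial\Omega} < \sqrt{d}\,\delta_0$ since $x \in \obd$.

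The central idea is to split $\olt \cap \obd$ into three subsets according to which branch of the minimum defines $a_x$, so that within each subset the shape of $D_x$ is essentially uniform. Type~I consists of those $x$ with $a_x = c_0 h_x^{1/\gamma}$ (where $h_x \ge c_1\delta_x$), so $D_x$ is a cuboid whose horizontal side is determined by the local height; Type~II consists of those $x$ with $a_x = c_2 \delta_x^{1/\gamma}$ (where $h_x < c_1\delta_x$), so $D_x$ is a narrow cuboid truncated by the Hölder subgraph; Type~III consists of those $x$ with $a_x = \delta_x$, so $D_x$ is a cube of side $\delta_x$. I would construct subfamilies in each type separately and then combine.

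Within each type, I would pass to dyadic scales $\delta_x \in (2^{-k-1}\delta_0, 2^{-k}\delta_0]$, thereby reducing to countably many scales and ensuring that all $\delta$-values within a single scale agree up to a factor of two; within each scale I would greedily select $D_{x_i}$ whenever $x_i$ is not contained in some previously selected $D_{x_j}$. By construction the selected subcollection still covers the given set. The crucial step is a uniform multiplicity bound: if $D_{x_1}, \ldots, D_{x_n}$ are selected and share a common point $y$, then $n \le K(d,\gamma)$. For Type~I, Lemma~\ref{le:propD}(iii) forces $h_{x_i} \sim h_y$ for all $i$, hence $a_{x_i} \sim h_y^{1/\gamma}$; combined with the greedy non-containment $x_i \notin D_{x_j}$ for $i > j$ and the dyadic control on $\delta_{x_i}$, a standard volume-packing argument in the cuboid of comparable shape around $y$ bounds $n$. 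Type~II is handled analogously, invoking Lemma~\ref{le:propD}(iv) which gives $|h_{x_i} - h_y| \le \delta_{x_i}$, and replacing truncated cuboids by their obvious cuboid super-sets for the packing count. Type~III reduces to the classical Besicovitch covering for cubes in $\R^d$.

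Given the multiplicity bound, a standard greedy graph-coloring argument finishes the proof: the intersection graph on the selected family has clique number at most $K(d,\gamma)$, and a direct degree estimate (using again the non-containment $x_i \notin D_{x_j}$ and the comparable shapes) yields bounded chromatic number, so one partitions the selected family into at most $K(d,\gamma)$ pairwise disjoint subfamilies. Applying this to each of the three types yields $K_l \le 3K(d,\gamma)$ subfamilies in total, as required. The main obstacle will be the multiplicity estimate for Type~II, where the truncation by $\{y_d < f_l(y')\}$ distorts the cuboid shape and where narrow rectangles are unavoidable; it is precisely here that the fine tuning of the constants $c_0, c_1, c_2$ in Definition~\ref{de:c0c1} and the comparison estimates in Lemma~\ref{le:propD} become essential.
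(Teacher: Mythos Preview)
Your high-level strategy—splitting into three types according to which branch of the minimum defines $a_x$, then running a Besicovitch-style selection and bounding overlap multiplicity—matches the paper's proof exactly, and Type~III is indeed classical Besicovitch for cubes. Two points in your plan, however, are genuine gaps.

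First, the dyadic reduction as you describe it does not give bounded total multiplicity. If selection is done independently within each annulus $\delta_x\in(2^{-k-1}\delta_0,2^{-k}\delta_0]$, nothing prevents a single point $y$ from lying in one selected domain from each scale $k=0,1,2,\dots$, so the overlap is unbounded. The paper instead runs a \emph{single} greedy selection over all of $A_b$, always picking $\tilde x_n$ with $\delta_{\tilde x_n}\ge\tfrac12\sup\{\delta_x:x\text{ not yet covered}\}$; this yields the one-sided comparison $\delta_n\le 2\delta_k$ whenever $n$ is chosen after $k$, and that comparison (not full comparability) is what the subsequent multiplicity argument actually uses.

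Second, the multiplicity bound for Type~I is \emph{not} a volume-packing argument, and a naive packing count fails. Given $D_1,\dots,D_m$ all meeting $D_m$ with each $D_k$ selected before $D_m$, one does get $a_k\sim a_m$ (via Lemma~\ref{le:propD}(ii)) and $\delta_k\ge\delta_m/2$, but the vertical side $\delta_k$ can be as large as $h_k/c_1\sim h_m/c_1$, so the ratio $h_m/\delta_m$ enters any direct volume count and is unbounded. The paper's argument is a lattice-pigeonhole instead: one places a finite grid $\mathcal G$ of cardinality depending only on $d,\gamma$ inside $\overline{D_m}$, associates each $x_k$ to its nearest grid point, verifies that this grid point always lies in $D_k$, and then proves by a one-dimensional analysis in the $x_d$-coordinate that at most four centres can share any grid point. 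A related but distinct lattice-plus-orthant argument, culminating in a genuine volume estimate, handles Type~II. These are precisely the places where the constants $c_0,c_1,c_2$ and the fine comparisons of Lemma~\ref{le:propD} are used, and ``standard volume packing'' does not substitute for them.
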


\begin{proof}
Recall that for every $x\in\olt\cap\obd$ we are given a $\de_x\in(0,\dz]$ and the corresponding $a_x$ is given by
\begin{equation}
a_{x} =\min\left(\de_x,c_0\max\left(h_{x,l},c_1\de_x\right)^{\frac{1}{\ga}}\right)\, .
\end{equation}
Furthermore, the oscillatory domain $D_x=D_x(\de_x)$ is given by
\begin{equation}
D_x: =T_lR_l\left\{(y',y_d)\in\R^{d-1}\times\R\bigm|  |y'-x'|<\frac{1}{2}a,\ |y_d-x_d|<\frac{1}{2}\de_x,\ f_l(y')>y_d\right\rbrace\, .
\end{equation}
Without loss of generality, let us assume that $T_lR_l$ is the identity map. To begin with, we decompose $\olt\cap\obd$ into three parts
\begin{equation}
\olt\cap\obd=A_1\cup A_2\cup A_3
\end{equation}
such that
\begin{align*}
A_1: =\left\lbrace a_x=c_0h_x^\gai\right\rbrace, \quad A_2: =\left\lbrace  a_x=c_2\de_x^\gai\right\rbrace, \quad A_3: =\left\lbrace    a_x=\de_x\right\rbrace\,.
\end{align*}
For each of these sets, we will prove a corresponding covering theorem with a constant $K^l_1$, $K^l_2$, $K^l_3$ depending on $\deps$. Combining these results, we will get the desired result with $K^l:=K^l_1+K^l_2+K^l_3\in\N$.
\\ \\
The beginning of the proof for $A_b$ with $b\in\{1,2\}$ will be similar to the proof of Besicovich's covering theorem for cubes, see for example. Let $b\in\{1,2\}$. We will denote families of oscillatory domains $D_x$ with $x\in A_b$ by  $\left(\cF_k\right)_{k\in\N}$. At the beginning of our construction, the $\cF_k$ are all assumed to be empty. We put oscillatory domains inside those $\cF_k$ according to the following procedure:
\\ \\
First, choose $\tilde x_1\in A_b$ with 
\begin{equation}
\tilde\de_1: =\de_{\tilde x_1}\ge\ha\sup_{x\in A_b}\de_x
\end{equation}
and denote the corresponding domain by $\tilde D_1: =D_{\tilde x_1}$. Put $\tilde D_1$ in $\cF_1$. Then, if possible, choose $\tilde x_2\in A_b\setminus\tilde D_1$ such that 
\begin{equation}
\tilde\de_2: =\de_{\tilde x_2}\ge\ha\sup_{x\in A_b\setminus \tilde D_1 }\de_x
\end{equation}
and denote the corresponding domain by $\tilde D_2: =D_{\tilde x_2}$. If $\tilde D_2\cap\tilde D_1=\emptyset$, put $\tilde D_2$ in $\cF_1$. Otherwise, put $\tilde D_2$ in $\cF_2$. More generally, if $\tilde x_1,\ldots,\tilde x_{n- 1}$ have already been chosen for some $n\in \N$, we proceed as follows: If
\begin{equation}
\bigcup_{m=1}^{n-1}\tilde D_m \supset A_b\,,
\end{equation}
then stop. Else, we can choose $\tilde x_n\in A_b\setminus\bigcup_{m=1}^{n-1}\tilde D_m$ such that 
\begin{equation}\label{eq:denxnco}
\tilde\de_n: =\de_{\tilde x_n}\ge\ha\sup_{x\in A_b\setminus\bigcup_{m=1}^{n-1}\tilde D_m}\de_x
\end{equation}
and denote the corresponding domain by $\tilde D_n: =D_{\tilde x_n}$. Put $\tilde D_n$ in $\cF_k$, where $k$ is the lowest natural number such that $\tilde D_n\cap D=\emptyset$ for all $D\in\cF_k$. Note that by construction, we know that all oscillatory domains in each $\cF_k$ are disjoint.
\\ \\
We are done if we can show that there exists $K_b^l=K_b^l(\deps)\in\N$ such that $\cF_k=\emptyset$ for all $k\ge K_b^l+1$. For the moment, let us assume we had already shown that.
\\ \\
The oscillatory domains, which we have chosen in the above construction, cover $A_b$, namely,
\begin{equation}\label{eq:DntcoverAb}
\bigcup_{n\in \N}\tilde D_n\supset A_b\,,
\end{equation}
where we use the convention that $\tilde D_n=\emptyset$ if we have to stop before the $n^{\text{\tiny th}}$ step in the procedure above. In order to see this, note that \eqref{eq:DntcoverAb} is clear from the construction above if we have to stop after a finite number of steps. If the number of steps is infinite, we claim 
\begin{equation}\label{eq:de0}
\lim_{n\to\infty}\dt_n=0\,.
\end{equation}
By construction, we know that $\dt_n\le2\dt_m\ \textrm{for all } n\ge m$. Thus, in order to show \eqref{eq:de0}, it suffices to show that for every $\epsilon>0$ there exists $n\in\N$ with $\dt_n<\epsilon$. Suppose this was wrong, that is, there exists $\epsilon>0$ such that for all $n\in\N$, we have $\dt_n\ge\epsilon$. It follows that
\begin{equation}\label{eq:volDninfinite}
\sum_{n\in\N}|\tilde D_n|=\infty\,.
\end{equation}
On the other hand, we have $\tilde D_n\subset\ol\cap\obz$ for all $n\in\N$, $|\ol\cap\obz|\le|\om|\le1$ and by assumption, all $\cF_k$ with $k\ge K_b^l+1$ are empty. Since the oscillatory domains in each $\cF_k$ are disjoint, we get
\begin{equation}
\sum_{n\in\N}|\tilde D_n|\le\sum_{k=1}^{K_b^l}|\ol\cap\obz|\le K_b^l\,,
\end{equation}
which contradicts \eqref{eq:volDninfinite}. This proves \eqref{eq:de0}. Now suppose \eqref{eq:DntcoverAb} was wrong, that is, there exists $\tilde x\in A_b\setminus\bigcup_{n\in N}\tilde D_n$. By \eqref{eq:de0}, there exists $n\in\N$ with $\dt_n\le\frac{1}{4}\de_{\tilde x}$, which contradicts \eqref{eq:denxnco}, and thereby
proves \eqref{eq:DntcoverAb}.
\\ \\
It remains to show that there exists $K_b^l=K_b^l(\deps)\in\N$ such that $\cF_k=\emptyset$ for all $k\ge K_b^l+1$. Let $m\in\N$ with $\cF_m=\emptyset$. Let $D_m\in\cF_m$. By the above construction, for every $k\in\{1,\ldots,m-1\}$ there exists $
D_k\in\cF_k\ \textrm{with }  D_k\cap D_m\neq\emptyset \,
$ 
and such that $D_1,\ldots,D_{m-1}$ were all chosen before $D_m$ in the construction. By relabelling the families $\cF_k$ and the corresponding domains $D_k$, we may without loss of generality assume that $D_k$ was chosen before $D_n$ for all $k,n\in\{1,\ldots,m\}$ with $k\le n$. We denote the centres of $D_1,\ldots,D_m$ by $x_1,\ldots,x_m\in A_b$, the corresponding $\de$ by $\de_1,\ldots,\de_m$ and the corresponding $h$ by $h_1,\ldots,h_m$. Note that by construction, we have 
\begin{equation}
\de_n\le2\de_k\ \textrm{for all }  k,n\in\{1,\ldots,m\}\ \textrm{with } k\le n\,.
\end{equation}
Note that the $D_k$ do \emph{not} agree with the $\tilde D_k$ from the construction above but we have $D_k\in\bigcup_{n\in\N}\tilde D_n$ for all $k\in\{1,\ldots,m\}$. 
\\ \\
Our goal is to show that $m\le K_b^l$ for some $K_b^l=K_b^l(\deps)\in\N$, and we will show this for $b= 1$ and $b= 2$ separately.
\\ \\
\textbf{Proof of $m\le K_1^l<\infty$ for $A_1$. }
By Lemma \ref{le:propD}(i), we note that the $D_k$, $k\in\{1,\ldots,m\}$ are all cuboids contained in $\om$. Applying Lemma \ref{le:propD}(ii) with $x=x_m$, $y=x_1$ and $z=x_k$, 
we get for all $k\in\{1,\ldots,m-1\}$
\begin{equation}\label{eq:xkxminfA1}
    \noi{x_k'-x_m'}<\ha c_0h_k^\gai+\ha c_0h_m^\gai\le\ha c_0\left(4h_m\right)^\gai+\ha c_0h_m^\gai\le \frac{\alpha}{2} c_0h_m^\gai
\end{equation}
with $\alpha:=\lceil4^\gai+1\rceil\in\N$. Now define the lattice $\cG$ by
\begin{equation*}
\cG := \left\{ (g',g_d)\in\R^{d-1}\times\R\ \middle\vert \begin{array}{l}
    g_d-\left(x_m\right)_d\in\left\lbrace -\frac{\de_m}{2},-\frac{\de_m}{4},0,\frac{\de_m}{4},\frac{\de_m}{2}\right\rbrace\ \textrm{and }   \\
    g'-x_m'=\frac{j}{4\alpha}c_0h_m^\gai\ \textrm{for some } j\in\Z^{d-1}\ \textrm{with } \noi{j}\le2\alpha^2
  \end{array}\right\}\,.
\end{equation*}
Here $\left(x_m\right)_d$ is the $d^{\text{\tiny th}}$ coordinate of $x_m$. Note that
\begin{equation}
|\cG|=5\cdot\left(2\cdot2\alpha^2+1\right)^{d-1}\,,
\end{equation}
which only depends on $\deps$. For every $k\in\{1,\ldots,m-1\}$, we associate a $g\in\cG$ to $x_k$, which is chosen such that 
\begin{equation}
\noi{g'-x_k'}+|g_d-\left(x_k\right)_d |
\end{equation}
is minimal among all $g\in\cG$. Note that by the choice of $\cG$, if $g\in\cG$ is associated to $x_k$, then 
by \eqref{eq:xkxminfA1}, $h_m\le4h_k$ and the definition of $\alpha$, we obtain 
\begin{equation}\label{eq:xkgihk}
\noi{x_k'-g'}\le\frac{1}{8\alpha}c_0\left(4 h_k\right)^\gai\le\frac{1}{8}c_0 h_k^\gai\,.
\end{equation}
Furthermore, we claim that
\begin{equation}\label{eq:xkdgdcovpf}
|g_d-\left(x_k\right)_d |<\ha\de_k\,.
\end{equation}
If $|\left(x_m\right)_d -\left(x_k\right)_d |\ge\ha\de_m$, then by the triangle inequality and $D_m\cap D_k\neq\emptyset$, we get \eqref{eq:xkdgdcovpf}. If $|\left(x_m\right)_d -\left(x_k\right)_d |<\ha\de_m$, we use the definition of $\cG$ and $2\de_k\ge\de_m$ to get \eqref{eq:xkdgdcovpf}. Combining \eqref{eq:xkgihk} and \eqref{eq:xkdgdcovpf}, we obtain that if $g\in\cG$ is associated to $x_k$, then $g\in D_k$.
\\ \\
Now, if $g\in\cG$ is associated to both $x_k$ and $x_n$ for $k,n\in\{1,\ldots,m-1\}$ with $k<n$, then by \eqref{eq:xkgihk} and $h_m\le4h_k$, 
\begin{align}\label{eq:xkxni}
\begin{split}
\noi{x_k'-x_n'}\le\noi{x_k'-g'}+\noi{g'-x_n'}\le\frac{1}{8\alpha}c_0h_m^\gai+\frac{1}{8\alpha}c_0h_m^\gai \le \frac{1}{4}c_0h_k^\gai
\end{split}
\end{align}
where we used $\alpha:=\lceil4^\gai+1\rceil$ in the last step. Since $k<n$, we have $x_n\notin D_k$ by construction. Using \eqref{eq:xkxni} and $x_n\notin D_k$, we deduce that
\begin{equation}\label{eq:xkdxnddek}
|\left(x_n\right)_d-\left(x_k\right)_d |\ge\ha\de_k\,.
\end{equation}
\textbf{Claim. }
Fix $g\in\cG$. Then there are at most two indices $k,n\in\{1,\ldots,m-1\}$ with $k\neq n$ such that $\left(x_k\right)_d \ge g_d$ and $\left(x_n\right)_d \ge g_d$ and such that $g$ is associated to both $x_k$ and $x_n$.
\\ 
\\
\textbf{Proof of the claim. }
Suppose there were $k,n,j\in\{1,\ldots,m-1\}$ with $k<n<j$ such that $g$ is associated to  $x_k$, $x_n$ and $x_j$ and such that $\left(x_k\right)_d \ge g_d$, $\left(x_n\right)_d \ge g_d$ and $\left(x_j\right)_d \ge g_d$. Without loss of generality, let us assume that $g_d=0$. By $g\in D_i$ for all $i\in\{k,n,j\}$, we have
\begin{equation}\label{eq:knj1}
0\le\left(x_i\right)_d<\ha\de_i
\end{equation}
By \eqref{eq:xkdxnddek} and $k<n$, we have
\begin{equation}\label{eq:knj2}
\left(x_n\right)_d \ge\left(x_k\right)_d +\ha\de_k\,.
\end{equation}
Note that $0\le\left(x_n\right)_d \le\left(x_k\right)_d $ is not possible since this would imply $x_n\in D_k$ by \eqref{eq:xkxni}, $g\in D_k$ and the fact that $D_k$ is a cuboid. Similarly, we find that
\begin{equation}\label{eq:knj3}
\left(x_j\right)_d \ge\left(x_n\right)_d +\ha\de_n\,.
\end{equation}
Combining \eqref{eq:knj1}, \eqref{eq:knj2} and \eqref{eq:knj3}, we obtain
\begin{equation}
   \ha\de_j>\left(x_j\right)_d\ge\left(x_n\right)_d +\ha\de_n\ge \left(x_k\right)_d +\ha\de_k+\ha\de_n \ge\ha\de_j\,,
\end{equation}
which is a contradiction. In the last step we used that $2\de_k\ge\de_j$ and $2\de_n\ge\de_j$ since $k,n<j$. This finishes the proof of the claim.
\\ \\
Similarly, we can show that there are at most two indices $k,n\in\{1,\ldots,m-1\}$ with $k\neq n$ such that $\left(x_k\right)_d \le g_d$ and $\left(x_n\right)_d \le g_d$ and such that $g$ is associated to both $x_k$ and $x_n$. Hence, for every $g\in\cG$ there exist at most four different indices $j_1,j_2,j_3,j_4\in\{1,\ldots,m-1\}$ such that $g$ is associated to $x_{j_i}$ for all $i=1,2,3,4$. This shows that $
m-1\le4|\cG|$, 
where the right hand side only depends on $\deps$.  It follows that a possible choice is
\begin{equation}
K^l_1:=4|\cG|+1.
\end{equation}

\textbf{Proof of $m\le K_2^l<\infty$ for $A_2$. }
Define the lattice $\cG$ by
\begin{equation*}
\cG := \left\{ (g',g_d)\in\R^{d-1}\times\R\ \middle\vert \begin{array}{l}
    g_d-\left(x_m\right)_d\in\left\lbrace -\frac{\de_m}{2},-\frac{\de_m}{4},0,\frac{\de_m}{4},\frac{\de_m}{2}\right\rbrace\ \textrm{and }   \\
    g'-x_m'=\frac{j}{2\alpha}c_2\de_m^\gai\ \textrm{for some } j\in\Z^{d-1}\ \textrm{with } \noi{j}\le\alpha
  \end{array}\right\}\,,
\end{equation*}
where $\alpha:=\lceil2^\gai\rceil\in\N$.  Note that $\cG\subset \overline{D_m}$ and
\begin{equation}\label{eq:2sizeG}
|\cG|=5\cdot\left(2\alpha+1\right)^{d-1}\,,
\end{equation}
which only depends on $\deps$. For every $k\in\{1,\ldots,m-1\}$, we associate a $g\in\cG$ to $x_k$, which is chosen such that 
\begin{equation}
\noi{g'-x_k'}+|g_d-\left(x_k\right)_d |
\end{equation}
is minimal among all $g\in\cG$. We may also say that $x_k$ is associated to $g\in\cG$. Note that if $g\in\cG$ is associated to $x_k$, then
\begin{equation}\label{eq:2xkdgdcovpf}
|g_d-\left(x_k\right)_d |<\ha\de_k\,.
\end{equation}
If $x_k\notin D_m$, this follows from $D_m\cap D_k\neq\emptyset$ and $\cG\subset \overline{D_m}$. If $x_k\in D_m$, this follows from $\delta_m\le 2 \delta_k$
and the choice of $\cG$. Furthermore, we have
\begin{equation}\label{eq:2xkgicovpf}
\noi{x_k'-g'}\le\ha c_2\de_k^\gai\,.
\end{equation}
If $x_k\notin D_m$, this follows from $D_m\cap D_k\neq\emptyset$ and $\cG\subset \overline{D_m}$. If $x_k\in D_m$, this follows from 
\begin{equation}
\frac{1}{4\alpha}c_2\de_m^\gai\le\frac{1}{4}c_2\left(\frac{\de_m}{2}\right)^\gai\le\frac{1}{4}c_2\de_k^\gai\,.
\end{equation}
Combining \eqref{eq:2xkdgdcovpf} and \eqref{eq:2xkgicovpf}, we get $g\in D_k$. 
\\ \\ 
Fix $g\in\cG$ and consider all $k\in\{1,\ldots,m-1\}$ such that $g$ is associated to $x_k$. Without loss of generality, we may assume that we chose our coordinate system in such a way that $g=(0,\ldots,0)$. For each coordinate $i\in\{1,\ldots,d\}$, we distinguish between two cases: 
\begin{align*}
(0)\,\,\left(x_k\right)_i\ge0\quad \text{and}\quad (1) \,\,\left(x_k\right)_i<0,
\end{align*}
and associate a $\sigma=(\sigma_1,\ldots\sigma_d)\in\{0,1\}^d$ to $x_k$ such that condition $(\sigma_i)$ is satisfied for $\left(x_k\right)_i$ for all $i\in\{1,\ldots,d\}$. Fix $\sigma\in\{0,1\}^d$ and assume without loss of generality that $\sigma=(0,\ldots,0)$. Otherwise, rotate and reflect the coordinate system accordingly. In the following, we would like to count the number of $x_k$ such that $x_k$ is associated to $g$ and to $\sigma=(0,\ldots,0)$.
\\ \\
Let $k_1\in\{1,\ldots,m-1\}$ be the smallest number such that $x_{k_1}$ is associated to $g$ and to $\sigma=(0,\ldots,0)$. For $n\in\N$, if it exists, let $k_n\in\{1,\ldots,m-1\}$ be the  $n^{\text{\tiny th}}$ smallest number such that $x_{k_n}$ is associated to $g$ and to $\sigma=(0,\ldots,0)$. By construction, we have $\de_{k_n}\le2\de_{k_1}$ for all $n\ge1$, and moreover, we also have $g\in D_{k_n}$ as we have noticed before. Hence, with the notation $S_\sigma:=[0,\infty)^d$, we have by $\de_{k_n}\le2\de_{k_1}$
\begin{equation}\label{eq:dknssigmaint}
\left| D_{k_n}\cap S_\sigma\right|\le\left|\left[0,c_2\de_{k_n}^\gai\right]^{d-1}\times\left[0,\de_{k_n}\right]\right|\le 2 ^{\dgo}\left(c_2\de_{k_1}^\gai\right)^{d-1}\de_{k_1}
\end{equation}
for all $n\ge1$. 
Let $n\ge1$. We claim that $\de_{k_n}\ge\de_{k_1}$. To see this, recall that $x_{k_n}\in S_\sigma$ but $x_{k_n}\notin D_{k_1}$ since $k_n>k_1$. Therefore, there exists $i_0\in\{1,\ldots,d\}$ with
\begin{equation}
\left(x_{k_n}\right)_{i_0} \ge\left(x_{k_1}\right)_{i_0} +\ha c_2\de_{k_1}^\gai\ \textrm{if } i_0\in\{1,\ldots,d-1\}
\end{equation}
and 
\begin{equation}
\left(x_{k_n}\right)_d \ge\left(x_{k_1}\right)_d +\ha\de_{k_1}\ \textrm{if } i_0=d\,.
\end{equation}
At the same time, since $g\in D_{k_n}$, we have
\begin{equation}
\left(x_{k_n}\right)_{i} \le\ha c_2\de_{k_n}^\gai\ \textrm{for all } i\in\{1,\ldots,d-1\}
\end{equation}
and  
$\left(x_{k_n}\right)_{d} \le\ha \de_{k_n}$. 
We deduce that $\de_{k_n}\ge\de_{k_1}$. 
For all $n\ge1$, we have by $\de_{k_n}\ge\de_{k_1}$ and $x_n\notin D_j$ for $j<n$,
\begin{equation}\label{eq:DknSsig}
\left\lvert\left( D_{k_n}\cap S_\sigma\right)\setminus\bigcup_{j=1}^{n-1}D_{k_j}\right\rvert\ge \frac{1}{2^d}\left(c_2\de_{k_1}^\gai\right)^{d-1}\de_{k_1}\,.
\end{equation}
Now assume that $x_{k_1},\ldots,x_{k_1}$ are all associated to $g$ and to $\sigma=(0,\ldots,0)$. By \eqref{eq:DknSsig},
\begin{equation}\label{eq:volest}
    \left\lvert S_\sigma\cap\bigcup_{n=1}^{N}D_{k_n}\right\rvert=\sum_{n=1}^N\left\lvert\left( D_{k_n}\cap S_\sigma\right)\setminus\bigcup_{j=1}^{n-1}D_{k_j}\right\rvert\ge N\frac{1}{2^d}\left(c_2\de_{k_1}^\gai\right)^{d-1}\de_{k_1}\,.
\end{equation}
On the other hand, we have \eqref{eq:dknssigmaint}
and thus,
\begin{equation}
N\le 2 ^{\dgo+d}\,.
\end{equation}
Repeating the same argument for every $g\in\cG$ and every $\sigma\in\{0,1\}^d$, we find by \eqref{eq:2sizeG} that
\begin{equation}
K_2^l:=5\cdot\left(2\alpha+1\right)^{d-1}\cdot2 ^{\dgo+2d}+1\in\N
\end{equation}
is a possible choice. Note that the right-hand side only depends on $\deps$.
\\ \\
\textbf{Covering theorem for $A_3$. }
This is simply the Besicovitch covering theorem for cubes and here $K_3^l=K_3^l(d)\in\N$.
\end{proof}

\begin{definition}[Choice of a subfamily of oscillatory domains]\label{de:Dj}
Let $l\in\{1,\ldots,L\}$ and for every $x\in\olt\cap\obd$ let $\de_x\in(0,\dz] $, and hence also $D_x:=D_x(\de_x)$, be chosen as in Lemma \ref{le:D123}. Now apply Lemma \ref{le:cover} to this collection of oscillatory domains to get subfamilies $\cF_1,\ldots,\cF_{K_l}$ of oscillatory domains, where $K_l=K_l(d,\ga)\in\N$. Write 
\begin{equation}
\bigcup_{k=1}^{K_l}\cF_k=:\left\lbrace D_j\right\rbrace_{j\in J}\, .
\end{equation}
where $J=J_1\cup J_2\cup J_3$ is an index set such that for every $m\in\{1,2,3\}$ and all $j\in J_m$ the oscillatory domain $D_j$ satisfies condition $(m)$ in Lemma \ref{le:D123}. In the following, for every $m\in\{1,2,3\}$ and every $j\in J_m$ the point $x_j\in\olt\cap\obd$ is such that $D_j=D_{x_j}$.
\end{definition}

\begin{definition}\label{de:somze}
Let $\ga\in\left[\frac{d-1}{d}, 1\right)$. Define $s,s',\omega,\zeta$ by
\begin{equation}
\frac{1}{s'}:=\frac{\di\left(\dgo\right)^2-d}{\di\left(\dgo\right)^2+1},\quad \quad\frac{1}{s}:=\frac{d+ 1}{\di\left(\dgo\right)^2+1},
\end{equation}
and
\begin{equation}
\omega:=\left(\dgo\right)\frac{\di\left(\dgo\right)^2-d}{\di\left(\dgo\right)^2+1},\quad \zeta:=\frac{1}{s}\left(-\be+\dg\right).
\end{equation}
\end{definition}

We are now ready to prove that the number of oscillatory domains we choose is bounded by a constant times $\dz^{-d}$.
\begin{lemma}\label{le:Jest}
Let $l\in\{1,\ldots,L\}$ and let $\left\lbrace D_j\right\rbrace_{j\in J}$ be chosen as in Definition \ref{de:Dj}. Then
\begin{equation}
|J|\ls\dz^{-d}\, .
\end{equation}
\end{lemma}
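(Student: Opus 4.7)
The plan is to split $J=J_1\cup J_2\cup J_3$ according to which of the alternatives (1)--(3) in Lemma~\ref{le:D123} each $D_j$ satisfies, and to bound each piece separately by $\dz^{-d}$, using throughout the $K_l$-fold bounded overlap from Lemma~\ref{le:cover}.

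For $J_2$ the task is routine: each $D_j$ is an honest cube with $\nor{V}{\frac{d}{2},D_j}{\frac{d}{2}}\gs 1$ (or $\no{V}{\cB,D_j}\gs 1$ when $d=2$), so summing and using the definition \eqref{eq:dzdef} of $\dz$ yields
\begin{equation*}
|J_2|\ls \sum_{j\in J_2}\nor{V}{\frac{d}{2},D_j}{\frac{d}{2}}\ls \nor{V}{\frac{d}{2},\om}{\frac{d}{2}}\le \no{V}{\pt,\be}^{\frac{d}{2}}\le \dz^{-d}.
\end{equation*}

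The main part is $|J_3|$, for which I carry out the convexity argument outlined in Section~\ref{s:ingredient}. I will choose the weights $A_j:=\max(h_j,c_1\de_j)^{\ze}\de_j^{-\dg/s}$, with $s,s',\ze$ from Definition~\ref{de:somze}. A short computation combining Lemma~\ref{le:propD}(v), the lower bound \eqref{eq:osccond3}, and the identity $\ze s=\dg-\be$ gives $A_j^s\ls\nos{V}{\pt,\be,D_j}{\pt}$, so overlap produces $\sum_{j\in J_3}A_j^s\ls\nor{V}{\pt,\be}{\pt}\le\dz^{-2\pt}$. The dual sum $\sum_{j\in J_3}A_j^{-s'}$ I will control purely geometrically by a double dyadic decomposition in $(\de_j,h_j)$: disjointness within each $\cF_k$, the elementary graph-based estimate $|\{x\in\ol:\, h_x\in[h_0,2h_0]\}|\ls h_0$, and the precise exponent bookkeeping of Definition~\ref{de:somze} are arranged so that the resulting double series sums to $\ls\dz$. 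H\"older's inequality for series and the identity $\pt/s-\frac{1}{2s'}=\frac{d}{2}$ then close the argument:
\begin{equation*}
|J_3|\le\Big(\sum_{j\in J_3}A_j^{-s'}\Big)^{\frac{1}{s'}}\Big(\sum_{j\in J_3}A_j^s\Big)^{\frac{1}{s}}\ls \dz^{\frac{1}{s'}-\frac{2\pt}{s}}=\dz^{-d}.
\end{equation*}

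Finally, for $J_1$, where only the rigid scale $\de_j=\dz$ is available and the $V$-condition is an upper bound (hence useless for counting), I bound $|J_1|$ purely geometrically. I will split $J_1$ according to $h_j<c_1\dz$ versus $h_j\ge c_1\dz$ and further dyadically in the second case, so that disjointness inside each $\cF_k$ combined with $|\{x\in\ol:\, h_x\in[h_0,2h_0]\}|\ls h_0$ bounds the number of $D_j$'s in a stratum $h_j\sim 2^k\dz$ by $\ls 2^{k(1-\dg)}\dz^{-\dg}$; since $\dg\ge 1$ the geometric series in $k$ converges and $|J_1|\ls\dz^{-\dg}\le\dz^{-d}$, the last inequality because the assumption $\ga\ge(d-1)/d$ forces $\dg\le d$. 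The hard part of the whole proof will be the geometric bound $\sum_{j\in J_3}A_j^{-s'}\ls\dz$: it requires the delicate double-dyadic bookkeeping and uses the full strength of the sharp exponent relations among $\be,\pt,s,s',\ze,\ome$ in Definition~\ref{de:somze}; the estimates for $|J_1|$ and $|J_2|$ are, by comparison, standard.
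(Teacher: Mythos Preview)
Your proposal follows essentially the same strategy as the paper: the same three-way split, the same routine $J_2$ argument, and for $J_3$ the same H\"older/convexity scheme with weights equivalent (up to harmless constants) to the paper's choice $A_j=\de_j^{-\ome}\max(h_j/(c_1\de_j),1)^{\ze}$; your identity $\ze s=\dg-\be$ and the final balance $\pt/s-1/(2s')=d/2$ are exactly \eqref{eq:l2}--\eqref{eq:l4}. Your double-dyadic treatment of $\sum A_j^{-s'}$ is a legitimate variant of the paper's ``sum-to-integral'' argument via $A_j^{-s'}\sim|D_j|\max(\cdot)^{-(\ze s'+\dg)}$; both routes hinge on the same sharp inequality $\ze s'+\dg>1$ (the paper's \eqref{eq:l3}), which you should state and verify explicitly.

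One genuine (if easily repaired) oversight in your $J_1$ sketch: your stratum count ``$\#\{j:\,h_j\sim 2^k\dz\}\ls 2^{k(1-\dg)}\dz^{-\dg}$'' presupposes $|D_j|\gs\dz h_j^{\dg}$, which holds when $a_j=c_0h_j^{1/\ga}$ but \emph{fails} when $a_j=\dz$ (then $|D_j|=\dz^d$ can be much smaller than $\dz h_j^{\dg}$). The paper therefore separates $J_1$ into $J_{1,1}=\{a_j=\dz\}$, $J_{1,2}=\{a_j=c_0h_j^{1/\ga}\}$, $J_{1,3}=\{a_j=c_2\dz^{1/\ga}\}$; the cube case $J_{1,1}$ is handled trivially by $|J_{1,1}|\le K_l\,|\om|\,\dz^{-d}\ls\dz^{-d}$, and only $J_{1,2}$ obeys your stratum estimate (giving $\dz^{-\dg}$, using $\dg>1$ strictly since $\ga<1$). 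With that extra split your argument goes through.
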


\begin{proof}
Recall that $J=J_1\cup J_2\cup J_3$, where each $J_m$, $m\in\{1,2,3\}$ is chosen such that all $D_j$ with $j\in J_m$ satisfy condition $(m)$ in Lemma \ref{le:D123}. We will show for all $m\in\{1,2,3\}$ that $|J_m|\ls\dz^{-d}$.
\\
\\
{\bf Notation. } If $j\in J$ and $D_j=D_{x_j}(\de_{x_j})$ for some $x_j\in\olt\cap\obd$, write
\begin{equation}
\de_j:=\de_{x_j}\, ,\quad h_j:=h_{x_j}\, ,\quad a_j:=a_{x_j}=\min\left(\de_j,c_0\max\left(h_j,c_1\de_j\right)^{\frac{1}{\ga}}\right)\,,\quad K:=K_l\, .
\end{equation}
{\bf Estimate for $J_1$. } 
Recall that if $j\in J_1$, then $\de_j=\dz$. We have
\begin{align*}
|J_1|&\le \vert\left\lbrace j\in J_1\mid a_j=\dz\right\rbrace\vert+\left\vert\left\lbrace j\in J_1\mid a_j=c_0h_j^\gai\right\rbrace\right\vert+\left\vert\left\lbrace j\in J_1\mid a_j=c_2\dz^\gai\right\rbrace\right\vert\\
&=:|J_{1,1}|+|J_{1,2}|+|J_{1,3}|\, .
\end{align*}
{\bf Estimate for $J_{1,1}$. } 
Note that if $a_j=\dz$, then $|D_j|=\dz^d$ since by Lemma \ref{le:Dwelldef}(i) and Lemma \ref{le:propD}(i), $D_j$ is a cuboid contained in $\ol$. Thus,
\begin{equation}
|J_{1,1}|=\vert\left\lbrace j\in J_1\mid a_j=\dz\right\rbrace\vert=\dz^{-d}\sum_{j\in J_{1,1}}|D_j|\ls \dz^{-d}\left\vert\ol\cap\obz\right\vert\ls\dz^{-d}\, ,
\end{equation}
where we used the covering lemma (Lemma \ref{le:cover}) and $D_j\subset\ol\cap\obz$ for all $j\in J$ in the second last step . In the last step, we used that $\left\vert\ol\cap\obz\right\vert\le |\om |\le1$.
\\
\\
{\bf Estimate for $J_{1,2}$. } 
Recall that if $a_j=c_0h_j^\gai$, then $h_j\ge c_1\dz$ and $|D_j|\sim \dz h_j^\dg$
by Lemma \ref{le:propD}(i). Using these properties, Lemma \ref{le:propD}(iii) and Lemma \ref{le:cover}, we get 
\begin{align*}
|J_{1,2}|&=\left\vert\left\lbrace j\in J_1\mid a_j=c_0h_j^\gai\right\rbrace\right\vert\sim\dz^{-1}\sum_{j\in J_{1,2}}|D_j| h_j^{-\dg}\sim\dz^{-1}\sum_{j\in J_{1,2}}\inx{D_j}{w} h_w^{-\dg}\\
&\ls\dz^{-1}\inx{\left\lbrace w\in \ol\cap\obz\,\middle\vert\, h_w\ge\ha c_1\dz\right\rbrace}{w} h_w^{-\dg}\, .
\end{align*}
Note that $\dg>1$ since $d\ge2$ and $\ga<1$. We estimate using $\dg\le d$ and $\dz\le1$
\begin{align*}
|J_{1,2}|&\ls\dz^{-1}\inx{\left\lbrace w\in \ol\cap\obz\,\middle\vert\, h_w\ge\ha c_1\dz\right\rbrace}{w} h_w^{-\dg}\ls\dz^{-1}\int_{\ha c_1\dz}^{\infty} \dr h h^{-\dg}\sim\dz^{-\dg}\le\dz^{-d}\, .
\end{align*}
{\bf Estimate for $J_{1,3}$. } 
By $a_j=c_2\dz^\gai$, we have $h_j\le c_1\dz$ for all $j\in J_{1,3}$. Furthermore, 
\begin{equation}\label{eq:djj13}
|D_j|\sim\dz a_j^{d-1}\sim\dz^\dgo\, .
\end{equation}
By Lemma \ref{le:propD}(iv), we have $|h_w-h_j|\le\de_0$ for all $j\in J_{1,3}$ and all $w\in D_j$. Since $h_j\le c_1\dz$, we get
\begin{equation}\label{eq:hwdzestj13}
h_w\le h_j+|h_w-h_j|\le c_1\dz+\dz=(c_1+1)\dz\ \textrm{for all } w\in D_j \, .
\end{equation}
We obtain
\begin{align*}
|J_{1,3}|&=\left\vert\left\lbrace j\in J_1\mid a_j=c_2\dz^\gai\right\rbrace\right\vert=\dz^{-\left(\dgo\right)}\sum_{j\in J_{1,3}}|D_j|\\
&\ls\dz^{-\left(\dgo\right)}\left\vert\left\lbrace w\in \ol\cap\obz\,\middle\vert\, h_w\le(c_1+1)\dz\right\rbrace\right\vert\sim\dz^{-\dg}\le\dz^{-d}\, ,
\end{align*}
where were used \eqref{eq:djj13} in the second step, the covering lemma (Lemma \ref{le:cover}) and \eqref{eq:hwdzestj13} in the third step, and $\dg\le d$ and $\dz\le1$ in the last step.
\\
\\
{\bf Estimate for $J_2$. } 
Recall that if $j\in J_2$, then $\nor{V}{\frac{d}{2}, D_j}{\frac{d}{2}}\gs1$ if $d\ge3$ and $\no{V}{\B, D_j}\gs1$ if $d=2$. Using the covering lemma (Lemma \ref{le:cover}), we obtain for $d\ge3$
\begin{equation}
|J_2|=\sum_{j\in J_2}1\ls\sum_{j\in J_2}\nor{V}{\frac{d}{2}, D_j}{\frac{d}{2}}\ls\nor{V}{\frac{d}{2}, \ol\cap\obz}{\frac{d}{2}}\ls\nor{V}{\frac{d}{2}, \om}{\frac{d}{2}}\ls\dz^{-d}.
\end{equation}
For $d=2$, using \cite[Lemma A.1]{frank2019bound} we get
\begin{equation}
|J_2|=\sum_{j\in J_2}1\ls\sum_{j\in J_2}\no{V}{\B, D_j}\ls\no{V}{\B, \ol\cap\obz}\ls\no{V}{\B, \om}\ls\dz^{-d}, 
\end{equation}

{\bf Estimate for $J_3$. } 
Recall that if $j\in J_3$, then $a_j=c_0\max\left( h_j,c_1\de_j\right)^{\frac{1}{\ga}}$ and
\begin{equation}\label{eq:J3est}
\nor{V}{\pt, D_j}{\pt}\gs\max\left(\frac{h_j}{c_1\de_j} ,1\right)^{\frac{d-1}{\ga}}\, .
\end{equation}

In the following, we will use $s'$, $s$, $\ome$ and $\ze$ as defined in Definition \ref{de:somze}. By 
\begin{equation}\label{eq:l1}
    \frac{1}{s}+\frac{1}{s'}=1
\end{equation}
and Hölder's inequality, we have
\begin{align} \label{eq:J3-pre-final}
|J_3|&=\sum_{j\in J_3}\de_j^\ome\max\left(\frac{h_j}{c_1\de_j} ,1\right)^{-\ze}\de_j^{-\ome}\max\left(\frac{h_j}{c_1\de_j} ,1\right)^{\ze} \nonumber \\
&\le\left(\sum_{j\in J_3}\de_j^{\dgo}\max\left(\frac{h_j}{c_1\de_j} ,1\right)^{-\ze s'}\right)^{\frac{1}{s'}}\left(\sum_{j\in J_3}\de_j^{-\be}\max\left(\frac{h_j}{c_1\de_j} ,1\right)^{-\be+\dg}\right)^{\frac{1}{s}}\, ,
\end{align}
where we used 
\begin{equation}\label{eq:l2}
\ome s'=1+\dg \quad \textrm{ and } \quad \ome s=\be
\end{equation}
and the definition of $\ze$, see Definition \ref{de:somze}, in the last step. By \eqref{eq:J3est} and Lemma \ref{le:propD}(v), we have for every $j\in J_3$
\begin{equation}
    \de_j^{-\be}\max\left(\frac{h_j}{c_1\de_j} ,1\right)^{-\be+\dg}\ls\max\left( h_j,c_1\de_j\right)^{-\be}\nor{V}{\pt,D_j}{\pt}\ls\nos{V}{\pt,\be,D_j}{\pt}\, .
\end{equation}
Using the covering lemma (Lemma \ref{le:cover}), we get by \eqref{eq:dzdef}
\begin{equation}\label{eq:secestdz}
\sum_{j\in J_3}\de_j^{-\be}\max\left(\frac{h_j}{c_1\de_j} ,1\right)^{-\be+\dg}\ls\sum_{j\in J_3}\nos{V}{\pt,\be,D_j}{\pt}\ls \no{V}{\pt,\be}^{-\pt}\ls\nos{V}{\pt,\be,\om}{\pt}\ls\dz^{-2\pt}\, .
\end{equation}
If we can show that
\begin{equation}\label{eq:J32s}
\sum_{j\in J_3}\de_j^{\dgo}\max\left(\frac{h_j}{c_1\de_j} ,1\right)^{-\ze s'}\ls\dz\, ,
\end{equation}
then we get by \eqref{eq:J3-pre-final} and \eqref{eq:secestdz} that 
\begin{align*}
|J_3| 
\ls\dz^{\frac{1}{s'}}\dz^{-2\pt\frac{1}{s}}=\dz^{-d}\, ,
\end{align*}
where we used
\begin{equation}\label{eq:l4}
-2\pt\frac{1}{s}+\frac{1}{s'}=-d
\end{equation}
in the last step. Hence, it remains to show \eqref{eq:J32s}. First note that by $a_j=c_0\max\left( h_j,c_1\de_j\right)^{\frac{1}{\ga}}$ and by Lemma \ref{le:Dwelldef}(ii), we have
\begin{equation}
|D_j|\sim\dz a_j^{d-1}=\dz \left( c_0\max\left( h_j,c_1\de_j\right)^{\frac{1}{\ga}}\right)^{d- 1}\sim\dz^\dgo\max\left(\frac{h_j}{c_1\de_j} ,1\right)^\dg\, 
\end{equation}
for all $j\in J_3$. Thus,
\begin{align*}
&\qquad\sum_{j\in J_3}\de_j^{\dgo}\max\left(\frac{h_j}{c_1\de_j} ,1\right)^{-\ze s'}\sim\sum_{j\in J_3}|D_j|\max\left(\frac{h_j}{c_1\de_j} ,1\right)^{-\left(\ze s'+\dg\right)}\\
&=\sum_{j\in J_3,\,h_j\ge c_1\dz}|D_j|\max\left(\frac{h_j}{c_1\de_j} ,1\right)^{-\left(\ze s'+\dg\right)}+\sum_{j\in J_3,\,h_j< c_1\dz}|D_j|\max\left(\frac{h_j}{c_1\de_j} ,1\right)^{-\left(\ze s'+\dg\right)}\\
&=:S_1+S_2\, .
\end{align*}
{\bf Estimate for $S_1$. } 
If $j\in J_3$ with $h_j\ge c_1\dz$, then we have $h_j\ge c_1\dz\ge c_1\de_j$ since $\dz\ge \de_j$. Thus,
\begin{equation}
a_j=c_0\max\left( h_j,c_1\de_j\right)^{\frac{1}{\ga}}=c_0h_j^{\frac{1}{\ga}}.
\end{equation}
By Lemma \ref{le:propD}(iii), it follows that
\begin{align*}
S_1&\le\sum_{j\in J_3,\,h_j\ge c_1\dz}|D_j|\left(\frac{h_j}{c_1\de_0} \right)^{-\left(\ze s'+\dg\right)} \sim \sum_{j\in J_3,\,h_j\ge c_1\dz}\inx{D_j}{w} \left(\frac{h_w}{c_1\de_0} \right)^{-\left(\ze s'+\dg\right)}\\
&\ls\inx{\left\lbrace w\in \ol\cap\obz\,\middle\vert\, h_w\ge\ha c_1\dz\right\rbrace}{w} \left(\frac{h_w}{c_1\de_0} \right)^{-\left(\ze s'+\dg\right)}\ls\int_{\ha c_1\dz}^{\infty} \dr h \left(\frac{h}{c_1\de_0} \right)^{-\left(\ze s'+\dg\right)}\\
&=c_1\dz\int_{\ha }^{\infty} \dr t t^{-\left(\ze s'+\dg\right)}\ls\dz\, .
\end{align*}
Here we used $h_j\ge c_1\dz$, $\dz\ge \de_j$ and 
\begin{equation}\label{eq:l3}
    \ze s'+\dg>1
\end{equation}
in the second step. We explain \eqref{eq:l3} below. In the fourth step, we used Lemma \ref{le:propD}(iii) to get $h_w\ge\frac{1}{2}h_j\ge\ha c_1\dz$ and moreover we applied the covering lemma (Lemma \ref{le:cover}). In the second last step we used the change of variables $t=\frac{h_w}{c_1\de_0} $ and in the last step we used \eqref{eq:l3} to deduce that the integral is finite.
\\
\\
In order to show \eqref{eq:l3}, note that we have
\begin{equation}
    \ze s'+\dg =\frac{1}{s}\left(-\be+\dg\right)s'+\dg=1-2+\dg\frac{d+1}{\di\left(\dgo\right)^2-d}\, .
\end{equation}
Therefore,  \eqref{eq:l3} is equivalent to 
\begin{equation}\label{eq:Ypre}
\dg>\frac{2}{d+1}\left[\di\left(\dgo\right)^2-d\right]=\frac{2d}{d+1}\left[\frac{1}{d^2}\left(\dgo\right)^2-1\right]\, .
\end{equation}
Define 
\begin{equation}
Y: =\di\left(\dgo\right)
\end{equation}
and note that since $\ga\in\left[\frac{d-1}{d}, 1\right)$, we have $\dg\in\left(d-1,d\right]$ and therefore, $Y\in\left(1,\frac{d+1}{d}\right]$. The inequality \eqref{eq:Ypre} now reads 
\begin{equation}\label{eq:Y}
dY-1>\frac{2d}{d+1}\left[Y^2-1\right]\, .
\end{equation}
Using $\frac{d}{d+1}\le\frac{1}{Y}$, $\frac{2d}{d+1}> 1$ and $d\ge2$, we have
\begin{equation}
\frac{2d}{d+1}\left[Y^2-1\right]\le\frac{2}{Y}Y^2-\frac{2d}{d+1}<2Y-1\le dY-1\,,
\end{equation}
which shows \eqref{eq:Y} and hence, \eqref{eq:l3} holds. 
\\
\\
{\bf Estimate for $S_2$. } 
By \eqref{eq:l3}, we get
\begin{equation}\label{eq:s2}
S_2=\sum_{j\in J_3,\,h_j<c_1\dz}|D_j|\max\left(\frac{h_j}{c_1\de_j} ,1\right)^{-\left(\ze s'+\dg\right)}\le\sum_{j\in J_3,\,h_j<c_1\dz}|D_j|\, .
\end{equation}
Let $j\in J_3$ with $h_j<c_1\dz$. By the definition of $J_3$, we have $a_j=c_0h_j^{\frac{1}{\ga}}$ or $a_j=c_2\de_j^{\frac{1}{\ga}}$. If $a_j=c_0h_j^{\frac{1}{\ga}}$, then by assumption $h_j\le c_1\dz$, so we get by Lemma \ref{le:propD}(iii)
\begin{equation}
h_w\le2h_j\le2c_1\dz\quad \textrm{for all }\quad w\in D_j \, .
\end{equation}
If $a_j=c_2\de_j^{\frac{1}{\ga}}$, we get by Lemma \ref{le:propD}(iv)
\begin{equation}
h_w\le h_j+|h_w-h_j|\le c_1\dz+\de_j\le c_1\dz+\de_0\le2c_1\dz\quad \textrm{for all }\quad  w\in D_j \, .
\end{equation}
In both cases we get $h_w\le2c_1\dz$ for all $w\in D_j$. Using this fact, \eqref{eq:s2} and the covering lemma (Lemma \ref{le:cover}), we obtain
\begin{equation}
S_2\le\sum_{j\in J_3,\, h_j< c_1\dz}|D_j|\ls\left\vert\left\lbrace w\in \ol\cap\obz\,\middle\vert\, h_w\le2 c_1\dz\right\rbrace\right\vert\ls\dz\, .
\end{equation}
\end{proof}

\subsection{Covering of the interior by cubes}\label{ss:covintcub}
In this subsection, we consider the part of $\om$ far enough away from $\partial\om$ and show that we can choose a family of cubes $D_x$ with centre $x\in\om$ far enough away from $\partial\om$ such that 
\begin{equation}
\ev{D_x}\le1\, 
\end{equation}
and such that the number of cubes we choose is bounded by a constant times $\dz^{-d}$. This part of the proof mimics the proof strategy of Rozenblum \cite{rozenblum1972}, \cite[Section 4.5.1]{frank2021schrodinger}.
\begin{definition}\label{de:debd}
For all $x\in\om\setminus\obd$, define $\de_x\in(0,\dz] $ by
\begin{equation}
\de_x:=\sup\left\lbrace \dt\in(0,\dz]\,\middle\vert\, \nor{V}{\frac{d}{2}, D_x(\dt )}{\frac{d}{2}}\ls1\right\rbrace\ \textrm{if } d\ge3
\end{equation}
and
\begin{equation}
\de_x:=\sup\left\lbrace \dt\in(0,\dz]\,\middle\vert\, \no{V}{\cB, D_x(\dt )}\ls1\right\rbrace\ \textrm{if } d=2
\end{equation}
where
\begin{equation}
D_x(\dt ):=\left\lbrace y\in\R^d\,\middle\vert\, \noi{y-x}<\ha\dt\right\rbrace
\end{equation}
and where $\noi{\cdot}$ denotes the $\infty$-norm on $\R^d$. Here the constants in $\ls$ have to be chosen small enough depending on $d$.
\end{definition}

\begin{lemma}\label{le:Dintwelld}
Let $x\in\om\setminus\obd$ and let $\de_x\in(0,\dz] $ be as in Definition \ref{de:debd}. Then $
D:=D_x:=D_x(\de_x )\subset\om$ and 
\begin{equation} \label{eq:Dintwelld}
\ev{D}\le1\, .
\end{equation}
\end{lemma}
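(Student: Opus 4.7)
The lemma has two independent parts. For the inclusion $D \subset \om$, I would argue purely geometrically: since $x \in \om \setminus \obd$, we have $\dist{x,\partial\om} \ge \ha\sqrt{d}\dz$ by the definition of $\obd = \om^b_{\ha\sqrt{d}\dz}$. For any $y \in D_x(\de_x)$ we have $\noi{y-x} < \ha\de_x$, so $|y-x| \le \sqrt{d}\,\noi{y-x} < \ha\sqrt{d}\de_x \le \ha\sqrt{d}\dz \le \dist{x,\partial\om}$, which forces $y \in \om$. Hence $D \subset \om$.

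For the eigenvalue bound $\ev{D} \le 1$, I would invoke the min-max principle: to show that $-\Delta^N_D + V$ has at most one negative eigenvalue, it suffices to show that the associated quadratic form is strictly positive on the codimension-one subspace of $H^1(D)$ consisting of functions with zero mean, away from the zero function. In other words, for every $0 \not\equiv u \in H^1(D)$ with $\int_D u = 0$, I will show
\begin{equation*}
\int_D |\nabla u|^2 + \int_D V|u|^2 > 0.
\end{equation*}
This is exactly the argument in the proof of Lemma \ref{le:bdryev}(i), now applied to a cube rather than an oscillatory domain near the boundary.

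For $d \ge 3$, I would combine Hölder's inequality with the classical Poincaré-Sobolev inequality on cubes (as in \cite[Theorem 8.12]{liebloss}), which gives $\no{u}{\frac{2d}{d-2}, D}^2 \le \cps \nor{\nabla u}{2, D}{2}$ whenever $\int_D u = 0$. Hence
\begin{equation*}
\int_D |\nabla u|^2 + \int_D V|u|^2 \ge \nor{\nabla u}{2,D}{2}\left(1 - \cps \no{V}{\frac{d}{2}, D}\right),
\end{equation*}
and by the very choice of $\de_x$ in Definition \ref{de:debd}, the implicit constant controlling $\no{V}{\frac{d}{2}, D}$ can be chosen small enough (in particular, strictly below $1/\cps$) so that the right-hand side is strictly positive when $\nabla u \not\equiv 0$; the case $\nabla u \equiv 0$ combined with $\int_D u = 0$ forces $u = 0$. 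For $d = 2$ the same argument works verbatim upon replacing the $L^{d/2}$ norm with the Orlicz norm $\no{\cdot}{\cB, D}$ and invoking the Orlicz version of the Poincaré-Sobolev inequality from \cite[Proposition 2.1]{frank2019bound}, just as in the boundary case. No new obstacle arises here, since $D$ is a cube and the Poincaré-Sobolev constant is truly universal; the only mild point is calibrating the implicit constant hidden in Definition \ref{de:debd} so that it lies below $1/\cps$ (respectively below the analogous Orlicz constant).
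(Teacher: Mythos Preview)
Your proposal is correct and follows essentially the same approach as the paper: the paper's proof simply notes $\dist{x,\partial\om}\ge\ha\sqrt{d}\dz$ to get $D\subset\om$, and then refers to the argument of Lemma~\ref{le:bdryev}(i) for the eigenvalue bound. You have spelled out both steps in more detail, but the ideas are identical.
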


\begin{proof} Since $x\in\om\setminus\obd$ and $\de_x\le\dz$, we have $\dist{x,\partial\om}\ge\ha\sqrt{d}\dz$. By the definition of $D$, we obtain $D\subset\om$. The bound \eqref{eq:Dintwelld} can be proved as in  Lemma \ref{le:bdryev}(i). 
\end{proof}

\begin{lemma}[Covering lemma for the interior of $\om$]\label{le:coverint}
For every $x\in\om\setminus\obd$ let $\de_x\in(0,\dz] $ and $D_x:=D_x(\de_x)$ be as in Definition \ref{de:debd}. 
\begin{enumerate}[(i)]
\item Then there exists $K_0=K_0(d,\ga)\in\N$ and subfamilies $\cF_1,\ldots,\cF_{K_0}$ of oscillatory domains $D_x=D_x(\de_x)\subset\ol\cap\obz$ such that
for every $k\in\{1,\ldots,K_0\}$ all oscillatory domains in $\cF_k$ are disjoint, and moreover, 
\begin{equation}
\bigcup_{k=1}^{K_0}\dot{\bigcup_{D\in\cF_k}}D\supset\om\setminus\obd\, .
\end{equation}

\item Let $J_0$ be an index set and denote
\begin{equation}
\bigcup_{k=1}^{K_0}\cF_k=:\left\lbrace D_j\right\rbrace_{j\in J_0}\, .
\end{equation}
Then
\begin{equation}\label{eq:J0est}
|J_0|\ls\dz^{-d}\, .
\end{equation}
\end{enumerate}
\end{lemma}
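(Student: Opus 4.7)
For part (i), the plan is to apply a Besicovitch-type greedy selection argument for axis-aligned cubes, exactly of the kind used in Case $A_3$ of the proof of Lemma~\ref{le:cover}. First I would iteratively pick centres $\tilde x_1,\tilde x_2,\ldots$ in $\om\setminus\obd$ such that
\begin{equation*}
\tilde\de_n:=\de_{\tilde x_n}\ge\tfrac12\sup_{x\in(\om\setminus\obd)\setminus\bigcup_{m<n}\tilde D_m}\de_x,
\end{equation*}
assigning each new cube $\tilde D_n$ to the lowest-indexed family $\cF_k$ containing no cube already intersecting $\tilde D_n$. The ratio bound $\tilde\de_n\le 2\tilde\de_m$ for $m\le n$ lets a standard lattice-packing argument cap the number of families by some $K_0=K_0(d)$; pairwise disjointness within each $\cF_k$ is built in by construction. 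Full coverage of $\om\setminus\obd$ follows because $\tilde\de_n\to 0$ (else $\sum_n|\tilde D_n|=\infty$, contradicting $\sum_n|\tilde D_n|\le K_0|\om|\le K_0$), so any hypothetical uncovered centre would violate the maximality built into the selection rule for $\tilde\de_n$.

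For part (ii), I would split $J_0=J_{0,a}\cup J_{0,b}$ with $J_{0,a}:=\{j\in J_0\mid\de_j=\dz\}$ and $J_{0,b}:=J_0\setminus J_{0,a}$, and treat the two pieces separately. For $J_{0,a}$ each cube has volume $\dz^d$, and since the cubes in each $\cF_k$ are disjoint and contained in $\om\subset(0,1)^d$, the estimate
\begin{equation*}
|J_{0,a}|\,\dz^d\ls\sum_{k=1}^{K_0}\sum_{D\in\cF_k\cap J_{0,a}}|D|\le K_0|\om|\ls 1
\end{equation*}
gives $|J_{0,a}|\ls\dz^{-d}$. For $J_{0,b}$ the supremum definition of $\de_j$ together with the monotone continuous dependence of $\tilde\de\mapsto\nor{V\,1_{D_{x_j}(\tilde\de)}}{d/2}{d/2}$ on $\tilde\de$ forces $\nor{V}{d/2,D_j}{d/2}\sim 1$ in dimension $d\ge 3$ (and the analogous Orlicz-norm saturation in $d=2$, where \cite[Lemma A.1]{frank2019bound} linearises the $\cB$-norm). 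Summing over the $K_0$ disjoint subfamilies and using $\no{V}{d/2,\om}\le\no{V}{\pt,\be}$ which is built into Definition~\ref{de:hxlseminorm}, together with $\dz\le\no{V}{\pt,\be}^{-1/2}$ from \eqref{eq:dzdef}, I then obtain
\begin{equation*}
|J_{0,b}|\ls\sum_{k=1}^{K_0}\sum_{D\in\cF_k\cap J_{0,b}}\nor{V}{d/2,D}{d/2}\ls\nor{V}{d/2,\om}{d/2}\le\nor{V}{\pt,\be}{d/2}\le\dz^{-d}.
\end{equation*}

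The only genuine obstacle is bookkeeping rather than new ideas: I have to verify that the supremum in Definition~\ref{de:debd} is actually attained with $\nor{V}{d/2,D_{x_j}(\de_j)}{d/2}\sim 1$ whenever $\de_j<\dz$ (true by monotone right-continuity of the norm in $\tilde\de$), and that the Besicovitch packing constant can indeed be chosen depending only on $d$. All remaining computations are direct parallels of the $A_3$ case of Lemma~\ref{le:cover} and of the interior-cube argument of \cite{rozenblum1972}.
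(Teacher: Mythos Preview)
Your proposal is correct and follows essentially the same approach as the paper: for (i) the paper simply invokes the Besicovitch covering lemma for cubes (your greedy selection being exactly that), and for (ii) the paper splits $J_0$ into $J_{0,0}=\{j:\de_j=\dz\}$ and $J_{0,1}=\{j:\nor{V}{d/2,D_j}{}\sim1\}$ (or the Orlicz analogue for $d=2$) and estimates each piece just as you do. Your observation that $K_0$ depends only on $d$ is in fact sharper than the statement, and your remark about verifying the saturation $\nor{V}{d/2,D_j}{d/2}\sim1$ when $\de_j<\dz$ is exactly the continuity point the paper glosses over with ``This is possible by the definition of $\de_x$''.
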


\begin{proof}
{\bf Proof of (i). } 
In order to get the desired result, it suffices to apply the Besicovitch covering lemma for cubes to the family $\left\lbrace D_x\right\rbrace_{x\in\om\setminus\obd}$.
\\
\\
{\bf Proof of (ii). } 
For every $j\in J_0$, let $\de_j$ be such that $D_j:=D_{x_j}(\de_j)$ for some $x_j\in\om\setminus\obd$.
Write $J_0=J_{0,0}\cup J_{0,1}$, where $J_{0,0}$ and $J_{0,1}$ are chosen such that $\de_j=\dz$ for all $j\in J_{0,0}$, $\nor{V}{\frac{d}{2}, D_j}{}\sim1$ for all $j\in J_{0,1}$ if $d\ge3$, and $\no{V}{\cB, D_j}\sim1$ for all $j\in J_{0,1}$ if $d=2$.
This is possible by the definition of $\de_x$ for $x\in\om\setminus\obd$. By (i), $|D_j|=\dz^d$ and $|\om|\le1$ for all $j\in J_{0,0}$, so we get $|J_{0,0}|\ls\dz^{-d}$.
Using (i), we obtain
\begin{equation}
|J_{0,1}|=\sum_{j\in J_{0,1}}1\ls\sum_{j\in J_{0,1}}\nor{V}{\frac{d}{2}, D_j}{\frac{d}{2}}\ls\nor{V}{\frac{d}{2}, \om\setminus\obd}{\frac{d}{2}}\ls\nor{V}{\frac{d}{2}, \om}{\frac{d}{2}}\ls\dz^{-d}
\end{equation}
if $d\ge3$ and
\begin{equation}
|J_{0,1}|=\sum_{j\in J_{0,1}}1\ls\sum_{j\in J_{0,1}}\no{V}{\B, D_j}\ls\no{V}{\B, \om\setminus\obd}\ls\no{V}{\B, \om}\ls\dz^{-d}
\end{equation}
if $d=2$. Thus, we get \eqref{eq:J0est}.
\end{proof}
\section{Conclusion of Theorem \ref{th:weightednorm} and Theorem \ref{th:main}}\label{s:conclclr}
In this section, we conclude Theorem \ref{th:main} and  we also prove Corollary \ref{co:main}. We remarked in Section \ref{s:ingredient} that these two results imply Theorem \ref{th:weightednorm}.

\subsection{Proof of Theorem \ref{th:main}}\label{ss:pfthmain}
In this subsection, we combine the results from the previous subsections to prove Lemma \ref{le:familiesabc}. From this, we deduce Lemma \ref{le:selfadj} and Lemma \ref{le:evkv}. As we have already shown in Section \ref{s:ingredient}, we obtain Theorem \ref{th:main} from these lemmata.

\begin{proof}[Proof of Lemma \ref{le:familiesabc}]
Define $K: =K_0+K_1+\ldots K_L$, where $K_0$ was defined in Lemma \ref{le:coverint}(i) and $K_l$ for $l\in\{1,\ldots,L\}$ was defined in Lemma \ref{le:cover}. Note that $K$ only depends on $\dwc$. Denote by $\cF_k^l$ with $l\in\{0,\ldots,L\}$ and $k\in\{1,\ldots,K_l\}$ the corresponding families of oscillatory domains. By Lemma \ref{le:coverint}(i) and Lemma \ref{le:cover}(i), the oscillatory domains in each $\cF_k^l$ are disjoint and moreover,
\begin{equation}
\om\supset\bigcup_{l=0}^{L}\bigcup_{k=1}^{K_l}\dot{\bigcup_{D\in\cF_k^l}}D\supset\left(\om\setminus\obd\right)\cup\left(\bigcup_{l=1}^{L}\left(\olt\cap\obd\right)\right)\supset\om\,,
\end{equation}
where we used in the last step that by $\ho\ge\sqrt{d}\dz$ and Lemma \ref{oltobho}, we have $\bigcup_{l=1}^{L}\olt\supset\obd$.
This shows (a). For (b), note that by Lemma \ref{le:Dintwelld} and Lemma \ref{le:bdryev}, we have
\begin{equation}
\ev{D}\le1\ \textrm{for all } l\in\{0,\ldots,L\}\, ,\ k\in\{1,\ldots,K_l\}\ \textrm{and } D\in\cF_k^l\,.
\end{equation}
For (c), we obtain by Lemma \ref{le:Jest} and Lemma \ref{le:coverint}(ii)
\begin{equation}
\sum_{l=0}^{L}\sum_{k=1}^{K_l}| \cF_k|\ls\dz^{-d}\, .
\end{equation}
\end{proof}

\begin{proof}[Proof of Lemma \ref{le:selfadj}]
 By Friedrich's extension, it suffices to prove that  $-\Delta_{\om}^{N}+V$ is bounded from below with the quadratic form domain $H^1(\om)$. 
Let $K=K(\dwc)\in\N$ be as in Lemma \ref{le:familiesabc}. Denote   $\tilde V:=4KV$ and recall $\no{V}{\pt ,\be} < \infty$. 
%
    Let $\cF_1,\ldots,\cF_K$ be the families of oscillatory domains $D\subset\om$ which we got from Lemma \ref{le:familiesabc} applied to $\tilde V$. 
    We compute
\begin{align}\label{eq:quadrformest}
\begin{split}
-\Delta_{\om}^{N}+V&\ge -\ha\Delta_{\om}^{N}+\ha\left(\frac{1}{K} \sum_{k=1}^K\left(-\Delta_{\om}^{N}\right)+\sum_{k=1}^K\sum_{D\in\cF_k}2V1_D\right)\\
&\ge -\ha\Delta_{\om}^{N}+\frac{1}{2K}\sum_{k=1}^K\sum_{D\in\cF_k}\left(-\Delta_{D}^{N}+\ha\tilde V1_D\right)\, .
\end{split}
\end{align}
Let $k \in \{1, \dots, K\}$, $D \in \cF_k$, $u \in H^1(\Omega)$ and $u_D := \tfrac{1}{|D|} \int_D u \in \R$. Note that for $v := u - u_D$, we have $v \in H^1(\Omega)$, $\int_D v = 0$ and $\int_D |\nabla u|^2 = \int_D |\nabla v|^2$. We get using $\tilde{V} \leq 0$
\begin{align*}
    &\quad\int_D |\nabla u|^2 + \int_D \frac{1}{2} \tilde{V}|u|^2 \geq \int_D |\nabla v|^2 + \int_D \frac{1}{2} \tilde{V} \left(2 |v|^2 + 2 |u_D|^2\right) \\
    & = \int_D |\nabla v|^2 + \int_D \tilde{V} |v|^2 + \left|\frac{1}{|D|} \int_D u\right|^2 \int_D \tilde{V} \ge \frac{1}{|D|} \int_D \tilde{V} \int_D |u|^2 .
\end{align*}
In the fourth step we used that $\int_D |\nabla v|^2 + \int_D \tilde{V}|v|^2 \geq 0$ by the choice of the family $\cF_k$ and $\int_D v = 0$ for the first two summands, and we used Jensen's inequality and $\tilde{V} \leq 0$ for the last summand. We deduce that in the sense of quadratic forms,
\begin{equation}\label{eq:quadrsum}
   -\Delta_{D}^{N}+\ha1_D\tilde V1_D \geq -\left(\frac{1}{|D|} \int_D |\tilde{V}|\right) 1_D .
\end{equation}
We obtain by \eqref{eq:quadrformest} and \eqref{eq:quadrsum},
\begin{equation}
    -\Delta^N_\Omega + V \geq  \frac{1}{2} \left(-\Delta^N_\Omega\right) - \left(\frac{1}{2K} \sum_{k = 1}^K \sum_{D \in \cF_k} \frac{1}{|D|} \int_D |\tilde{V}|\right) 1_\Omega ,
\end{equation}
where  the constant in the last part is finite
since $\sum_{k = 1}^K |\cF_k|\ls\dz(V)^{-d}<\infty$.
It follows that the quadratic form for $-\Delta^N_\Omega + V$ with domain $H^1(\Omega)$ is well-defined, bounded from below and its form norm is given by the $H^1(\Omega)$-norm. This finishes the proof of the self-adjointness of $-\Delta^N_\Omega + V$.
\end{proof}

\begin{proof}[Proof of Lemma \ref{le:evkv}]
As at the beginning of the proof of Lemma \ref{le:selfadj}, in the sense of quadratic forms, we have
\begin{align}\label{eq:quadrform1Kest}
-\Delta_{\om}^{N}+\frac{1}{K}V 
\ge \frac{1}{K} \sum_{k = 1}^K \sum_{D \in \cF_k}  \left(-\Delta_{D}^{N}+ V1_D\right),
\end{align}
where we used $V\le0$ and $\om=\bigcup_{k=1}^K\bigcup_{D\in\cF_k}D$ in the first step. In the second step, we used that for every $k\in\{1,\ldots,K\}$ the oscillatory domains $D\in\cF_k$ are disjoint and therefore, $\int_\om|\nabla u|^2\ge\sum_{D\in\cF_k}\int_D|\nabla u|^2$ for all $u\in H^1(\om)$. By Lemma \ref{le:familiesabc}(b), we have
\begin{equation}
\ev{D}\le1\, 
\end{equation}
for every $k\in\{1,\ldots,K\}$ and every  $D\in\cF_k$. Hence, by the min-max principle \cite[Theorem 12.1, version 2]{liebloss}\footnote{As the proof \cite[Theorem 12.1, version 2]{liebloss} shows, in fact, the subspace $M$ need not be a subset of $H^1(\Omega)$ but it suffices to take $M\subset L^2(\Omega)$.}, for every $k\in\{1,\ldots,K\}$ and  $D\in\cF_k$ there exists a function $u^D\in H^1(D)\subset L^2(\om)$ such that
\begin{equation}
\int_D|\nabla u|^2+\int_DV|u|^2\ge0\ \textrm{for all } u\in H^1(\om)\ \textrm{with } \int_\om \overline{u^D}u=0\, .
\end{equation}
It follows that if $u\in H^1(\om)$ is in the orthogonal complement in the $L^2(\om)$ sense of $\Span\left\{u^D\mid D\in\cF
_k,\, k=1,\ldots K\right\}$, then 
\begin{equation}
\int_\om|\nabla u|^2+\int_\om\frac{1}{K}V|u|^2\ge \frac{1}{K}\sum_{k=1}^K\sum_{D\in\cF_k}\int_D\left(|\nabla u|^2+\int_DV|u|^2\right)\ge0\, .
\end{equation}
Since the orthogonal complement of $\Span\left\{u^D\mid D\in\cF
_k,\, k=1,\ldots K\right\}$ is a subspace of $L^2(\om)$ of dimension at most $\sum_{k=1}^K|\cF_k|$, we obtain by the min-max principle \cite[Theorem 12.1, version 2]{liebloss} and \eqref{eq:quadrform1Kest}
\begin{equation}
\evv{\om}{\frac{1}{K}V}\le\sum_{k=1}^K|\cF_k|\ls\dz^{-d}\, ,
\end{equation}
where we used Lemma \ref{le:familiesabc}(c) in the last step.
\end{proof}

\subsection{Proof of Corollary \ref{co:main}}\label{ss:pfcolpnorm}
For the proof of Corollary \ref{co:main}, we need the following lemma.
\begin{lemma}[A subset of the $\gamma$ with $\beta < 1$]\label{le:bele1}
Let $d \geq 2$.
\begin{enumerate}[(i)]
\item 
If $\gamma \in \left[\tfrac{2(d-1)}{2d - 1} , 1 \right]$, then
\begin{equation}
    \beta = \beta(d, \gamma) = \frac{1}{d + 1} \left(\frac{d-1}{\gamma} + 1\right)\left[\frac{1}{d}\left(\frac{d-1}{\gamma} + 1\right)^2 - d\right] < 1 .
\end{equation}
\item
If $\gamma \in (0,1)$ is such that $\beta < 1$, then $\no{\cdot}{\tilde{p},\beta} \lesssim \no{\cdot}{p}$ for all $p>{\frac{\tilde{p}}{1-\beta}}$ .
\end{enumerate}
\end{lemma}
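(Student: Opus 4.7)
\textbf{Proof proposal for Lemma \ref{le:bele1}.} For part (i), the plan is to introduce the substitution $Y := \tfrac{d-1}{\gamma}+1$, rewriting
\begin{equation}
\beta = \frac{Y}{d+1}\left(\frac{Y^2}{d}-d\right) = \frac{Y^3}{d(d+1)} - \frac{dY}{d+1}.
\end{equation}
As $\gamma$ ranges over $\bigl[\tfrac{2(d-1)}{2d-1},1\bigr]$, the quantity $Y$ ranges over $\bigl[d,\tfrac{2d+1}{2}\bigr]$. A direct derivative computation shows $\partial_Y\beta = \tfrac{3Y^2}{d(d+1)} - \tfrac{d}{d+1}$ is positive for $Y \ge d$, so $\beta$ is monotonically increasing in $Y$ on the range of interest and its maximum is attained at the left endpoint $\gamma = \tfrac{2(d-1)}{2d-1}$, i.e.\ $Y = \tfrac{2d+1}{2}$. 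Plugging this in reduces the claim to
\begin{equation}
\frac{(2d+1)(4d+1)}{8d(d+1)} < 1,
\end{equation}
which is equivalent to $2d-1>0$ and hence holds for all $d \ge 2$. So the main content of (i) is just monotonicity plus one explicit endpoint computation.

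For part (ii), the plan is to control the two pieces of $\no{\cdot}{\tilde p,\beta}$ separately. The bulk term $\no{f}{d/2,\Omega}$ (resp.\ $\no{f}{\cB,\Omega}$ for $d=2$) is dominated by $\no{f}{p}$ by H\"older's inequality and the boundedness of $\Omega \subset (0,1)^d$, using $p > \tilde p > d/2$ (and a standard embedding of $L^p$ into the Orlicz class $\cB$ on bounded sets when $d=2$). For the weighted seminorm $|f|_{\tilde p,\beta}$, apply H\"older with exponents $\tfrac{p}{\tilde p}$ and $\tfrac{p}{p-\tilde p}$:
\begin{equation}
\nos{f}{\tilde p,\beta}{\tilde p} = \int_{\bigcup_l \Omega_l} h_{x,\min}^{-\beta}|f|^{\tilde p}\,\md x \le \left(\int_\Omega |f|^{p}\right)^{\tilde p/p}\left(\int_{\bigcup_l \Omega_l} h_{x,\min}^{-\beta p/(p-\tilde p)}\right)^{(p-\tilde p)/p}.
\end{equation}
The hypothesis $p > \tfrac{\tilde p}{1-\beta}$ is exactly equivalent to $\tfrac{\beta p}{p-\tilde p}<1$, so it remains to show that $\int_{\bigcup_l \Omega_l} h_{x,\min}^{-\alpha}\,\md x$ is finite for any $\alpha<1$.

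This last integrability statement is the only non-routine step; I expect it to be the main (modest) obstacle. The strategy is to fix $l$, pass to the local coordinates of Definition \ref{de:gahodomain}(i), and observe that for fixed $x'\in\Omega_l^{(d-1)}$ the inner integral reduces to $\int_0^{f_l(x')}(f_l(x')-x_d)^{-\alpha}\,\md x_d = \tfrac{f_l(x')^{1-\alpha}}{1-\alpha}$, which is uniformly bounded since $f_l$ is continuous on the bounded set $\Omega_l^{(d-1)}$. Integrating in $x'$ gives a finite bound on $\int_{\Omega_l} h_{x,l}^{-\alpha}\,\md x$, and $h_{x,\min}^{-\alpha} \le \sum_{l} h_{x,l}^{-\alpha}\,\mathbf 1_{\Omega_l}$ then yields finiteness on the whole boundary neighborhood $\bigcup_l \Omega_l$. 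Combining the two bounds gives $\no{f}{\tilde p,\beta} \lesssim \no{f}{p}$, completing (ii).
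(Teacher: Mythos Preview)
Your proof is correct and follows essentially the same route as the paper. Part (ii) is identical in spirit: H\"older with exponents $p/\tilde p$ and its conjugate, reducing to the integrability of $h_{x,\min}^{-\alpha}$ for $\alpha<1$, and the $L^{d/2}$ (resp.\ Orlicz) term handled by the embedding $L^p\hookrightarrow L^{d/2}$ on the bounded set $\Omega$. For part (i) the paper's argument is slightly different in execution: instead of proving monotonicity in $Y$ and evaluating at the endpoint, it bounds $Y^2-1$ directly using $Y\le 1+\tfrac{1}{2d}$ (with the paper's normalization $Y=\tfrac{1}{d}(\tfrac{d-1}{\gamma}+1)$) and then estimates $\beta=\tfrac{d^2}{d+1}Y(Y^2-1)$ by the product of these bounds. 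Both amount to a short explicit computation; your endpoint approach is arguably cleaner since it gives the exact maximal value $\tfrac{(2d+1)(4d+1)}{8d(d+1)}$.
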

\begin{proof}
{\bf Proof of (i). }
Let $\gamma \in \left[\tfrac{2(d-1)}{2d-1} , 1\right]$. Note that
\begin{equation}
    Y := \frac{1}{d}\left(\frac{d-1}{\gamma} + 1\right) \leq \frac{1}{d}\left(\frac{2d-1}{2} + 1\right) = 1 + \frac{1}{2d} ,
\end{equation}
so $Y^2-1\le \frac{1}{d} \left(1 + \frac{1}{4d}\right)$. Hence, we have
%
\begin{align*}
    \beta = \frac{d^2}{d + 1}Y \left[Y^2 - 1\right] \leq \frac{d^2}{d + 1} \left(1 + \frac{1}{2d}\right)\frac{1}{d}\left(1 + \frac{1}{4d}\right) < 1 \,.
\end{align*}
\\ \\
{\bf Proof of (ii). }
Let $q:=\frac{p}{\pt}>\frac{1}{1-\be}>1$ and note that $\frac{1}{q'}=1-\frac{1}{q}>1-(1-\be)=\be$, so $\be q'<1$. By Hölder's inequality, we get
\begin{align*}
\vert V\vert_{\pt,\be}^{\pt}&=\inx{\uoml}{x}\hxm^{-\be}|V(x)|^{\pt}\le\left(\inx{\uoml}{x}\hxm^{-\be q'}\right)^{\frac{1}{q'}}\left(\inx{\uoml}{x}|V(x)|^{\pt q}\right)^{\frac{1}{q}}\\
&\ls\nor{V}{p}{\pt}\, ,
\end{align*}
where we used $\be q'<1$ and $\pt q=p$ in the last step. Furthermore, since $|\om|\le1$ and $p>\pt\ge\frac{d}{2}$, we can apply Jensen's inequality to get
\begin{align*}
\nor{V}{p}{p}&=\int_\om|V|^p=|\om|\frac{1}{|\om|}\int_\om\left(|V|^{\frac{d}{2}}\right)^{\frac{2p}{d}}\ge|\om|\left(\frac{1}{|\om|}\int_\om|V|^{\frac{d}{2}}\right)^{\frac{2p}{d}}=|\om|^{1-\frac{2p}{d}}\nor{V}{\frac{d}{2}}{p}\ge\nor{V}{\frac{d}{2}}{p}\, .
\end{align*}
For $d=2$, we have $\no{V}{p}\gs\no{V}{\cB,\om}$ by \cite[Chapter 5.1, Theorem 3, p.~155]{rao1991orlicz} and $p>\pt\ge\frac{d}{2}=1$.
Using Definition \ref{de:hxlseminorm}(iii), we obtain $\no{V}{\pt,\be}\ls\no{V}{p}$ for any $d\ge2$. 
\end{proof}

\begin{proof}[Proof of Corollary \ref{co:main}]
By Lemma \ref{le:bele1}, we have $\no{\cdot}{\tilde{p},\beta} \lesssim \no{\cdot}{p}$. Therefore, by Theorem \ref{th:main},
\begin{align*}
\ev{\om}&\ls\left[\min\left(\frac{\ho}{\sqrt{d}},\, \no{V}{\pt,\be}^{-\frac{1}{2}}\right)\right]^{-d}=\max\left(\left(\frac{\ho}{\sqrt{d}}\right)^{-d},\, \no{V}{\pt,\be}^{\frac{d}{2}}\right)\\
&\ls1+\no{V}{\pt,\be}^{\frac{d}{2}}\ls1+\no{V}{p}^{\frac{d}{2}}\, .
\end{align*}
\end{proof}


\section{Weyl's law for Schr\"odinger operators (Theorem \ref{th:weyllawforapotential})}\label{se:weyln}

In this section, we deduce Theorem \ref{th:weyllawforapotential} using Theorem \ref{th:weightednorm}. The main idea is to first reduce to Weyl's law for continuous compactly supported potentials, namely 
\begin{equation}\label{eq:schroedleq2pibint}
    N\left(-\Delta^N_\Omega + \lambda W\right) = (2 \pi)^{-d} \left|B_1(0)\right| \lambda^\frac{d}{2} \int_\Omega |W|^\frac{d}{2} + o \left(\lambda^\frac{d}{2}\right) \mathrm{\ as\ } \lambda \rightarrow \infty .
\end{equation}
for all $W \in C_c (\Omega)$ with $W \leq 0$. Using \eqref{eq:schroedleq2pibint} combined with the Cwikel-Lieb-Rozenblum type bound (Theorem \ref{th:weightednorm}), we can then deduce Theorem \ref{th:weyllawforapotential}.


\subsection{Reduction to compactly supported potentials}\label{ss:strweyl} Let $V$ be as in Theorem \ref{th:weyllawforapotential}. Assume that we have \eqref{eq:schroedleq2pibint} for $0\ge W \in C_c (\Omega)$. Since $\normiii{\cdot}:=\no{\cdot}{\pt,\be}$ is a weighted $L^{\pt}$-norm on $\Omega$ with $\pt < \infty$, 
there exists a sequence $(V_n)_{n \in \N} \subset C_c (\Omega)$ with $V_n \leq 0$ such that
\begin{equation}\label{eq:W-density-0}
  \no{V_n - V}{\tfrac{d}{2}} \ls  \normiii{V - V_n} \rightarrow 0 \mathrm{\ as\ } n \rightarrow \infty.
\end{equation}
Let $\delta \in (0,1)$. Recall that from the min-max principle \cite[Theorem 12.1, version 2]{liebloss}, one can deduce that
\begin{equation}
    N (A + B) \leq N(A) + N(B)
\end{equation}
for any two self-adjoint operators $A$ and $B$ defined on the same Hilbert space with the same quadratic form domain. We have for every $n \in \N$ and $\lambda > 0$
\begin{align}\label{eq:weylvvncomp}
\begin{split}
    N\left(-\Delta^N_\Omega + \lambda V \right) &\leq N\left((1 - \delta) \left(-\Delta^N_\Omega\right) + \lambda V_n\right) + N \left(\delta \left(-\Delta^N_\Omega\right) + \lambda \left(V - V_n\right)\right) \\
    &= N\left(-\Delta^N_\Omega + \lambda \frac{V_n}{1 - \delta}\right) + N \left(-\Delta^N_\Omega + \lambda \frac{V - V_n}{\delta}\right) \\
    &\leq N\left(-\Delta^N_\Omega + \lambda \frac{V_n}{1 - \delta} \right) + C_\Omega\left( 1+ \delta^{-\frac{d}{2}} \lambda^{\frac{d}{2}} \normiii{V - V_n}^\frac{d}{2}\right),
    \end{split}
\end{align}
 where we used Theorem \ref{th:weightednorm} in the last step. 
Using \eqref{eq:W-density-0}, \eqref{eq:weylvvncomp}, \eqref{eq:schroedleq2pibint} for $W = \tfrac{V_n}{ 1 - \delta}$  and the definition of $\normiii{\cdot}$, we get
\begin{align}\label{eq:weylvlimusingvnclr}
\begin{split}
    &\quad \limsup_{\lambda \rightarrow \infty} \lambda^{- \frac{d}{2}} N\left(-\Delta^N_\Omega + \lambda V\right) \\
    &\leq \limsup_{\delta \rightarrow 0} \limsup_{n \rightarrow \infty} \limsup_{\lambda \rightarrow \infty} \lambda^{-\frac{d}{2}} \left(N\left(-\Delta^N_\Omega + \lambda \frac{V_n}{1 - \delta}\right) + C_\Omega\left( 1+ \delta^{-\frac{d}{2}} \lambda^\frac{d}{2} \normiii{V - V_n}^\frac{d}{2}\right)\right) \\
    &\leq \limsup_{\delta \rightarrow 0} \limsup_{n \rightarrow \infty} \left((2\pi)^{-d} \left|B_1(0)\right| \int_\Omega \left|\frac{V_n}{1 - \delta}\right|^\frac{d}{2} + C_\Omega \delta^{- \frac{d}{2}} \normiii{V - V_n}^\frac{d}{2}\right) \\
    &= (2 \pi)^{-d} \left|B_1(0)\right|\limsup_{\delta \rightarrow 0}  \int_\Omega \left|\frac{V}{1 - \delta}\right|^\frac{d}{2} = (2\pi)^{-d} \left|B_1(0)\right| \int_\Omega |V|^\frac{d}{2} ,
    \end{split}
\end{align}
which is the desired upper bound. For the corresponding lower bound, we replace $V$ by $(1-\de)V_n$, and we replace $V_n$ by $(1-\de)V$ in \eqref{eq:weylvvncomp} to get
\begin{equation}
    N\left(-\Delta^N_\Omega + \lambda V \right)\ge N\left(-\Delta^N_\Omega + \lambda (1-\de)V_n \right) -C_\Omega\left(1+ \delta^{-\frac{d}{2}} (1-\de)^{\frac{d}{2}} \lambda^{\frac{d}{2}} \normiii{V - V_n}^\frac{d}{2}\right),
\end{equation}
and then proceeding as above. 

%

\subsection{Weyl's law for compactly supported potentials}\label{ss:weyldet}

Now we prove \eqref{eq:schroedleq2pibint} for $0\ge W \in C_c (\Omega)$. This result can also be found in \cite[Theorem 4.29]{frank2021schrodinger} for the Laplacian on $\R^d$. For the reader's convenience, we explain the proof below since our setting is slightly different. We follow the proof strategy of Weyl, namely we cover the support of $W$ by small cubes of side-length independent of $\lambda$ such that each cube is completely contained in $\Omega$. We then apply Weyl's law for constant potentials on cubes. \\

Let $m_0 \in \N$ be such that
\begin{equation}
    \sqrt{d} 2^{-m_0} < \dist{\supp W , \partial \Omega}.
\end{equation}
Then every cube of side-length at most $2^{- m_0}$ intersecting $\supp W$ is contained in $\Omega$. For every $m \in \N$, $j \in \Z^d$ let
\begin{equation}
    Q^m_j := 2^{-m} \left(j + (0,1)^d\right)
\end{equation}
be the open cube of side-length $2^{-m}$ whose bottom left corner is at $2^{-m} j \in \R^d$. For every $m \in \N$ let
\begin{equation}
    J_m := \left\{j \in \Z^d \bigm| Q^m_j \cap \supp W \neq \emptyset \right\}.
\end{equation}
\textbf{Upper bound. } 
We claim that for every $m \geq m_0$, we have
\begin{equation}\label{eq:schroedleqsumjschroed}
    N\left(-\Delta^N_\Omega + \lambda W\right) \leq \sum_{j \in J_m} N \left(-\Delta^N_{Q^m_j} + \lambda W\right) .
\end{equation}
This can be seen as follows. By the min-max principle \cite[Theorem 12.1, Version 2]{liebloss}, we know that for every $j \in J_m$ there exists an $N (-\Delta^N_{Q^m_j} + \lambda W)$-dimensional subspace of $L^2(Q^m_j)$, which we call $M_j$, such that 
\begin{equation}\label{eq:testge0weyl}
    \int_{Q^m_j} |\nabla u|^2 + \int_{Q^m_j} \lambda W |u|^2 \geq 0
\end{equation}
for all $u \in H^1 (Q^m_j)$ that are in the orthogonal complement of $M_j$ with respect to $L^2 (Q^m_j)$. Let $M \subset L^2(\Omega)$ be the span of all $M_j$ for $j \in J_m$, where we extend functions in $L^2(Q^m_j)$ by zero on $\Omega \setminus Q^m_j$. Note that $M$ has dimension at most
\begin{equation}
    \sum_{j \in J_m} N\left(- \Delta ^N_{Q^m_j} + \lambda W\right) .
\end{equation}
By \eqref{eq:testge0weyl}, $\Omega \supset \cup_{j \in J_m} \overline{Q_j} \supset \supp W$ and since the cubes $Q_j$ are disjoint, we get for every $u \in H^1(\Omega)$ in the orthogonal complement of $M$ with respect to $L^2(\Omega)$
\begin{equation}
    \int_\Omega |\nabla u|^2 + \int_\Omega \lambda W |u|^2 \geq \sum_{j \in J_m} \left(\int_{Q^m_j} |\nabla u|^2 + \int_{Q^m_j} \lambda W |u|^2\right) \geq 0 .
\end{equation}
By the min-max principle \cite[Theorem 12.1, version 2]{liebloss},
we obtain \eqref{eq:schroedleqsumjschroed}. Therefore, for every $m \in \N$, $m \geq m_0$
\begin{equation}
    N\left(- \Delta^N_\Omega + \lambda W\right) \leq \sum_{j \in J_m} N \left(-\Delta^N_{Q^m_j} + \lambda W\right) \leq \sum_{j \in J_m} N\left(- \Delta^N_{Q^m_j} - \lambda \sup_{x \in Q^m_j} |W(x)|\right) .
\end{equation}
By Weyl's law for constant potentials with Neumann boundary conditions on cubes, see for example \cite[Theorem 3.20]{frank2021schrodinger},
\begin{align*}
    \limsup_{\lambda \rightarrow \infty} \lambda^{- \frac{d}{2}} N\left(-\Delta^N_\Omega + \lambda W\right) &\leq \sum_{j \in J_m} \limsup_{\lambda \rightarrow \infty}\lambda^{- \frac{d}{2}} N \left(- \Delta^N_{Q^m_j} - \lambda \sup_{x \in Q^m_j} |W(x)|\right) \\
    &= \sum_{j \in J_m} \left((2\pi)^{-d} \left|B_1(0)\right| \left|Q^m_j\right| \left(\sup_{x \in Q^m_j} |W(x)|\right)^\frac{d}{2}\right) .
\end{align*}
Since $W \in C_c (\Omega)$, the right hand side agrees with  \eqref{eq:schroedleq2pibint} as $m\to \infty$. 

\bigskip
\textbf{Lower bound. } 
For every $m \in \N$, $m \geq m_0$, we have 
\begin{equation}\label{eq:weyllowerdiri}
    N\left(-\Delta^N_\Omega + \lambda W\right) \geq \sum_{j \in J_m} N \left(-\Delta^D_{Q^m_j} + \lambda W\right) .
\end{equation}
For the proof of \eqref{eq:weyllowerdiri}, note that by the min-max principle \cite[Theorem 12.1, Version 3]{liebloss} for every $j \in J_m$  there exists an $N \left(-\Delta^D_{Q^m_j} + \lambda W\right)$-dimensional subspace  of $H^1_0 (Q^m_j)$, which we call $M_j$, such that 
\begin{equation}\label{eq:lowerle0weyl}
    \int_{Q^m_j} |\nabla u_j|^2 + \int_{Q^m_j} \lambda W |u_j|^2 < 0\ \textrm{for all } 0\not\equiv u_j\in M_j. 
\end{equation}
Since the cubes $Q^m_j$,  $j \in J_m$ are disjoint and each $M_j\subset H^1_0 (Q^m_j)\subset H^1(\om)$, if we denote by $M$ the span of all $M_j$, then $M\subset H^1(\om)$ is a subspace of dimension
\begin{equation}
    \sum_{j \in J_m} N \left(-\Delta^D_{Q^m_j} + \lambda W\right) .
\end{equation}
By $\supp W\subset \cup_{j \in J_m} \overline{Q_j}$ and \eqref{eq:lowerle0weyl}, we obtain for every $0\not\equiv  u=\sum_{j \in J_m} u_j\in M$ with $u_j\in M_j$ for each $j \in J_m$, 
\begin{equation}
    \int_\Omega |\nabla u|^2 + \int_\Omega \lambda W |u|^2 = \sum_{j \in J_m} \left(\int_{Q^m_j} |\nabla u_j|^2 + \int_{Q^m_j} \lambda W |u_j|^2\right) < 0 .
\end{equation}
By the min-max principle \cite[Theorem 12.1, Version 3]{liebloss}, we get \eqref{eq:weyllowerdiri}. Using \eqref{eq:weyllowerdiri} and Weyl's law for constant potentials on cubes \cite{weyl,weyl1912asymptotische}, we get
\begin{align*}
    &\qquad \liminf_{\lambda \rightarrow \infty} \lambda^{- \frac{d}{2}} N\left(-\Delta^N_\Omega + \lambda W\right) \geq \liminf_{\lambda \rightarrow \infty} \lambda^{- \frac{d}{2}} \sum_{j \in J_m} N \left(-\Delta^D_{Q^m_j} + \lambda W\right)\\
    &\geq \sum_{j \in J_m} \liminf_{\lambda \rightarrow \infty}\lambda^{- \frac{d}{2}} N \left(- \Delta^D_{Q^m_j} - \lambda \inf_{x \in Q^m_j} |W(x)|\right) \\
    &= \sum_{j \in J_m} \left((2\pi)^{-d} \left|B_1(0)\right| \left|Q^m_j\right| \left(\inf_{x \in Q^m_j} |W(x)|\right)^\frac{d}{2}\right).
\end{align*}
Taking $m \rightarrow \infty$, we conclude \eqref{eq:schroedleq2pibint}.  The proof of Theorem \ref{th:weyllawforapotential} is complete. 
%

\begin{remark}
    In dimension $d\ge3$, we can obtain the lower bound in Theorem \ref{th:weyllawforapotential} for all potentials $V\in L^{\frac{d}{2}}(\om)$, $V\le0$ by comparing with $-\Delta_{\R^d} +\lambda V$ and using Weyl's law for Schr\"odinger operators on $\R^d$ \cite[Theorem 4.46]{frank2021schrodinger}. However, this is not true in dimension $d=2$, see \cite[Remark after Theorem 4.46]{frank2021schrodinger}. 
\end{remark}
\section{Example with non-semiclassical behaviour (Theorem \ref{th:example})}\label{se:example}

In this section we prove Theorem \ref{th:example}. We explain the proof strategy in Section \ref{ss:strex} and the details are given in Section \ref{ss:exdet}.

\subsection{General  strategy}\label{ss:strex}
In this subsection, we explain for fixed $\gamma\in\left(\frac{d-1}{d},1\right)$ how to construct a  $\gamma$-Hölder domain $\Omega\subset\R^d$ and a potential $V: \Omega \rightarrow \left(-\infty,0\right]$ with $V \in L^{\frac{d}{2}}(\Omega)$ such that \eqref{eq:thexeq} holds. We construct the $\gamma$-Hölder domain $\Omega$ in the same way as Netrusov and Safarov \cite[Theorem 1.10]{netrusov2005weyl}.
The potential $V$ will be chosen such that it grows near the boundary of $\om$. This will allow us for certain values of $\lambda$ going to infinity to find significantly more  than $ \lambda^{ \frac{d}{2}}$ negative eigenvalues of $- \Delta^N_\Omega + \lambda V$.
\bigskip

We start by fixing $M := 2^m$, where $m \in \N$ is chosen large enough depending on $\gamma$. The main part of $\Omega$ will be given by the subgraph
\begin{equation}
    \left\{ \left(x', x_d\right) \in \R^{d - 1} \times \R \bigm| x' \in Q^{(d-1)} , 0 < x_d < f(x')\right\}
\end{equation}
of a $\gamma$-Hölder continuous function $f$ on the $(d-1)$-dimensional unit cube $Q^{(d-1)}$ that vanishes on the boundary of $Q^{(d-1)}$. For every $j \in \N$ we can decompose $Q^{(d-1)}$ into $M^{(d-1)j}$ small cubes of side-length $M^{-j}$. The function $f$ will be chosen in such a way that for any $j \in \N$ it oscillates on the order of magnitude $M^{- j \gamma}$ on each of the small cubes of side-length $M^{-j}$. 
Intuitively speaking, $f$ looks no better than a $\gamma$-Hölder continuous function on each of the small cubes for every length scale $M^{-j}$, $j \in \N$. \\

The potential $V$ will be chosen such that it is large close to the boundary of $\Omega$. We define 
\begin{equation}
    V \left(x' , x_d\right) := -c \left(f(x') - x_d\right)^{\frac{2}{d} (-1 + \epsilon)} \mathrm{\ for\ } \left(x', x_d\right) \in \Omega \subset \R^{d-1} \times \R
\end{equation}
for a suitably chosen $0 < \epsilon < (d-1)\left(1/\gamma- 1\right)$ and a constant $c = c(d, \gamma, \epsilon) > 0$. Note that $V \in L^{\frac{d}{2}}(\Omega)$ since $\epsilon > 0$.
In the following, for fixed $j \in \N$, we can for simplicity think of $f$ as a $j$-dependent constant $c(j)>0$ plus a small spike of height $M^{-j\gamma}$ on each of the $M^{(d-1)j}$ $(d-1)$-dimensional small cubes of side-length $M^{-j}$. 
We denote these small cubes of side-length $M^{-j}$ by $Q(j,k)$, $k \in \left\{1, \dots, M^{(d-1)j}\right\}$. Moreover, we define
\begin{equation}
    \Omega_{j,k} := \left\{\left(x', x_d\right) \in Q(j,k) \times \R \bigm| c(j) < x_d < f(x')\right\},
\end{equation}
and $u_{j,k} \in H^1(\Omega)$ by 
\begin{equation}
    u_{j,k} \left(x', x_d\right) := \sin\left(M^{j \gamma} \left(x_d - c(j)\right)\right) 1_{\Omega_{j,k}} \left(x', x_d\right) . 
\end{equation}
Note that for fixed $j \in \N$, the interior of the support the $\{u_{j,k}\}^{M^{(d-1)j}}_{k = 1}$ are disjoint. One can show that for 
\begin{equation}
    \lambda(j) := M^{2 \gamma j \left(1 + \frac{1}{d} (- 1 + \epsilon)\right)}
\end{equation}
we have for every $k \in \{1, \dots, M^{(d-1)j}\}$
\begin{equation}
    \int_\Omega \left|\nabla u_{j,k}\right|^2 + \int_\Omega \lambda V \left|u_{j,k}\right|^2 < 0 .
\end{equation}
Hence, for every $j \in \N$
\begin{equation}
    N \left(- \Delta^N_\Omega + \lambda (j) V\right) \geq M^{(d-1) j}.
\end{equation}
A computation shows that since $\epsilon < (d-1)(1/\gamma- 1)$,
\begin{equation}
    \lim_{j \rightarrow \infty} \lambda(j)^{-\frac{d}{2}} M^{(d-1) j} = \infty .
\end{equation}
It follows that
\begin{equation}
    \limsup_{\lambda \rightarrow \infty} \lambda^{- \frac{d}{2}} N \left(- \Delta^N_\Omega + \lambda V\right) = \infty .
\end{equation}

\subsection{Details of the proof of Theorem \ref{th:example}}\label{ss:exdet}
Let us now come to the details of the proof of Theorem \ref{th:example}. For the definition of $\Omega$ and of the orthogonal set of test functions, we closely follow \cite[Theorem 1.10]{netrusov2005weyl}.

\begin{definition}\label{def:exbasic}
Let $d \geq 2$, $\gamma\in\left( \tfrac{d-1}{d},1\right)$ and let $m \in \N$ be large enough such that $m\ga\ge 1$ and $m(1-\ga)\ge 4$.
Define $Q^{(d-1)} := \left(0,1\right)^{d-1}$ and
\begin{align*}
    \psi : \R^{d-1} &\rightarrow \left[0,1/2\right] \, ,\ 
    x' \mapsto \frac{1}{2} - \Big|x' - \left(\frac{1}{2}, \dots, \frac{1}{2}\right)\Big|_\infty 1_{Q^{(d-1)}}(x').
\end{align*}
For every $j\in \N_0$ let
\begin{equation}
    K_j := \left\{0,1,2,3,\dots,2^{jm}-1\right\}^{d-1}
\end{equation}
and for $k \in K_j$ define
\begin{equation}
    Q(j,k) := \left\{ x'\in \R^{d-1} \Bigm| 2^{jm} x' - k \in Q^{(d-1)}\right\} \subset Q^{(d-1)}.
\end{equation}
Define for $j \in \N_0$ the functions
\begin{equation}
    g_j : Q^{(d-1)} \rightarrow \left[ 0,1/2\right]\,,\ x' \mapsto \sum_{k \in K_j} \psi\left(2^{jm}x'-k\right),
\end{equation}
for $n\in\N_0\cup\{-1\}$
\begin{equation}
    f_n : Q^{(d-1)} \rightarrow \left[0,\infty\right) \,,\
    x' \mapsto \sum_{j=0}^n 2^{-{\gamma}jm}g_j(x'). 
\end{equation}
We also denote $f_{-1} \equiv 0$ and $f=\lim_{n\to \infty}f_n$. 
Also define for every $n\in \N_0$, $k\in K_n$ 
\begin{equation}\label{eq:ankdef}
a_{n,k} := \underset{x'\in Q\left(n,k\right)}{\sup} f_{n-1}(x').
\end{equation}
\end{definition}
\begin{lemma}\label{le:fcont}
\begin{enumerate}[(i)]
\item $f : \left(Q^{(d-1)}, \noi{\cdot}\right) \rightarrow \left[ 0, \infty \right)$ is $\gamma$-Hölder continuous with constant $3$.
\item For every $n \in \N_0$, $k \in K_n$ and $x', y' \in Q(n,k)$, we have 
\begin{equation}
    |f_{n-1}(x') - f_{n-1}(y')| \leq \frac{1}{8}2^{-\gamma m n }.
\end{equation}
\item For every $n \in \N_0$, $k\in K_n$ and $x' \in Q(n,k)$ with $2^{n m} x' - k \in \left[ 1/4,3/4\right]^{d-1}$,
we have
\begin{equation}
    f(x') - a_{n,k} \geq \frac{1}{8} 2^{- \gamma m n}.
\end{equation}
\item For every $n \in \N_0$, $k \in K_n$ and $x' \in Q(n,k)$, we have
\begin{equation}
    f(x') - a_{n,k} \leq 2^{- \gamma m n} .
\end{equation}
\end{enumerate}
\end{lemma}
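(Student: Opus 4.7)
The entire lemma rests on two pointwise estimates for each $g_j$ that I would establish first: the boundedness $0\le g_j\le 1/2$ (since the scaled tents $\psi(2^{jm}\cdot-k)$ have pairwise disjoint supports in the interiors of the $Q(j,k)$, so at each point at most one summand in the definition of $g_j$ is nonzero, and $\max\psi=1/2$), and the Lipschitz estimate
\begin{equation*}
|g_j(x')-g_j(y')|\le 2^{jm}\|x'-y'\|_\infty,\qquad x',y'\in\R^{d-1}.
\end{equation*}
For the latter, $\psi=\max(0,\tfrac12-\|\cdot-(\tfrac12,\dots,\tfrac12)\|_\infty)$ is $1$-Lipschitz w.r.t.\ $\|\cdot\|_\infty$, so each scaled copy is $2^{jm}$-Lipschitz; one then joins $x'$ and $y'$ by a line segment, breaks it at the cube boundaries (where $g_j=0$), and uses that $\|\cdot\|_\infty$ is additive along colinear points. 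The hypotheses $m(1-\ga)\ge 4$ and $m\ga\ge 1$ will be used only to bound the geometric factors $\frac{2^{(1-\ga)m}}{2^{(1-\ga)m}-1}\le\tfrac{16}{15}$ and $\frac{1}{1-2^{-\ga m}}\le 2$.

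I would prove (ii), (iv), (iii), (i) in this order. For (ii), $x',y'\in Q(n,k)$ gives $\|x'-y'\|_\infty\le 2^{-nm}$, so applying the Lipschitz bound termwise to $f_{n-1}$ collapses it to a single geometric sum:
\begin{equation*}
|f_{n-1}(x')-f_{n-1}(y')|\le 2^{-nm}\sum_{j=0}^{n-1}2^{(1-\ga)jm}\le \frac{2^{-\ga nm}}{2^{(1-\ga)m}-1}\le \tfrac{1}{15}2^{-\ga nm}\le\tfrac18 2^{-\ga nm}.
\end{equation*}
For (iv), the definition of $a_{n,k}$ gives $a_{n,k}\ge f_{n-1}(x')$ on $Q(n,k)$, so the claim reduces to the tail $f(x')-f_{n-1}(x')=\sum_{j\ge n}2^{-\ga jm}g_j(x')\le \tfrac12\sum_{j\ge n}2^{-\ga jm}$, a single geometric sum controlled by $\frac{1}{1-2^{-\ga m}}\le 2$.

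Part (iii) combines the two: on the inner core $2^{nm}x'-k\in[1/4,3/4]^{d-1}$ one has $g_n(x')=\psi(2^{nm}x'-k)\ge 1/4$, while (ii) gives $a_{n,k}\le f_{n-1}(x')+\tfrac18 2^{-\ga nm}$. Discarding the nonnegative tail beyond $j=n$,
\begin{equation*}
f(x')-a_{n,k}\ge 2^{-\ga nm}g_n(x')-\tfrac18 2^{-\ga nm}\ge \tfrac14 2^{-\ga nm}-\tfrac18 2^{-\ga nm}=\tfrac18 2^{-\ga nm}.
\end{equation*}

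For (i), the main step, set $h:=\|x'-y'\|_\infty>0$ and use the combined estimate $|g_j(x')-g_j(y')|\le\min(\tfrac12,2^{jm}h)$. Choose $J\in\N_0$ with $2^{Jm}\le h^{-1}<2^{(J+1)m}$ and split the series at $j=J$: the Lipschitz contributions for $j\le J$ sum (using $h\cdot 2^{(1-\ga)Jm}\le h^\ga$) to at most $\tfrac{16}{15}h^\ga$, while the boundedness contributions for $j>J$ sum (using $2^{-(J+1)m}<h$) to at most $h^\ga$. Adding, $|f(x')-f(y')|\le \tfrac{31}{15}h^\ga\le 3h^\ga$. The only genuinely delicate point is the numerical constant in (i): using the sharp bound $1/2$ (not the trivial $1$) for the tail and splitting exactly at $2^{Jm}\le h^{-1}$ is what brings the constant below $3$; everything else reduces to routine geometric-series bookkeeping.
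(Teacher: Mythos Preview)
Your proposal is correct and follows essentially the same approach as the paper: both establish the Lipschitz bound $|g_j(x')-g_j(y')|\le 2^{jm}\|x'-y'\|_\infty$ and the trivial bound $|g_j|\le 1/2$, then split the defining series for $f$ (or $f_{n-1}$) at the scale $j$ where $2^{-jm}\approx\|x'-y'\|_\infty$, and bound each piece by a geometric sum using $m(1-\gamma)\ge 4$ and $m\gamma\ge 1$. Your ordering (ii), (iv), (iii), (i) and your slightly sharper constants ($16/15$, $1/15$) differ only cosmetically from the paper, which proves the parts in the stated order and rounds the geometric factors to $2$.
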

\begin{proof}
\textbf{Proof of (i). } Let $x'$, $y' \in Q^{(d-1)}$ with $x' \neq y'$ and denote by $n'$ the largest number in $\N_0$ such that $2^{-n' m } \geq \noi{x' - y'}$. In particular, 
\begin{equation}\label{eq:xydiffmn}
    2^{- (n' + 1) m} < \noi{x' - y'} \leq 2^{-n' m}.
\end{equation}
We have
\begin{align*}
    |f(x') - f(y')| &= \left\lvert\sum_{j=0}^\infty 2^{- \gamma m j} g_j (x') - \sum_{j=0}^\infty 2^{- \gamma m j}g_j(y')\right\rvert \\
    &\leq \sum_{j=0}^{n'} 2^{- \gamma m j} |g_j(x') - g_j(y')| + \sum_{j=n'+1}^\infty 2^{- \gamma m j} |g_j(x') - g_j(y')| 
    \end{align*}
For the first term, we use the Lipschitz continuity of $g_j$, \eqref{eq:xydiffmn} and  $m(1-\ga)\ge 4$ by Definition \ref{def:exbasic} to get
\begin{align*}
 & \sum_{j=0}^{n'} 2^{- \gamma m j} |g_j(x') - g_j(y')| \le \sum_{j=0}^{n'} 2^{- \gamma m j} \cdot 2^{j m} \noi{x' - y'}\\
& \le \noi{x'-y'}^\gamma 2^{- m(1-\gamma)  n'}\sum_{j=0}^{n'}2^{ m(1-\gamma)  j}\le \noi{x'-y'}^\gamma \sum_{j=0}^{n'}2^{- m(1-\gamma)  j} 
    \le 2\noi{x'-y'}^\gamma\,.
\end{align*}
For the second term, we use the Lipschitz continuity of $g_j$, \eqref{eq:xydiffmn} and  $m\ga\ge 1$  to get
\begin{align*}
\sum_{j=n'+1}^\infty 2^{- \gamma m j} |g_j(x') - g_j(y')|\le \ha\sum_{j=n'+1}^\infty 2^{- \gamma m j} =\ha 2^{- \gamma m (n'+1)} \sum_{j=0}^\infty 2^{- \gamma m j} \le \noi{x'-y'}^\gamma\,.
\end{align*}
Combining these two estimates, we obtain the  claim.
\\ 
\textbf{Proof of (ii). } 
Let $n \in \N_0$, $k\in K_n$ and $x',y' \in Q\left(n,k\right)$. We have
\begin{align*}
    |f_{n-1}(x')-f_{n-1}(y')| = &\left|\sum_{j=0}^{n-1} 2^{-\gamma m j} g_j(x') - \sum_{j=0}^{n-1}2^{-\gamma m j}g_j(y')\right| \leq \sum_{j=0}^{n-1} 2^{-\gamma m j}|g_j(x') - g_j(y')| \\ 
    &\leq \sum_{j=0}^{n-1} 2^{-\gamma m j} \cdot 2^{m j} \noi{x' - y'} \leq 2^{-mn}\sum_{j=0}^{n-1} 2^{(1-\gamma) m j}\le\frac{1}{8}2^{-\gamma m n}\,,
\end{align*}
where we used Lipschitz continuity of $g_j$ in the third step, and $m(1-\ga)\ge 4$ in the last step. \\

\textbf{Proof of (iii). }
First note that for all $y' \in \left[1/4,3/4 \right]^{d-1}$, we have  $\psi(y')\ge 1/4$. 
Let $n \in \N_0$, $k \in K_n$ and $x' \in Q(n,k)$ with
$    2^{n m} x' - k \in \left[1/4,3/4\right]^{d-1}$. Then, 
\begin{equation}\label{eq:gnv}
    g_n(x') = \sum_{\tilde{k} \in K_n} \psi\left(2^{n m} x' - \tilde{k}\right) = \psi\left(2^{n m} x' - k\right) \geq \frac{1}{4}.
\end{equation}
Therefore, we have
\begin{align*}
    f(x') - a_{n,k} &\geq f_n(x') - a_{n,k} = f_n (x') - f_{n-1}(x') + f_{n-1}(x') - a_{n,k} \\
    &\geq 2^{- \gamma m n} g_n(x') - |f_{n-1}(x') - a_{n,k}| \geq \frac{1}{4}2^{-\gamma m n}  - \frac{1}{8}2^{-\gamma m n}  = \frac{1}{8} 2^{- \gamma m n},
\end{align*}
where we used \eqref{eq:gnv}, (ii) and \eqref{eq:ankdef} in the fourth step. \\

\textbf{Proof of (iv). }
Let $n \in \N_0$, $k\in K_n$, $x' \in Q(n,k)$. Then by \eqref{eq:ankdef}, $a_{n,k}\geq f_{n-1}(x')$.
By $g_j(x') \leq \tfrac{1}{2}$ and $m\ga\ge 1$, it follows that
\begin{align*}
    f(x') - a_{n,k} \leq f(x') - f_{n-1}(x') = \sum_{j = n}^\infty 2^{- \gamma m j} g_j (x') \leq  \frac{1}{2} 2^{- \gamma m n} \sum_{j = 0}^\infty 2^{-\gamma m j} \le 2^{- \gamma m n}.
\end{align*}
\end{proof}
\begin{definition}[$\Omega$ and $\Omega_{n,k}$]\label{def:omexdef}
Define the $\gamma$-Hölder domain
\begin{equation*}
    \Omega := \left\{x = (x', x_d) \in \mathbb{R}^{d-1} \times \mathbb{R} \bigm| x' \in Q^{d-1},\ 0 \leq x_d < f(x') \right\} \cup\left( (-2,2)^{d-1} \times (-2, 0)\right)
\end{equation*}
and for all $n \in \N_0$, $k \in K_n$ define
\begin{equation}
    \Omega_{n,k} := \left\{x \in \Omega \bigm| x' \in Q(n,k),\ x_d \in \left(f_{n-1}(x') , f(x')\right)\right\}.
\end{equation}
\end{definition}

\begin{definition}[$u_{n,k}, b_2, b_\nabla,b_V$ and $V$]\label{de:unkbV}
\begin{enumerate}[(i)]
\item 
For $n\in \N_0$, $k \in K_n$, let $u_{n,k} : \Omega \rightarrow \mathbb{R}$ with
\begin{equation}
    u_{n,k}(x) := \begin{cases} \sin\left(2^{\gamma m n}\left(x_d - a_{n,k}\right)\right) &\mathrm{\ for\ } x' \in Q(n,k),\, x_d \geq a_{n,k} \\ 0 &\mathrm{\ else}.\end{cases}
\end{equation}
\item
Define
\begin{equation}\label{eq:b2Vnabla}
    b_2 := 2^{-(d-1)} \int^{\frac{1}{8}}_0 \dr{t} |\sin(t)|^2 ,\quad b_\nabla := \int^1_0 \dr{t} |\cos(t)|^2 \quad \textrm{and}\quad b_V := 2 \frac{b_\nabla}{b_2}  .
\end{equation}
\item
Let $\epsilon\in(0,1)$. Define $V: \Omega \rightarrow \mathbb{R}$ depending on $\epsilon$ by
\begin{equation}\label{eq:Vdef}
V(x) := \begin{cases}-b_V \left(f(x) - x_d\right)^{\frac{2}{d}(-1 + \epsilon)} &\mathrm{\ for\ } x' \in Q^{(d-1)},\, 0 \leq x_d < f(x') \\ 0 &\ \mathrm{else}.\end{cases}
\end{equation}
\end{enumerate}
\end{definition}

\begin{lemma}[Estimates for $u_{n,k}$]\label{le:gradunk}
Let $n \in \N_0$ and $k \in K_n$. Let $\epsilon \in \left(0,1\right)$. 
\begin{enumerate}[(i)]
\item Then
$u_{n,k} \in H^1(\Omega)$. Moreover,
\begin{align*}
    \int_\Omega |u_{n,k}|^2 \geq b_2 2^{-(d-1) m n} \cdot 2^{- \gamma m n}, \quad  \int_\Omega |\nabla u_{n,k}|^2  \leq b_\nabla 2^{-(d-1) m n} \cdot 2^{\gamma m n}.
\end{align*}
\item
For all $\lambda > 0$, we have
\begin{equation}
    \int_\Omega |\nabla u_{n,k}|^2 + \int_\Omega \lambda V |u_{n,k}|^2 \leq b_\nabla 2^{- (d-1) m n} \cdot 2^{\gamma m n} \left(1 - 2 \lambda 2^{- 2 \gamma m n} \cdot 2^{- \gamma m n \frac{2}{d}(-1 + \epsilon)}\right).
\end{equation}
\end{enumerate}
\end{lemma}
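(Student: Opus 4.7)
The plan is to first verify the $H^1$-regularity of $u_{n,k}$, and then derive all three estimates by explicit one-dimensional computations on each vertical fibre of the support of $u_{n,k}$, using the size estimates on $f(x') - a_{n,k}$ provided by Lemma \ref{le:fcont}(iii)--(iv). Since $u_{n,k}$ depends only on $x_d$ where it is nonzero, all integrals split as a product of the $(d-1)$-dimensional measure of a set of $x'$ times a one-dimensional integral in $x_d$; the change of variables $t = 2^{\gamma m n}(x_d - a_{n,k})$ reduces each such one-dimensional integral to an integral of $\sin^2 t$ or $\cos^2 t$ on a bounded subinterval of $[0,1]$.

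\textbf{Step 1: $H^1$-regularity.} Let $S := \{x \in \Omega \mid x' \in Q(n,k),\, x_d \geq a_{n,k}\}$, so $u_{n,k} = \sin(2^{\gamma m n}(x_d - a_{n,k}))$ on $S$ and $0$ on $\Omega \setminus S$. It is classically Lipschitz on $S$ with gradient only in the $x_d$-direction, so it suffices to rule out any surface contribution to the distributional gradient along $\partial S \cap \Omega$. The bottom portion $\{x_d = a_{n,k}\}$ is harmless because $\sin 0 = 0$. For the lateral portion in $\partial Q(n,k) \times \R$, the key observation is that $\psi$ vanishes on $\partial Q^{(d-1)}$, so $g_j(\bar x') = 0$ for every $j \geq n$ and $\bar x' \in \partial Q(n,k)$; this forces $f(\bar x') = f_{n-1}(\bar x') \leq a_{n,k}$ on $\partial Q(n,k)$. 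By continuity of $f$, the vertical section of $S$ over $x'$ has height tending to $0$ as $x' \to \partial Q(n,k)$, so $u_{n,k}$ extends continuously by $0$ across the lateral boundary. A standard truncation/approximation then gives $u_{n,k} \in H^1(\Omega)$ with the expected classical weak gradient.

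\textbf{Step 2: Part (i).} For the lower bound on $\|u_{n,k}\|_2^2$, I would restrict the $x'$-integration to $\{x' \in Q(n,k) : 2^{nm}x' - k \in [1/4,3/4]^{d-1}\}$, which has $(d-1)$-dimensional measure $2^{-(d-1)} \cdot 2^{-(d-1)mn}$. For such $x'$, Lemma \ref{le:fcont}(iii) ensures $f(x') - a_{n,k} \geq \tfrac{1}{8} 2^{-\gamma m n}$, and the change of variables $t = 2^{\gamma m n}(x_d - a_{n,k})$ yields
\begin{equation*}
\int_{a_{n,k}}^{f(x')} |\sin(2^{\gamma m n}(x_d - a_{n,k}))|^2 \, \mathrm{d}x_d \geq 2^{-\gamma m n} \int_0^{1/8} |\sin t|^2 \, \mathrm{d}t,
\end{equation*}
producing the claim with $b_2$ as in \eqref{eq:b2Vnabla}. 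For the gradient bound, I would integrate $|\partial_d u_{n,k}|^2 = 2^{2\gamma m n} \cos^2(2^{\gamma m n}(x_d - a_{n,k}))$ over all of $Q(n,k)$ (measure $2^{-(d-1)mn}$) and use Lemma \ref{le:fcont}(iv) to bound the upper endpoint of the $t$-integral by $1$, producing the factor $b_\nabla = \int_0^1 \cos^2 t \, \mathrm{d}t$.

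\textbf{Step 3: Part (ii).} On $\operatorname{supp} u_{n,k}$ one has $0 \leq f(x') - x_d \leq f(x') - a_{n,k} \leq 2^{-\gamma m n}$ by Lemma \ref{le:fcont}(iv), and since the exponent $\tfrac{2}{d}(-1+\epsilon)$ is strictly negative, the definition \eqref{eq:Vdef} gives the pointwise bound $V(x) \leq -b_V \cdot 2^{-\gamma m n \cdot \tfrac{2}{d}(-1+\epsilon)}$ on $\operatorname{supp} u_{n,k}$. Combining this with the lower bound on $\|u_{n,k}\|_2^2$ from (i) and the identity $b_V b_2 = 2 b_\nabla$ yields
\begin{equation*}
\lambda \int_\Omega V\,|u_{n,k}|^2 \leq -2\lambda\, b_\nabla \cdot 2^{-(d-1)mn + \gamma m n} \cdot 2^{-2 \gamma m n - \gamma m n \cdot \tfrac{2}{d}(-1+\epsilon)},
\end{equation*}
which, added to the upper bound on $\int |\nabla u_{n,k}|^2$ from (i), is exactly the stated inequality. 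The only genuinely delicate point is the $H^1$-regularity in Step 1; everything else is a routine change of variables once the estimates of Lemma \ref{le:fcont} are invoked.
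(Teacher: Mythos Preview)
Your proposal is correct and follows essentially the same approach as the paper: restricting to the central sub-cube and invoking Lemma~\ref{le:fcont}(iii) for the $L^2$ lower bound, using Lemma~\ref{le:fcont}(iv) for the gradient upper bound, and then bounding $|V|$ pointwise from below on $\operatorname{supp} u_{n,k}$ via Lemma~\ref{le:fcont}(iv) to obtain (ii). Your Step~1 is in fact more detailed than the paper's treatment, which simply asserts $u_{n,k}\in H^1(\Omega)$ by direct computation; your observation that $g_j$ vanishes on $\partial Q(n,k)$ for $j\ge n$ (forcing $f=f_{n-1}\le a_{n,k}$ there) is precisely the mechanism that makes this work.
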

\begin{proof}
{\bf Proof of (i). }
A direct computation shows that $u_{n,k} \in H^1(\Omega)$.
Next, we compute
\begin{align*}
    \int_\Omega |u_{n,k}|^2& =  \int_{Q(n,k)}\dr{x'} \int_0^{f(x')-a_{n,k}} \dr{s} |\sin\left(2^{\gamma m n} s\right)|^2 \\
    &\geq \int_{\left\{x' \in Q(n,k) \bigm| 2^{m n} x' - k \in \left[1/4,3/4\right]^{d-1}\right\}} \dr{x'} \int_0^{\frac{1}{8} 2^{- \gamma m n}}\dr{s} |\sin(2^{\gamma m n} s)|^2 \\
    &= 2^{- (d-1) m n} \cdot 2^{- (d-1)} \int_0^{\frac{1}{8}}\dr{t}|\sin(t)|^2 2^{- \gamma m n} =  b_2 2^{-(d-1) m n} \cdot 2^{- \gamma m n}  ,
\end{align*}
where we used Lemma \ref{le:fcont}(iii) in the third step. 
Moreover, by Lemma \ref{le:fcont} (iv),
\begin{align*}
    &\qquad \int_\Omega |\nabla {u_{n,k}}|^2 = \int_{Q(n,k)}\dr{x'} \int_0^{f(x')-a_{n,k}}\dr{s} 2^{2\gamma m n} |\cos\left(2^{\gamma m n} s\right)|^2 \\
    &\leq \int_{Q(n,k)}\dr{x'}\int_0^{ 2^{- \gamma m n}} \dr{s} 2^{2\gamma m n} |\cos \left(2^{\gamma m n} s\right)|^2 = b_\nabla 2^{- (d-1) m n} \cdot 2^{\gamma m n},
\end{align*}
where we used the change of variables $t = 2^{\gamma m n} s$. \\ \\
{\bf Proof of (ii). }
Let $\lambda > 0$. For all $x \in \Omega(n,k)$, we have $    x_d \in \left(f_{n-1}(x'), f(x')\right)$, so by $0 \leq g_j \leq \tfrac{1}{2}$ for all $j \in \N_0$, we get as in the proof of Lemma \ref{le:fcont} (iv)
\begin{equation}\label{eq:fxdest}
     f(x') - x_d \leq f(x') - f_{n-1}(x') = \sum_{j = n}^\infty 2^{- \gamma m j} g_j(x')\le 2^{- \gamma m n}.
\end{equation}
For all $x\in \supp (V |u_{n,k}|^2) \subset \Omega_{n,k}$, we have by \eqref{eq:fxdest} and $\epsilon < 1$
\begin{equation}
        |V(x)| = b_V \left(f(x') - x_d\right)^{\frac{2}{d} (-1 + \epsilon)} \geq b_V  2^{- \gamma m n\frac{2}{d} (-1 + \epsilon)} .
\end{equation}
Using (ii) and (iii), we get by \eqref{eq:b2Vnabla}
\begin{align*}
    &\int_\Omega |\nabla u_{n,k}|^2 + \int_\Omega \lambda V |u_{n,k}|^2 \leq b_\nabla 2^{- (d-1) m n} \cdot 2^{\gamma m n} - \lambda b_V 2^{- \gamma m n \frac{2}{d} (-1 + \epsilon)}b_2 2^{-(d-1) m n} 2^{-  \gamma m n}\\
    &= b_\nabla 2^{-(d-1) m n} \cdot 2^{\gamma m n} \left(1 - 2\lambda 2^{- \gamma m n \frac{2}{d} (-1 + \epsilon)} \cdot 2^{- 2 \gamma m n}\right). 
\end{align*}
\end{proof}

\begin{remark}\label{re:selfadjointev}
Lemma \ref{le:gradunk}(ii) will be the starting point for the example that satisfies \eqref{eq:thexeq} we are looking for in this subsection. If we choose for $n \in \N_0$
\begin{equation}\label{eq:lambdachoice}
    \lambda := 2^{2\gamma m n} \cdot 2^{\gamma m n \tfrac{2}{d} \left(-1 + \epsilon \right)}
\end{equation}
in Lemma \ref{le:gradunk}(ii), then for all $k \in K_n$, we have 
\begin{align*}
    \int_\Omega |\nabla u_{n,k}|^2 + \int_\Omega \lambda V |u_{n,k}|^2 \leq b_\nabla 2^{-(d-1) m n}\cdot 2^{\gamma m n}\left(1- 2 \lambda 2^{- \gamma m n \tfrac{2}{d}(-1 + \epsilon)} \cdot 2^{-2 \gamma m n}\right) <0
\end{align*}

Now suppose $- \Delta^N_\Omega + \lambda V$ was a self-adjoint operator with quadratic form domain $H^1(\Omega)$. In fact, this will be shown under suitable assumptions in Lemma \ref{le:selfadjointev}. Then, we deduce by the min-max principle \cite[Theorem 12.1, version 3]{liebloss} and since the $\left\{u_{n,k}\right\}_{k \in K_n}$ have disjoint interior of their support that
\begin{equation}
    \evv{\Omega}{\lambda V} \geq |K_n| = 2^{(d-1) m n} .
\end{equation}
If $\epsilon = (d-1)\left(1/\gamma-1\right)$, then $|K_n| =\lambda^{\frac d 2}$. 
But we can apply Lemma \ref{le:gradunk} with $0<\epsilon < (d-1) (1/\gamma-1)<1$ and $\lambda$ as in \eqref{eq:lambdachoice}, then we get  
\begin{equation}
   \lambda^{-\frac{d}{2}} \evv{\Omega}{\lambda V} \geq \lambda^{-\frac{d}{2}} |K_n|  = \lambda^{-\frac{d}{2}} 2^{(d-1) m n} \rightarrow \infty \mathrm{\ as\ } n \rightarrow \infty. 
\end{equation}
Since $\lambda \rightarrow \infty$ as $n \rightarrow \infty$, we have shown \eqref{eq:thexeq}. 
\end{remark}
\begin{lemma}[Self-adjointness of $- \Delta^N_\Omega + \lambda V$ and $\no{V}{\tilde{p}, \beta}=\infty$]\label{le:selfadjointev}
Let $\gamma \in \left(\tfrac{d-1}{d}, 1\right)$.
\begin{enumerate}[(i)]
\item 
Then there exists $0 < \epsilon < (d-1)\left(1/\gamma- 1\right)$ such that $V \in L^{p^*}(\Omega) \subset L^{\frac{d}{2}}(\Omega)$ and for every $\lambda > 0$ the operator $    - \Delta^N_\Omega + \lambda V$
is bounded from below, has finitely many negative eigenvalues and it is self-adjoint with quadratic form domain $H^1 (\Omega)$. 
\item
For every $0 < \epsilon < (d-1)\left(1/\gamma- 1\right)$, we have $\no{V}{\tilde{p}, \beta}=\infty$.
\end{enumerate}
\end{lemma}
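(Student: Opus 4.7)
The plan is to handle parts (i) and (ii) separately: for (i) I combine a direct $L^{p^*}$-integrability check with a KLMN-based argument for self-adjointness, while (ii) reduces to a single algebraic inequality.

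For (i), the first step is identifying the range of admissible $\epsilon$. Writing the integral in slab form,
\[
\int_\Omega |V|^{p^*} dx \sim \int_{Q^{(d-1)}} dx' \int_0^{f(x')} (f(x') - x_d)^{\frac{2p^*}{d}(-1+\epsilon)} dx_d,
\]
the inner integral converges iff $\epsilon > 1 - \frac{d}{2p^*} = \frac{(d-1)(1-\gamma)}{d-1+\gamma}$. Since $d-1+\gamma > \gamma$, this lower bound lies strictly below $(d-1)(1-\gamma)/\gamma$, so the interval of admissible $\epsilon$ is nonempty and any $\epsilon$ in it gives $V \in L^{p^*}(\Omega) \subset L^{d/2}(\Omega)$. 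For self-adjointness with form domain $H^1(\Omega)$, I apply the KLMN theorem via a global Sobolev embedding $H^1(\Omega) \hookrightarrow L^{q^*}(\Omega)$ with $1/p^* + 2/q^* = 1$. This embedding is assembled from Corollary \ref{co:pssmallM} applied to a finite oscillatory cover of the top H\"older part (here $f$ is fixed, so the ratio $M$ may be taken bounded), the classical Sobolev embedding on the Lipschitz bottom part $(-2,2)^{d-1}\times(-2,0)$, and a partition of unity between the two. Combining H\"older's inequality with the truncations $V_n := V \vee (-n)$, which converge to $V$ in $L^{p^*}$ by dominated convergence, yields that $|V|$ is $(-\Delta_\Omega^N)$-form bounded with relative bound zero, so KLMN gives self-adjointness and lower semi-boundedness. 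Since $\gamma > (d-1)/d$ gives the compact embedding $H^1(\Omega) \hookrightarrow\hookrightarrow L^2(\Omega)$, the Neumann Laplacian has compact resolvent; a form-bounded perturbation of relative bound $<1$ preserves this via the positive operator comparison $(-\Delta_\Omega^N + \lambda V + C)^{-1} \leq 2(-\Delta_\Omega^N + 1)^{-1}$ after a suitable shift $C$, so the spectrum of $-\Delta_\Omega^N + \lambda V$ is purely discrete and its negative eigenvalues are finitely many.

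For (ii), I restrict the seminorm to the top slab of $\Omega$, where $h_{x,\min}$ is comparable to $f(x') - x_d$:
\[
|V|_{\tilde{p},\beta}^{\tilde{p}} \gtrsim \int_{Q^{(d-1)}} dx' \int_0^{f(x')} (f(x') - x_d)^{-\beta - Y^2(1-\epsilon)} dx_d,
\]
where $Y := \frac{1}{d}\bigl(\frac{d-1}{\gamma}+1\bigr) \in (1, 1+1/d]$ and $\tilde{p} \cdot \frac{2}{d} = Y^2$. The inner integral is infinite precisely when $Y^2(1-\epsilon) \geq 1 - \beta$. Substituting $\beta = \frac{d^2 Y(Y^2-1)}{d+1}$ and the hypothesis $\epsilon < d(Y-1) = (d-1)(1/\gamma-1)$, this reduces after clearing denominators to the polynomial inequality
\[
g(Y) := -dY^3 + (d+1)^2 Y^2 - d^2 Y - (d+1) > 0 \qquad \text{on } (1, 1+1/d].
\]
I verify this via the factorization $g(Y) = (Y-1)\bigl[-dY^2 + (d^2+d+1)Y + (d+1)\bigr]$: the first factor is positive on $(1, 1+1/d]$, and the bracketed downward parabola equals $d^2+d+2 > 0$ at $Y=1$ and $(d+1)^2 > 0$ at $Y = 1+1/d$; since a downward parabola attains its minimum on a finite interval at one of the endpoints, it is positive throughout. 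This yields $\|V\|_{\tilde{p},\beta} = \infty$.

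The main obstacle is the self-adjointness step of (i). The paper's Lemma \ref{le:selfadj} is unavailable precisely because, by (ii), $\|V\|_{\tilde{p},\beta} = \infty$, so I cannot invoke the Cwikel-Lieb-Rozenblum bound of Theorem \ref{th:weightednorm}. The essential ingredient is instead the global Sobolev embedding $H^1(\Omega) \hookrightarrow L^{q^*}(\Omega)$ on the specific H\"older domain $\Omega$; once this is in place, the relative-bound-zero form-boundedness of $V$, KLMN, and the compact-resolvent argument are standard. Assembling the embedding requires a partition of unity respecting the graph $\{x_d = f(x')\}$, but given the explicit structure of $\Omega$ from Definition \ref{def:omexdef} this is accessible directly from Corollary \ref{co:pssmallM}.
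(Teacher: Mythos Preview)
Your argument for (ii) is correct and follows the same reduction as the paper to the inequality $\tilde p\,\tfrac{2}{d}(-1+\epsilon)-\beta\le -1$; you verify the resulting polynomial inequality by factoring $g(Y)=(Y-1)\bigl[-dY^2+(d^2+d+1)Y+(d+1)\bigr]$, whereas the paper argues via monotonicity of a cubic in $\mu=dY$. Both are fine.

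For (i) you take a genuinely different route. The paper does not attempt a global Sobolev embedding; instead it exploits the explicit shape of $V$ to show $\lVert V\rVert_{p^*,D_x(\delta)}^{p^*}\lesssim M^{-(d-1)}\delta^{\,d+p^*\tfrac{2}{d}(-1+\epsilon)}\to 0$ uniformly in $x$ as $\delta\to 0$, fixes one small $\delta$ so that \eqref{eq:evVintcondMV} holds on every $D_x(\delta)$, and then reruns the localized machinery of Lemmas~\ref{le:selfadj} and~\ref{le:evkv} on the finite cover produced by Lemma~\ref{le:cover}. Your KLMN route is cleaner and would apply to any $V\in L^{p^*}(\Omega)$, not only this one; the paper's argument is tailored to this $V$ but stays entirely within the tools already built.

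The part of your sketch that is thin is the global embedding $H^1(\Omega)\hookrightarrow L^{q^*}(\Omega)$. Corollary~\ref{co:pssmallM} is stated only for $\delta\in(0,1)$ and for boundary data with range in $[\delta/4,\delta)$, so ``a finite oscillatory cover of the top H\"older part with bounded $M$'' is not immediate: near $\partial Q^{(d-1)}$ one has $f\to 0$, and the natural fix---extend $f$ by zero outside $Q^{(d-1)}$, shift the base down into the Lipschitz box, and treat the upper region as one large oscillatory domain---forces $\delta>1$ and hence a rescaling argument you do not supply. The embedding does hold (it is Labutin's theorem for H\"older domains, referenced in the paper), and your compact-embedding input $H^1(\Omega)\hookrightarrow\hookrightarrow L^2(\Omega)$ for $\gamma>(d-1)/d$ is likewise true (it underlies the Netrusov--Safarov Weyl law), but neither is proved in the paper. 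So your approach is valid but leans on external facts, while the paper's direct localized argument is the self-contained choice here.
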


\begin{remark}
    Lemma \ref{le:selfadjointev}(ii) shows that this example does not contradict Theorem \ref{th:main}.
\end{remark}

\begin{proof}

\textbf{Proof of (i). }
Let us first find $0 < \epsilon < (d-1)\left(1/\gamma- 1\right)$ such that $V \in L^{p^*}(\Omega) \subset L^{\frac{d}{2}}(\Omega)$. Note that $L^{p^*}(\Omega) \subset L^{\frac{d}{2}}(\Omega)$ since $\Omega$ is bounded and $p^* > \tfrac{d}{2}$. By \eqref{eq:Vdef},
the definition of $V$ 
and the boundedness of $\Omega$, it suffices to find $0 < \epsilon < (d-1)\left(1/\gamma- 1\right)$ such that
\begin{equation} \label{eq:eps-check}
   p^* \frac{2}{d}\left(-1 + \epsilon\right) > - 1 .
\end{equation}
By a continuity argument, it suffices to show that \eqref{eq:eps-check} holds for $\eps=(d-1)\left(1/\gamma- 1\right)$, namely $\mu(\mu - (d + 1))>-d$ with $\mu := \tfrac{d-1}{\gamma} + 1\in (d,d+1)$, which is true. 

\bigskip
Fix $\epsilon = \epsilon(d,\gamma)$ as above and let $\lambda > 0$ be arbitrary. In order to show the self-adjointness of the operator $- \Delta^N_\Omega + \lambda V$, we show that the corresponding quadratic form on $H^1(\Omega)$ is well-defined, bounded from below and that it has the $H^1(\Omega)$-norm as its quadratic form norm, hence it is closed. The claim then follows from Friedrich's theorem. \\

For every $x \in \Omega$ with $x_d \geq 0$ and $\delta > 0$, let the oscillatory domain $D_x(\delta)$ and $a_x = a$ be defined as in Definition \ref{de:smalldomain}. Let $M :=  \delta /a$
as in the proof of Lemma \ref{le:bdryev}(ii). Then
\begin{align}\label{eq:Vmdeltaex}
\begin{split}
    \nor{V}{p^*, D_x(\delta)}{p^*} &\leq a^{d-1}\int_0^\delta\dr{t}\left|b_V t^{\frac{2}{d}(-1 + \epsilon)}\right|^{p^*} = M^{- (d-1)}\delta^{d-1} |b_V|^{p^*}\int_0^\delta \dr{t} t^{p^* \frac{2}{d}(-1 + \epsilon)} \\
    &=  C(d, \gamma) M^{-(d-1)} \delta^{d + p^* \frac{2}{d} (-1 + \epsilon)} \to 0 \quad \mathrm{\ as\ }\delta \rightarrow 0.  
    \end{split}
\end{align}
In the third step we used that $p^*$, $b_V$, $\epsilon$ only depend on $d$ and $\gamma$. We also used \eqref{eq:eps-check} to ensure that the integral is finite.

\bigskip
Let $K=K(d,\ga)$ be the constant in the covering theorem for oscillatory domains (Lemma \ref{le:cover})). By \eqref{eq:Vmdeltaex}, we can choose $\delta>0$ small enough such that
\begin{equation}\label{eq:4lambdakpstar}
    (4\lambda K)^{p^*} C(d, \gamma) \delta^{d + p^* \frac{2}{d} (-1 + \epsilon)} < \left(\frac{1}{C_{PS}}\right)^{p^*} .
\end{equation}
Note that $\delta$ only depends on $d$, $\gamma$, $\lambda$ but \emph{not} on $x$. For each $x \in \Omega$ with $x_d \geq 0$ let $D_x := D_x(\delta)$ with $\delta$ defined in \eqref{eq:4lambdakpstar}. By \eqref{eq:4lambdakpstar} and \eqref{eq:Vmdeltaex},
\begin{equation}
    \nor{4\lambda K V}{p^*, D_x}{p^*} M^{d-1} < \left(\frac{1}{C_{PS}}\right)^{\ps} ,
\end{equation}
so for every $v \in H^1(\Omega)$ with $\int_{D_x} v = 0$, we have by \eqref{eq:evVintcondMV}
\begin{equation} \label{eq:int4klambdasquaredv}
    \int_{D_x} |\nabla v|^2 + \int_{D_x} 4 K \lambda V |v|^2 \geq 0  .
\end{equation}
Let $\mathcal{F}_1, \dots, \mathcal{F}_K$ be the families of oscillatory domains we get from applying the covering theorem for oscillatory domains (Lemma \ref{le:cover}) to $\left\{D_x\right\}_{x \in \Omega \mathrm{\ with\ } x_d \geq 0}$. Note that
\begin{equation}\label{eq:DcoversuppV}
    \bigcup_{k = 1}^K \bigcup_{D \in \mathcal{F}_k} D \supset \supp(V) .
\end{equation}
Using \eqref{eq:int4klambdasquaredv} and \eqref{eq:DcoversuppV}, we obtain in exactly the same way as in \eqref{eq:quadrformest} in the proof of Lemma \ref{le:selfadj}
\begin{align*}
    -\Delta^N_\Omega + \lambda V &\geq \frac{1}{2} (- \Delta^N_\Omega) + \frac{1}{2K} \sum_{k = 1}^K \sum_{D \in \mathcal{F}_k} (- \Delta^N_D + 2 \lambda K V 1_D )\\
    &\geq \frac{1}{2} (- \Delta^N_\Omega) - 2 \lambda \left(\sum_{k = 1}^K \sum_{D \in \mathcal{F}_k}  \frac{1}{|D|}\int_D |V|\right)1_\om .
\end{align*}
For each $k \in \left\{1, \dots, K\right\}$ and $D \in \mathcal{F}_k$, we have
\begin{equation}
    |D| \geq \frac{1}{4} \delta a^{d-1} \geq \frac{1}{4} \delta \left(c_0(c_1\delta)^{\frac{1}{\gamma}}\right)^{d-1} = \frac{1}{4} c_0^{d-1} c_1^{\dg} \delta^{1 + \dg} .
\end{equation}
Since the domains $D \in \mathcal{F}_k$ for each $k \in \left\{1, \dots, K\right\}$ are all disjoint and $\Omega$ is bounded, we deduce that $\sum_{k = 1}^K |\mathcal{F}_k| < \infty$.
Thus, since $V \in L^{p^*}(\Omega) \subset L^1(\Omega)$, we get
\begin{equation}
    \sum_{k = 1}^K \sum_{D \in \mathcal{F}_k} \int_D |V| \frac{1}{|D|} < \infty ,
\end{equation}
so $- \Delta^N_\Omega + \lambda V$ is a well-defined quadratic form on $H^1 (\Omega)$ that is bounded from below and has the $H^1(\Omega)$-norm as its quadratic form norm. We can deduce as in the proof of Lemma \ref{le:evkv} that $- \Delta^N_\Omega + \lambda V$ has finitely many negative eigenvalues. \\

\textbf{Proof of (ii). } 
By the definition of $V$ and of $\no{V}{\tilde{p}, \beta}$, it suffices to consider the case $\epsilon := (d-1)\left(1/\gamma- 1\right)$. Recall that
\begin{equation}
    \supp(V) = \left\{x = (x', x_d) \in \R^{d-1} \times \R \,\middle\vert\, x_d \geq 0\right\} \cap \Omega .
\end{equation}
For all $x \in \Omega$ with $x_d \geq 0$ we use the shorthand notation
\begin{equation}
    h_x := f(x') - x_d  .
\end{equation}
By the definition of $\no{V}{\tilde{p}, \beta}$,  
it suffices to show $|V|^{\tilde{p}}_{\tilde{p},\beta} = \infty$.
Recall that
\begin{equation}
    |V|^{\tilde{p}}_{\tilde{p},\beta} = \int_\Omega \dr{x} |V(x)|^{\tilde{p}} h_x^{-\beta}= \int_\Omega |b_V|^{\tilde{p}} h_x^{\tilde{p} \frac{2}{d} (-1 + \epsilon)} h_x^{- \beta}.
\end{equation}
Now, recall that by \eqref{eq:gnv}, we have
\begin{equation}
    f(x') =\sum_{j=0}^\infty 2^{-{\gamma}jm}g_j(x')\ge g_0(x')\geq \frac{1}{4}
\end{equation}
for all $x' \in \left[1/4,3/4\right]^{d-1}$, so
\begin{align*}
    |V|^{\tilde{p}}_{\tilde{p}, \beta} &= |b_V|^{\tilde{p}} \int_\Omega h_x^{\tilde{p} \frac{2}{d} (-1 + \epsilon) - \beta} = |b_V|^{\tilde{p}} \int_{Q^{(d-1)}} \dr{x'} \int_0^{f(x')} \dr{x_d} \left(f(x') - x_d\right)^{\tilde{p} \frac{2}{d} (-1 + \epsilon) -\beta} \\
    &\geq |b_V|^{\tilde{p}} \int_{\left[\tfrac{1}{4}, \tfrac{3}{4}\right]^{d-1}} \dr{x'} \int_0^{f(x')} \dr{t} t^{\tilde{p} \frac{2}{d} (-1 + \epsilon) - \beta}\geq |b_V|^{\tilde{p}} 2^{-(d-1)} \int_0^{\frac{1}{4}} \dr{t} t^{\tilde{p} \frac{2}{d} (-1 + \epsilon) - \beta} .
\end{align*}
Hence, if we can show that 
\begin{equation}\label{eq:pfrac-beta}
    \tilde{p} \frac{2}{d} (-1 + \epsilon) - \beta \leq -1 ,
\end{equation}
then we have $|V|^{\tilde{p}}_{\tilde{p}, \beta} = \infty$, so $\no{V}{\tilde{p}, \beta} = \infty$. Recalling \eqref{eq:ptdef}, the bound \eqref{eq:pfrac-beta} is equivalent to

\begin{equation} \label{eq:f-finall}
   f(\mu)=\frac{1}{d^2} \mu^2 \left(\mu -(d + 1)\right) - \frac{1}{d + 1} \mu \left[\frac{1}{d} \mu^2 - d\right] \le -1
\end{equation}
with $\mu := \dgo \in (d,d+1)$. Note that 
$$
 f'(\mu)= \frac{3 \mu^2}{d^2(d + 1)} - \frac{2 (d + 1)}{d^2} \mu + \frac{d}{d + 1}  
$$
is a convex function with $f'(d)<0$ and $f'(d+1)<0$. Therefore, $f$ is monotone decreasing in $(d,d+1)$, and hence \eqref{eq:f-finall} follows from the fact that $f(d)=-1$. The proof of Lemma \ref{le:selfadjointev} is complete. 
\end{proof}
\begin{proof}[Proof of Theorem \ref{th:example}]
    Remark \ref{re:selfadjointev} combined with Lemma \ref{le:selfadjointev}(i) show Theorem \ref{th:example}.
\end{proof}

\printbibliography


\end{document}